\pdfoutput=1
\documentclass[english,11pt]{article}

\usepackage[utf8]{inputenc}
\usepackage{amsmath}
\usepackage{amssymb}
\usepackage{amsthm}
\usepackage{fullpage}
\usepackage{xcolor,mathtools}

\usepackage[lined,boxed,ruled,norelsize,algo2e]{algorithm2e}

\usepackage{thmtools}
\usepackage{thm-restate}

\usepackage{hyperref}
\hypersetup{
	colorlinks,
	allcolors=blue
}
\usepackage{cleveref}

\newtheorem{theorem}{Theorem}[section]
\newtheorem{lemma}[theorem]{Lemma}

\newtheorem{definition}[theorem]{Definition}
\newtheorem{claim}[theorem]{Claim}


\newcommand{\Tr}{\mathrm{tr}}

\newcommand{\mlap}{\mathcal{L}}
\newcommand{\otilde}{\tilde{O}}
\newcommand{\vol}{\mathrm{Vol}}
\newcommand{\defeq}{\coloneqq} 

\newcommand{\R}{\mathbb{R}}
\newcommand{\Z}{\mathbb{Z}}

\newcommand{\ma}{\textbf{A}}
\newcommand{\mb}{\textbf{B}}
\newcommand{\mm}{\textbf{M}}
\newcommand{\mvar}[1]{\textbf{#1}}

\newcommand{\eye}{\textbf{I}}
\newcommand{\loglog}{\mathrm{log log}}
\newcommand{\norm}[1]{\left\lVert#1\right\rVert}
\newcommand{\normInline}[1]{\lVert#1\rVert}

\newcommand{\vones}{\vec{1}}
\newcommand{\vzero}{\vec{0}}


\newcommand{\Reff}{\mathcal{R}^{eff}}

\newcommand{\poly}{\mathrm{poly}}
\newcommand{\dmin}{d_{\min}}
\newcommand{\dmax}{d_{\max}}

\newcommand{\cutset}{\partial}
\newcommand{\condvert}{\phi_{\mathrm{vert}}}
\newcommand{\condedge}{\phi_{\mathrm{edge}}}
\newcommand{\Vol}{\mathrm{Vol}}

\newcommand{\indicdiff}{\delta}
\newcommand{\indicvec}{1}
\newcommand{\effres}{\mathcal{R}^{\mathrm{eff}}}

\newcommand{\davg}{d_{\mathrm{avg}}}
\newcommand{\dratio}{d_{\mathrm{ratio}}}

\newcommand{\wmin}{w_{\min}}
\newcommand{\wmax}{w_{\max}}
\newcommand{\wratio}{w_{\mathrm{ratio}}}

\renewcommand{\ln}{\log}

\newcommand{\pseudo}{\dagger}
\newcommand{\E}{\mathbb{E}}

\newcommand{\ApproxAGD}{\mathsf{PreconditionedNoisyAGD}}

\newcommand{\im}{\mathrm{im}}

\begin{document}
	
\pagenumbering{gobble}
\title{Ultrasparse Ultrasparsifiers and Faster Laplacian System Solvers}
\author{Arun Jambulapati\\
	Stanford University\\
	\texttt{jmblpati@stanford.edu}\and 
	Aaron Sidford\\
	Stanford University \\
	\texttt{sidford@stanford.edu}
}
\maketitle

\begin{abstract}
In this paper we provide an $O(m \loglog^{O(1)} n \log(1/\epsilon))$-expected time algorithm for solving Laplacian systems on $n$-node $m$-edge graphs, improving improving upon the previous best expected runtime of $O(m \sqrt{\log n} \loglog^{O(1)} n \log(1/\epsilon))$ achieved by (Cohen, Kyng, Miller, Pachocki, Peng, Rao, Xu 2014). To obtain this result we provide efficient constructions of $\ell_p$-stretch graph approximations with improved stretch and sparsity bounds. Additionally, as motivation for this work, we show that for every set of vectors in $\mathbb{R}^d$ (not just those induced by graphs) and all $k > 1$ there exist ultrasparsifiers with $d-1 + O(d/\sqrt{k})$ re-weighted vectors of relative condition number at most $k$. For small $k$, this improves upon the previous best known relative condition number of $\tilde{O}(\sqrt{k \log d})$, which is only known for the graph case.

\end{abstract}
\pagebreak{}

\pagenumbering{arabic}
\setcounter{page}{1}

\maketitle

\tableofcontents
\newpage

\section{Introduction}

From the first proof of a nearly linear time Laplacian system solver \cite{SpielmanT04}, to the current state-of-the-art running time for Laplacian system solving \cite{CohenKMPPRX14}, to advances in almost linear time approximate maximum flow \cite{Sherman13, KelnerLOS14, Peng16} ultrasparsifiers have played a key role in the design of efficient algorithms. In \cite{SpielmanT04} ultrasparsifiers were used in the computation of sequences of graph preconditioners that enabled nearly linear time Laplacian system solvers. This initiated a long line of work on faster \cite{KMP11,KMP14,CohenKMPPRX14}, simpler \cite{LS13, KOSZ, KS16}, and more parallel \cite{BGKMPT14, PS14,KyngLPSS16} Laplacian system solvers, many of which leverage ultrasparsifiers or related sparse graph approximation, e.g. low stretch spanning trees. These results in turn fueled advances of graph decompositions for a range of problems including approximate maximum flow \cite{Sherman13, KelnerLOS14, Peng16}, directed Laplacian solving \cite{CKPPRSV17, CKPPSV16}, and transshipment \cite{Li20}.

The current fastest Laplacian solver \cite{CohenKMPPRX14} computes expected $\epsilon$-approximate solutions to Laplacians on $n$-node, $m$-edge graphs in time\footnote{Here and throughout this paper, the notation $\otilde(\cdot)$ hides $\loglog$ factors.} $\otilde(m \sqrt{\log n} \log(1/\epsilon))$. These solvers start from the graphs associated with Laplacians and compute randomized tree-based approximations with bounded expected $\ell_p$-stretch: they use these to construct a sequence of preconditioners that efficiently decrease the error in expectation. The ultimate runtime achieved by this approach, $\otilde(m \sqrt{\log n} \log(1/\epsilon))$, matches that of the runtime one would achieve if the best known ultrasparsfiers for graphs, due to \cite{KMST}, could be constructed in linear time and then preconditioning approaches related to \cite{KMP11} were applied. Though it is known that preconditioners exist that would enable an $\otilde(m) \log(1/\epsilon)$ time solver, the current best construction of such preconditioners takes $O(m \poly(\log(n)))$ time due to the need to compute linear sized sparsifiers \cite{BSS} of Schur complements. 

Consequently, the best known bounds of ultrasparsifiers for graphs due to \cite{KMST} constitute a fundamental barrier towards designing faster Laplacian system solvers. \cite{KMST} showed that arbitrary $n$-vertex graphs possess $\otilde(k \log n)$-spectral approximations with $n + \frac{n}{k}$ edges, and their proof is based on the existence of \emph{low-stretch spanning trees} with average stretch $\otilde(\log n)$. In turn \cite{CohenKMPPRX14} achieves their running times by leveraging that $\ell_p^p$ stretch variants of these trees can be computed in $\otilde(m)$ time. These methods all pay this $\log n$ factor due to the best known bounds combinatorial techniques for ball-growing and graph decomposition. It is known that the stretch bound of  $\otilde(\log n)$ is optimal up to an iterated logarithmic factor in a wide variety of graphs, such as the $n$-vertex grid or hypercube. In other words, if using just distance based combinatorial graph decomposition, one must pay a factor of $\Theta(\log n)$ in the worst case. This factor then appears in the best known ultrasparsifier bounds and as a $\sqrt{\log n}$ factor in the current best Laplacian solver runtimes, due to the nature of iterative methods for solving Laplacian systems.

The main conceptual contribution of this paper is that this barrier can be broken and this $\Theta(\log n)$ factor can, perhaps surprisingly, be avoided. As a quick, broad proof-of-concept, in  \Cref{sec:ultrasparsifiers} we show that \cite{KMST} is not optimal in all parameter regimes. For arbitrary matrices and small target distortion we show that there exist sparser ultrasparsifiers that do not pay this $\Theta(\log n)$ factor. Interestingly, we give a simple \cite{BSS} based argument that arbitrary matrices have low-stretch subgraphs and then we apply the arguments of  \cite{KMST}  to get our bounds.

Inspired by this proof of concept, the main technical contribution of this paper is to show that for the specific goal of constructing low-distortion spectral subgraphs, i.e. those that would suffice for Laplacian system solving, better bounds can be achieved and the $\Theta(\log n)$-factor can be avoided. We carefully combine both spectral methods and combinatorial decomposition techniques for this purpose. In particular, we show that traditional ball growing techniques can be augmented or patched by careful use of spectral sparsifiers to achieve lower distortion graph approximations. 

Interestingly, our procedure for augmenting a low-diameter decomposition requires us to efficiently compute stronger notions of spanners in sufficiently dense graphs. We provide an efficient procedure for computing a type of graph approximation related to fault-tolerant spanners, which we call \emph{path spanners}. In turn, to compute path sparsifiers we show that there are many short vertex disjoint paths in a dense near-regular expander. This proofs builds upon the seminal work of \cite{KleinbergR96}  which showed that every dense expander has many short edge-disjoint paths. We leverage this fact in an algorithm which combines a near-linear time expander decomposition procedure of \cite{SW19} with a new routine for approximately finding regular dense subgraphs. The resulting path sparsification algorithm serves as a type of vertex-based sparsification in our low-distortion subgraph computation algorithm. It is an interesting open problem if an alternative sparsification procedure can be used instead, however we think the tools developed for obtaining path sparsifiers may be of intrinsic interest.

Ultimately, we show that careful application of this routine for constructing low distortion spectral subgraphs yields an $\otilde(m \log(1/\epsilon))$-time algorithm for computing expected $\epsilon$-approximate solutions to Laplacian systems. This solver leverages heavily recursive preconditioning machinery of \cite{CKPPRSV17} and our efficiently computable spectral subgraphs. To simplify and clarify the derivation and analysis of this recursive solver we provide an analysis of a stochastic preconditioned variant of accelerated gradient descent (AGD) \cite{Nesterov1983}.

We hope that this work may serve as the basis for further improvements in ultrasparsification and graph decomposition. Given the myriad of applications of these techniques and the simplicity and generality of our approach for overcoming the $\Theta(\log n)$-factor in previous combinatorial approaches we hope this work may find further applications.

\paragraph{Paper Organization} In the remainder of this introduction we provide  preliminaries and notation we use throughout the paper (\Cref{sec:prelim}), our main results (\Cref{sec:results}), our approach for achieving them (\Cref{sec:approach}), and provide a brief discussion of previous work (\Cref{sec:previous_work}). In \Cref{sec:spectral_distort_and_solve} we then give our main graph decomposition and use it to obtain low-distortion spectral subgraphs. The results of this section hinge on the efficient construction of a new combinatorial object known as \emph{path sparsifiers}, which we compute efficiently in \Cref{sec:path_sparsify}. In \Cref{sec:solver} we leverage our low-distortion spectral subgraph construction to obtain our Laplacian system solver results. Our existence proof for ultrasparsifiers is given briefly in \Cref{sec:ultrasparsifiers}, our AGD analysis is given in \Cref{sec:chebyshev}, and additional proofs are given in the other appendix sections.


\subsection{Preliminaries}
\label{sec:prelim}

Here we provide notation and basic mathematical facts we use throughout the paper.

\textbf{Graphs}: Throughout this paper we let $G = (V, E, w)$ denote an undirected graph on vertices $V$, with edges $E \subseteq V \times V$, with integer positive edge weights $w \in \Z^E_{> 0}$.  Though many graphs in this paper are undirected, we typically use $(a,b) \in E$ notation to refer to an edge where we suppose without loss of generality that a canonical orientation of each edge has been chosen. Often in this paper we consider unweighted graphs $G = (V, E)$ where implicitly $w \defeq \vones$. Unless stated otherwise (e.g. much of \Cref{sec:path_sparse}) we make no assumption about whether graphs are simple in this paper and often consider graphs with multi-edges and self-loops.

\textbf{Degrees and Weights}: For graph $G = (V, E, w)$ and $a \in V$ we let $\deg_G(a) \defeq \sum_{e \in E | a \in e} w_e$. Further, we let $\dmin(G) \defeq \min_{a \in V} \deg_G(a)$, $\dmax(G) \defeq \max_{a \in V} \deg_G(a)$, $\dratio(G) \defeq \dmax(G) / \dmin(G)$ and $\davg(G) \defeq \sum_{a \in V} \deg_G(a) / |V|$. For weighted graph $G = (V, E, w)$ we let $\wmin(G) \defeq \min_{e \in E} w_e$, $\wmax(G) \defeq \max_{e \in E}$, and $\wratio(G) \defeq \wmax(G) / \wmin(G)$.

\textbf{Volumes}:  For graph $G = (V,E,w)$ and $S \subseteq V$ we let $\vol_G(S) \defeq \sum_{a \in S} \deg_G(a)$
 where $w \in \R^E$ are the edge weights of the graph. Note that when we allow self-loops, if a vertex has self-loops of total weight $w$ this contributes $w$ to the degree of that vertex and the volume of any set it is in. We further define the boundary volume $\vol_G(\partial S) \defeq \sum_{(u,v) \in E(G), u \in S, v \notin S} w_{(u,v)}$. 

\textbf{Distances and Shortest Path Balls}: For graph $G = (V, E, w)$ and path $P \subseteq E$ between vertices $a$ and $b$ we let $\ell(P) \defeq \sum_{e \in P} w_e$ denote the length of the path and we let $d_G(a,b)$ denote the length of the shortest path between $a$ and $b$. Further, we let $B_G(v,r) \defeq \{ a \in V ~ | ~ d_G(v,a) \leq r \}$ denote the (shortest-path) ball of distance $r$ from $v$.

\textbf{Neighbors}: For graph $G$ and vertex set $S \subseteq V$ we let $N(S) \defeq \{a \in V ~ | ~ (b,a) \in E \text{ for some } b \in S  \}$ denote the \emph{neighbors of $S$}. Overloading notation we let $N(a) \defeq N(\{a\})$ for all $a \in V$. 

\textbf{Subgraphs and Contractions}: For graph $G = (V, E, w)$ and $S \subseteq V$ we let $G[S]$ denote the subgraph induced by $S$, i.e. the graph with vertices $S$, edges  $E \cap (S \times S)$, and edge weights the same as in $G$. We overload notation and, similarly, for $F \subseteq E$ we let $G[F]$ denote the subgraph of $G$ induced by edge set $F$, i.e. the graph with vertices $V$, edges $F$, and edge weights the same as in $G$. For $S \subseteq V$, we let $G \backslash S$ denote the graph by contracting all vertices in $S$ into a single supernode, while preserving multi-edges and possibly inducing self-loops. 

\textbf{Graph Matrices}: For weighted graph $G = (V, E, w)$ we let $\mlap_G \in \R^{V \times V}$ denote its Laplacian matrix where for all $a,b \in V$ we have $\mlap_{a,b} = - w_{\{a,b\}}$ if $\{a,b\} \in E$ and $\mlap_{a,a} = \deg_G(a)$. 

\textbf{Effective Resistances}: For graph $G = (V,E,w)$ and nodes $u,v$, we say the effective resistance between $u$ and $v$ is $\mathcal{R}_{G}^{eff}(u,v) = (\mathbf{e}_u - \mathbf{e}_v)^\top \mlap_G^\dagger (\mathbf{e}_u - \mathbf{e}_v)$ where $\mlap_G^\dagger$ denotes the Moore-Penrose pseudoinverse of $\mlap_G$. Throughout the paper, we make use of several standard facts about effective resistances stated in \Cref{sec:effres_facts}.  

\textbf{Solver}: We use the following notation for linear system solvers:

\begin{definition}[Approximate Solver]
	We call a randomized procedure an \emph{$\epsilon$-solver for PSD $\ma \in \R^{n \times n}$} for $\epsilon \in [0,1)$ if given arbitrary $b \in \R^n$ it outputs random $x \in \R^n$ with 
	\begin{equation}
	\label{eq:solve_requirement}
	\E \norm{x - \ma^\pseudo b}_{\ma}^2 \leq \epsilon \norm{b}_{\ma^\pseudo}^2
	\end{equation}
\end{definition}

\begin{definition}[Laplacian Solver] 
	We call a randomized procedure an \emph{$\epsilon$-Laplacian (system) solver} for graph  $G = (V,E,w)$ if it is an $\epsilon$-solver for $\mlap_G$, i.e. given arbitrary $b \in \R^n$ it outputs random $x \in \R^n$ with 
	$\E [\normInline{x - \mlap_G^\dagger}_{\mlap_G}^2] \leq \epsilon \normInline{\mlap_G^\dagger b}_{\mlap_G}^2$. 
	\label{def:solver}
\end{definition}

Error guarantees in the $\ma$ and $\mlap_G$ norm are standard to the literature; they corresponds to an $\epsilon$-multiplicative decrease in the function error on the objective $f(x) = (1/2) x^\top \ma x - b^\top x$ from initial point $\vzero$. However, that our solver error is defined with respect to the expected square norm of the matrices is less standard. By concavity of $\sqrt{\cdot}$ for appropriate choice of $\epsilon$ this guarantee is stronger than defining error in terms of just the norm: for all PSD $\ma \in \R^{n \times n}$ and vectors $x \in \R^n$ we have $[\E \norm{x}_{\ma}]^2 \leq \E \norm{x}_{\ma}^2$.

\textbf{Asymptotics and Runtimes}: Throughout we use $\otilde(\cdot)$ to hide $\poly(\log \log)$ factors in $n$, the number of vertices in the largest graph considered.

\textbf{Misc.} All logarithms in this paper are in base $e$ unless a base is explicitly specified.

\subsection{Our Results}
\label{sec:results}

Here we present the main results of our paper. First, as discussed in the introduction we provide new bounds on existence of ultrasparsifiers for arbitrary matrices. Our construction is based on the spectral-sparsification results of \cite{BSS}. We prove this existence result  briefly in \Cref{sec:ultrasparsifiers}:

\begin{restatable}[Ultrasparsifier Existence]{theorem}{bss}
\label{thm:ultrasparsifer_existence}
Let $v_1, \dots v_m \in \R^n$ and $\ma \defeq \sum_{i \in [m]} v_i v_i^\top$. For any integer $k \geq 2$,  there exists $S \subseteq [m]$ with $|S| = n + O\left(\frac{n}{\sqrt{k}}\right)$ and $w \in \R^m_{\geq 0}$ where
\[
\ma \preceq \sum_{i \in S} w_i v_i v_i^\top \preceq k \ma.
\]
\end{restatable}

This result when specialized to graphs immediately yields $ n + O\left(\frac{n}{\sqrt{k}}\right)$-edge subgraphs with relative condition number $k$, and is a proof of concept towards the main results of this paper. We obtain it by a two-stage construction: we first find an ultrasparse subset of vectors satisfying a certain ``on average" notion of spectral approximation, and then we correct this to a true ultrasparsifier with a procedure based on the spectral sparsification algorithm of \cite{BSS}. In the case of graphs, we give an improved guarantee for the first phase of our construction. We call the objects we compute low distortion spectral subgraphs, defined as follows.
\begin{definition}[$\kappa$-Distortion Spectral Subgraph]
	Given a weighted graph $G = (V, E, w)$ we call $H = (V, E_H, w_H)$ a \emph{$\kappa$-Distortion Spectral Subgraph} if $\mlap_H \preceq \mlap_G$ and 
	\[
	\sum_{e \in E} (w_e \effres_H(e)) = 
	\sum_{e \in E} \left(w_e \delta_e^\top \mlap_H^\dagger \delta_e\right) \leq \kappa ~.
	\]	
    If $H$ is a subgraph of $G$ in addition to the above, we call it a $\kappa$-distortion subgraph.
\end{definition}
To give context for this definition, observe that $H=G$ is an $n$-distortion spectral subgraph of $G$, and any strict subgraph of $G$ has spectral distortion strictly larger than $n$. In fact, we show something slightly stronger than what this definition encompasses. We show that there exist subgraphs with this guarantee that can be computed efficiently. Our main theorem for this construction (specialized for its application for Laplacian solvers) is as follows.

\begin{restatable}[Efficient Construction of Ultrasparse $\kappa$-Distortion Subgraphs]{theorem}{fastlpsubgraphs}
\label{thm:fast_lp_subgraphs}
Let $G = (V,E,w)$ be a polynomially-bounded weighted graph, and let $c \geq 1$ be any fixed constant. Algorithm~\ref{alg:akpw} equipped with Theorem~\ref{thm:path-sparse} runs in $O(m)$ time and returns a $\kappa$-distortion subgraph $H$ with $n + O\left(\frac{m}{\left(\log \log n\right)^c}\right)$ edges, for 
\[
\kappa = O\left(m \left(\log \log n\right)^{\sqrt{8c} + 1 + o(1)} \right)~.
\]
It also returns a vector $\mathbf{\tau} \in \R^E_{\geq 0}$ with $\norm{\tau}_1 \leq \kappa$ where for any $e \in G$, $\tau_e  \geq w_e \delta_e^\top \mlap_H^\dagger \delta_e$ is an overestimate of the leverage score of $e$ measured through $H$. 
\end{restatable}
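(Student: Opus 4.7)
The plan is to analyze Algorithm~\ref{alg:akpw}, which implements a recursive low-diameter decomposition in the spirit of AKPW but augmented at each level with the path-sparsification routine of Theorem~\ref{thm:path-sparse}. The conceptual point is that ordinary ball-growing introduces a $\Theta(\log n)$ stretch overhead, but by identifying dense near-regular subgraphs within each cluster and replacing them (vertex-wise) with path sparsifiers, the effective resistance between boundary vertices of a cluster can be controlled independently of its combinatorial diameter, which lets us use aggressive ball radii without paying the usual $\log n$ penalty.

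First, I would formalize the recursive structure. On input at level $i$ with target radius $r_i$, we run a randomized ball-growing decomposition to obtain clusters of shortest-path diameter $O(r_i \log(k \log n))$ and boundary-edge fraction $O(1/\log^{10} n)$. Within each cluster we separate the part handled by Theorem~\ref{thm:path-sparse} (the dense near-regular core, which is replaced by its path sparsifier) from the sparse remainder (which is recursed on at a larger radius $r_{i+1}$). The output $H$ is the union of inter-cluster boundary edges kept across all $L = O(\log(k \log n))$ levels, together with the path-sparsifier edges from each level.

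Next, I would carry out the $\ell_p$-stretch accounting. For each non-retained edge $e$, cut at level $i$ inside a cluster $C$, the effective resistance $\delta_e^\top \mlap_H^\dagger \delta_e$ is bounded by the product of the cluster diameter at that level and the path-sparsifier's effective-resistance guarantee between $e$'s endpoints within $C$. Summing $p$-th powers across edges and levels, balancing the per-level stretch against the number of levels exactly as in the AKPW/\cite{CohenKMPPRX14} geometric argument, should yield $\kappa = O(m (\log(k \log n))^{4p/(1-p)})$. The leverage-score overestimate $\tau_e$ is assembled edge-by-edge from these per-level upper bounds, so the bound $\norm{\tau}_p^p \leq \kappa$ follows directly from the same bookkeeping. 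The edge count is controlled by noting that each level passes at most an $O(1/\log^{10} n)$ fraction of its edges on to the next level and contributes only $\otilde(n)$ path-sparsifier edges, so a geometric sum yields $n - 1 + O(m/(k \log^{10} n))$ retained edges.

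For the runtime, I would use the expander-decomposition routine of \cite{SW19} together with a subroutine for extracting approximately regular dense subgraphs to invoke Theorem~\ref{thm:path-sparse} quickly; each level runs in near-linear time in its current graph size, and since the graph shrinks by a $(\log^{10} n)^{-1}$ factor per level, the geometric series collapses to $O(m (\log\log n)^{1/2})$. The main obstacle will be the joint tuning of the per-level ball radius, the density threshold that triggers path-sparsification, and the stretch parameter fed into Theorem~\ref{thm:path-sparse}; in particular, to avoid the $\log^{10} n$ sparsity budget being eaten by path-sparsifier slack, one needs the path sparsifier's effective-resistance bound to scale polylogarithmically in the cluster size rather than in $n$, which is precisely the strengthened guarantee that the dense-regular reduction buys us.
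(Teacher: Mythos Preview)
Your proposal has the right high-level shape (AKPW-style recursion with a path-sparsifier augmentation and a geometric $\ell_p$-stretch sum), but the mechanism by which path sparsification enters is wrong, and this is not a cosmetic issue: the approach you describe would not yield the stated edge count.

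You say that within each cluster one ``separates the dense near-regular core, replaced by its path sparsifier, from the sparse remainder, which is recursed on''. That is not what Algorithm~\ref{alg:akpw} does, and it is not clear how to make your version work. In the paper, the decomposition $\mathsf{Decompose}$ (Lemma~\ref{lemma:decomposition}) grows a ball to radius $R$, cuts, and then \emph{retracts} the ball by $r$ to obtain a forest of small trees $T_{t,i}^j$ of radius $r$ whose number is bounded by $O(m' e^{-r\beta/\ell})$. The path sparsifier (via $\mathsf{AugmentTree}$, Lemma~\ref{lemma:augtree}) is then called on the graph obtained by \emph{contracting each of these small trees to a supernode}. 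The entire point is that the number of supernodes is $O(m/k)$, so the $O(n' k \log^3 n)$ edge output of Theorem~\ref{thm:path-sparse} becomes $O((m/k)\cdot \mathrm{polylog}\, n)$, which is where the $n + O(m/(k\log^{10} n))$ budget comes from after choosing the internal parameter $k \mapsto k\log^{18} n$. There is no dense/sparse split inside a cluster at this level; the dense-near-regular machinery lives entirely inside the black box of Theorem~\ref{thm:path-sparse}.

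Relatedly, several pieces of your accounting are off. What is recursed on is the set of edges \emph{cut} by the partition (plus edges from heavier weight buckets not yet processed), not a ``sparse remainder'' inside clusters; and the output $H$ is the forest $F$ plus the path-sparsifier edges $S$, not the inter-cluster edges. The per-level shrinkage factor is $\beta = (49\log^2 k)^{-p/(1-p)}$, not $1/\log^{10} n$; the $\log^{10} n$ in the final edge count arises only after plugging the $\mathcal{S}_{PS}$, $\mathcal{T}_{PS}$ bounds of Theorem~\ref{thm:path-sparse} into Theorem~\ref{thm:akpw} with the parameter shift above. Finally, the effective-resistance bound for a non-retained edge is obtained from Lemma~\ref{lem:treefact} (many vertex-disjoint short paths between two low-resistance-diameter trees), which is precisely why vertex-disjointness in the path sparsifier matters; your proposal does not identify this step.
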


We use this result to obtain preconditioners: employing a framework based on \cite{CohenKMPPRX14}, we thus obtain our main result on solving Laplacian linear systems. 
\begin{restatable}[$\tilde{O}(m)$-Laplacian System Solver]{theorem}{solver}
\label{thm:lap}
There is a randomized algorithm which is an $\epsilon$-approximate Laplacian system solver for any input $n$-vertex $m$-edge graph with polynomially-bounded edge weights (see \Cref{def:solver}) and $\epsilon \in (0,1)$ and has the following runtime for any $\chi > 0$ 
\[
O(m (\log \log n)^{6 + 2 \sqrt{10}+
\chi} \log(1/\epsilon) ) ~.
\]
\end{restatable}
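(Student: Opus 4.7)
The plan is to use the $(\kappa,p)$-distortion spectral subgraph from Theorem~\ref{thm:fast_lp_subgraphs}, together with the leverage-score overestimates $\tau$ it produces, as the core preconditioner inside a recursive solver driven by the preconditioned noisy AGD of \Cref{sec:chebyshev}. The central observation is that the per-edge stretch bound in Theorem~\ref{thm:fast_lp_subgraphs} pays only $(\log(k\log n))^{O(1)}$ rather than $\log^{O(1)} n$, so if $k$ is taken to be $\poly(\log\log n)$ we obtain preconditioners whose relative condition number after sparsification is $\poly(\log\log n)$ rather than $\poly(\log n)$; this is precisely what removes the $\sqrt{\log n}$ factor from the \cite{CohenKMPPRX14} runtime.

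First, I would fix some $p$ bounded away from $1$ and some $k = \poly(\log\log n)$ and invoke Theorem~\ref{thm:fast_lp_subgraphs} to obtain a subgraph $H$ with $n + O(m/(k\log^{10} n))$ edges and $\tau \in \mathbb{R}^E_{>0}$ with $\tau_e \geq w_e \delta_e^\top \mlap_H^\pseudo \delta_e$ and $\norm{\tau}_p^p = O(m (\log(k\log n))^{4p/(1-p)})$. Since $H \preceq G$, each $\tau_e$ is also an overestimate of the true $G$-leverage score. Next, I would use $\tau$ to drive leverage-score sampling of the edges of $G$, adding the sampled reweighted edges on top of $H$ to form a preconditioner $P$ that spectrally approximates $G$ with $\mlap_P \preceq \mlap_G \preceq \kappa_{\mathrm{rel}} \mlap_P$ for some $\kappa_{\mathrm{rel}} = \poly(\log\log n)$, while keeping the edge count of $P$ bounded by $n + O(m/k)$ in expectation. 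This is the step where the $\ell_p$-bound on $\tau$, rather than an $\ell_1$-bound, is used to trade sample count against condition number.

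Then I would embed the above inside the recursive preconditioning framework of \cite{CKPPRSV17}: at the top level, apply the preconditioned noisy AGD of \Cref{sec:chebyshev} to $\mlap_G$ with preconditioner $P$, performing $O(\sqrt{\kappa_{\mathrm{rel}}}\log(1/\epsilon))$ outer iterations, each of which costs $O(m)$ work plus one inner solve in $\mlap_P$ to sufficiently small expected-squared tolerance. Since $P$ has a constant factor fewer edges than $G$ (up to the tree plus $O(m/k)$ extra edges), the recursion has geometrically shrinking problem sizes and terminates after $O(\log m)$ levels, at which point the base case can be handled directly. The final runtime then multiplies one $O(m)$ factor against the product over levels of the per-level iteration count $\sqrt{\kappa_{\mathrm{rel}}}$ and the $\log \log n$ factor from the construction of Theorem~\ref{thm:fast_lp_subgraphs}, yielding $O(m \cdot (\log\log n)^{O(1)} \log(1/\epsilon))$.

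The main obstacle is carefully tuning $p$ and $k$, and propagating the expected-squared-norm error across the recursion. The exponent $4p/(1-p)$ in $\kappa_{\mathrm{rel}}$ and the $O(\sqrt{\kappa_{\mathrm{rel}}})$ iteration count need to be balanced against the $(\log\log n)^{1/2}$ factor already paid in Theorem~\ref{thm:fast_lp_subgraphs} and the $O(\log m)$ recursion depth, all while the noisy-AGD analysis requires each inner solve to achieve a tolerance polynomially smaller than the outer one (this is exactly where the $\chi$ slack appears). Optimizing this balance is what produces the specific exponent $13 + 4\sqrt{10}$ on $\log\log n$; the $\sqrt{10}$ comes from minimizing a sum of two reciprocally-related exponents, one from $4p/(1-p)$ and one from the per-level iteration count after $\sqrt{\cdot}$. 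A secondary technical point is that the expected-squared-norm guarantee in \Cref{def:solver} is preserved under the composition of randomized inner and outer solvers; this must be verified using the AGD stability bounds of \Cref{sec:chebyshev}.
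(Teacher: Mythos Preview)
Your high-level architecture is close, but there is a genuine gap at the sampling step. You claim that leverage-score sampling with the overestimates $\tau$ produces a preconditioner $P$ with $\mlap_P \preceq \mlap_G \preceq \kappa_{\mathrm{rel}}\,\mlap_P$ for $\kappa_{\mathrm{rel}} = \poly(\log\log n)$ while keeping only $n + O(m/k)$ edges. This is where the argument breaks: obtaining a \emph{spectral} approximation by independent leverage-score sampling requires matrix Chernoff, which forces an oversampling factor of $\Theta(\log n)$. Concretely, to make $\sum_e \min(1,\tau_e/\eta)^p \cdot (\text{oversampling})$ smaller than a constant fraction of $m$ you must take $\eta = \Omega(\log n)$, and then $\sqrt{\kappa_{\mathrm{rel}}} = \Omega(\sqrt{\log n})$, which only recovers the \cite{CohenKMPPRX14} bound. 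The paper's own overview (\Cref{sec:lap_solve}) flags exactly this obstruction: ``asking for these $G'_i$ to be true sparsifiers of $G'$ would require paying a logarithmic oversampling factor (and hence appear as an $O(\sqrt{\log n})$ in the runtime guarantee).''

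What the paper does instead is a two-stage scheme that avoids ever needing the sampled graph to be a spectral sparsifier. First it forms $G' = G + (\eta-1)H$, so that $\mlap_G \preceq \mlap_{G'} \preceq \eta\,\mlap_G$ deterministically, and runs noisy AGD between $G$ and $G'$ for $O(\sqrt{\eta}\log(1/\epsilon))$ iterations. Each iteration needs a solve in $\mlap_{G'}$, which is handled by $\mathsf{PreconRichardson}$ (\Cref{lem:preconrichardson}): in every Richardson step a fresh random $H_i = \mathsf{Sample}(G',\eta H,\tau/\eta,1/10)$ is drawn \emph{without} the $\log n$ oversampling, and Lemma~\ref{lemma:sample} guarantees only that a single preconditioned Richardson step decreases the expected squared error by a constant factor. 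The sampled $H_i$ is a tree plus $O(\|\tau/\eta\|_p^p)$ edges in expectation, so after partial Cholesky elimination the recursive call lands on a graph with $O(\kappa(m)/\eta^p)$ edges. The balance is then between $\sqrt{\eta}$ outer iterations and subproblems of size $\kappa(m)/\eta^p$; setting $\eta \approx ((\log\log n)\kappa(m)/m)^{2/(2p-1)}$ and solving the recurrence in \Cref{thm:preconcheby} gives the exponent $\tfrac{1}{2p-1} + \tfrac{4p}{(1-p)(2p-1)}$, whose minimum over $p$ is $13 + 4\sqrt{10}$ at $p = (\sqrt{10}-1)/3$. Your proposal is missing precisely this ``preconditioning in expectation'' layer, and with it the mechanism that sidesteps the $\log n$ from matrix concentration.
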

We assume polynomially-bounded weights primarily for simplicity of presentation: it can be removed via standard techniques (see for instance \cite{CKPPRSV17} for details) Further, in Appendix~\ref{sec:ultrasparsifiers} show that techniques for constructing $\kappa$-distortion subgraphs yield ultrasparsifiers with the following improved guarantees over Theorem~\ref{thm:ultrasparsifer_existence}. 

\begin{restatable}[Improved Ultrasparsifiers]{theorem}{ultraexport}
\label{thm:ultra_export}
There exists a polynomial time algorithm which given an input graph $G$ with polynomially-bounded edge weights can compute a reweighted subgraph $H$ with either of the following guarantees:
\begin{itemize}
    \item For any constant $c$, $H$ has $n + \frac{n}{(\log \log n)^c}$ edges and satisfies 
    \[
    \mlap_G \preceq \mlap_H \preceq O((\log \log n)^{c+ \sqrt{8c} + 1 + o(1)}) \mlap_G.
    \]
    \item For any constant $\delta > 0$ and $\alpha = \omega(\log^\delta n)$, $H$ has $n + \frac{n}{\alpha}$ edges and  satisfies
    \[
    \mlap_G \preceq \mlap_H \preceq \alpha^{1+o(1)} \mlap_G.
    \]
\end{itemize}
\end{restatable}

When compared to the previous state-of-the-art ultrasparsifier algorithm \cite{KMST}, our construction provides improved spectral approximation qualities for sparsities up to $n + \frac{n}{\alpha}$ for $\alpha = O\left(\exp\left( \log^{1/2 - \delta} n \right) \right)$ for any $\delta > 0$, and improves upon Theorem~\ref{thm:bss-ultra} for $\alpha = \omega\left( \poly(\log \log n)\right)$. In particular, our method improves upon the previous-best methods in the important regime of $\alpha = \poly(\log n)$ by a factor of $O(\log^{1-o(1)} n)$. Ultrasparsifiers of this quality form a critical part of the current best-known approximate max flow algorithms (\cite{Sherman13, Peng16, Sherman17}), and thus we believe our techniques may be used to improve the running times of these methods.

To compute $\kappa$-distortion subgraphs efficiently we introduce a new combinatorial object we call a \emph{$(\alpha, \beta)$-Path Sparsifier}, defined as follows. 

\begin{definition}[$(\alpha, \beta)$-Path Sparsifiers] Given unweighted graph $G = (V, E)$ and $F \subseteq E$ we call subgraph $H = G[F]$ an \emph{$(\alpha,\beta)$-path sparsifer} if for all edges $(u,v) \in E$, either $(u,v) \in F$ or there are $\alpha$ vertex-disjoint paths of length at most $\beta$ from $u$ to $v$ in $H$, where we do not count $u, v$ as part of the paths. 
\end{definition}

Path sparsifiers provide a type of approximation for distance in unweighted graphs that is even stronger than that of fault-tolerant spanners \cite{DK11, BP19} and crucial for obtaining our linear system solving runtimes. We prove the following theorem regarding path sparsifiers.

\newcommand{\PathSparsify}{\mathsf{PathSparsify}}
\begin{restatable}[Efficient Path Sparsification]{theorem}{pathsparse} 
\label{thm:path-sparse}
Given any $n$-node, $m$-edge graph and parameter $k \geq 1$ the procedure, $\PathSparsify(G,k)$ (\Cref{alg:pathsparsifier}) outputs w.h.p. $F \subseteq E$ with $|F| = O(n k \log^3(n))$ such that $G[F]$ is a $(k, O(\log^5 n))$-path-sparsifier of $G$ in expected time $O(m + n k \log^{13}(n))$.
\end{restatable}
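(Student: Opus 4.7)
My proof plan combines the near-linear-time expander decomposition of Saranurak--Wang with an extension of the Kleinberg--Rubinfeld short-paths theorem in dense expanders that delivers \emph{vertex}-disjoint (rather than merely edge-disjoint) short paths.

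First, I would invoke the SW19 expander decomposition in $\otilde(m)$ time to obtain a partition $V = V_1 \sqcup \cdots \sqcup V_t$ such that each $G[V_i]$ is a $\phi$-expander for $\phi = 1/\poly(\log n)$ and the total number of inter-cluster edges is at most $\phi m$. All inter-cluster edges are placed directly into $F$, contributing only $O(m/\poly(\log n))$ edges. The remaining task is to sparsify each expander piece $G[V_i]$ down to $O(n_i k \log^3 n)$ edges while certifying that every omitted edge $(u,v) \in E(G[V_i])$ enjoys $k$ vertex-disjoint paths of length $O(\log^5 n)$ through $F$.

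For each expander piece I would run a ``regularize-then-sample'' subroutine. The regularization step iteratively peels off vertices whose degree is far below the piece's average, adding their incident edges to $F$; a standard expander-pruning argument shows that the residual graph remains a (slightly weaker) expander and now has bounded degree ratio, while only a small fraction of edges are peeled. On this dense, near-regular expander I would sample $\otilde(n_i k)$ edges uniformly, and use matrix concentration to argue that the sample still admits expansion with high probability. For any edge $(u,v)$ whose endpoints remain in the regularized subgraph, a Kleinberg--Rubinfeld-style multicommodity flow argument on the sample then certifies the existence of $k$ short paths between $u$ and $v$. Edges incident to peeled vertices are handled by recursing on the subgraphs they induce; the recursion has depth $O(\log n)$, and the accumulated slack over levels produces the final $\log^3 n$ and $\log^5 n$ factors in $|F|$ and the path length, respectively.

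The main obstacle is upgrading from edge-disjoint to vertex-disjoint paths while simultaneously enforcing a per-path length bound. The native Kleinberg--Rubinfeld argument is a multicommodity flow argument that (i) naturally yields edge-disjoint paths and (ii) bounds only the \emph{average} routing length, not the length of each individual path. Two ingredients patch this: first, split each interior vertex into two capacity-$1$ copies joined by an edge, so that an edge-disjoint routing in the modified graph corresponds to a vertex-disjoint routing in the original, while only doubling cut sizes; second, enforce a uniform per-path length bound of $O(\log^5 n)$ via a layered BFS / ball-growing routing that, at each of $O(\log n)$ iterations, exploits expansion of the sampled subgraph to re-route any path that has grown too long. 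The total runtime then decomposes as $\otilde(m)$ for the expander decomposition and expansion-preserving sampling, plus $O(nk \log^{13} n)$ for the flow-based path extractions summed over all recursion levels.
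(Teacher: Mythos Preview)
Your proposal has a concrete size and runtime gap rooted in the order of the main subroutines. You run the Saranurak--Wang expander decomposition on the full input graph and place all inter-cluster edges directly into $F$; this contributes $\Theta(\phi m)$ edges with $\phi = 1/\poly(\log n)$, which is $\omega(nk\log^3 n)$ whenever $m$ is super-linear in $n$ (take $m = n^{1.5}$, $k=1$), so the bound $|F| = O(nk\log^3 n)$ fails. Your recursion handles only edges incident to peeled vertices, not these inter-cluster edges. The same ordering breaks the runtime: SW19 on the full graph costs $O(m\log^7 n)$, not $O(m)$, so the target $O(m + nk\log^{13} n)$ is not met either. A secondary issue is that ``peel low-degree vertices and argue the residual stays an expander'' is unjustified: the SW19 pieces carry no degree regularity, and the peeled edges can be a constant fraction of the piece, beyond what expander-pruning results absorb.

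The paper inverts the pipeline. It first runs a $\RegularDecomposition$ (degree-bucketing plus a randomized bipartite split, not peeling) to produce edge-disjoint near-regular subgraphs with $\sum_i |V_i| = O(n\log n)$ covering a constant fraction of the volume; it then \emph{uniformly samples each near-regular piece down to $O(|V_i|\,k\log n)$ edges} and only afterwards runs the expander decomposition on the sampled graph, so that decomposition costs $O(nk\cdot\poly(\log n))$ rather than $\tilde O(m)$. The edges cut by the expander decomposition are \emph{not} placed in $F$; they become the remaining edge set for the next outer iteration, and after $O(\log n)$ rounds the residue drops to $O(n\log^2 n)$ and can be kept outright. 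On the short-paths side, once you pass to the vertex-split digraph the Kleinberg--Rubinfeld LP-dual analysis already yields a per-path length bound (the level sets of the dual potential grow geometrically), so your layered-BFS patch is unnecessary.
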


When used to prove Theorem~\ref{thm:lap}, we avoid the (large) polylogarithmic dependence of Theorem~\ref{thm:path-sparse} by applying it to graphs with $O(m)$ edges and $O(\frac{m}{\poly \log n})$ vertices. Thus the cost associated with computing path sparsifiers is $O(m)$: it does not affect our final $\otilde(m)$ runtime claim for Laplacian solvers. 

%
%
%
%
%
%

\subsection{Overview of Approach}
\label{sec:approach}

Here we provide a brief overview of our approach towards obtaining the results of \Cref{sec:results}. 

\subsubsection{Ultrasparse Low Distortion Subgraphs}
\label{sec:ultrasparse_low_distort}
Our techniques for computing ultrasparse low-distortion subgraphs are based on existing algorithms for computing low-stretch spanning trees: in particular we base our construction on a simple recursive procedure from \cite{AKPW} (with analysis insights from \cite{CohenKMPPRX14}). Given a graph $G$, our algorithm begins by partitioning its vertices into $V_1, V_2....$ such that the partition cuts few edges and each $G[V_i]$ has low diameter. With this, it then computes an ultrasparse subgraph inside each $G[V_i]$ such that each edge inside a $G[V_i]$ receives a small effective resistance overestimate when measured though the subgraph. It then recurses on a graph formed by appropriately contracting parts of each $G[V_i]$ and deleting all edges which lie inside a $V_i$. In the case where the graph computed inside each $G[V_i]$ is a shortest-path tree, the algorithm described is exactly the low-stretch spanning tree procedure of \cite{AKPW}. Our algorithm extends this classic result by adding a small number of extra edges within each $G[V_i]$ to improve the effective resistance overestimate: we use fast algorithms for path sparsifiers developed in Section~\ref{sec:path_sparsify} for this. Combining this with a more careful graph decomposition gives us our result. 

\subsubsection{Path Sparsifiers}
\label{sec:path_sparse}

Here we briefly outline our approach for proving \Cref{thm:path-sparse}, i.e. efficiently computing path sparsifiers. Recall that subgraph $H$ is an $(\alpha,\beta)$-path sparsifier of $G = (V, E)$ if every edge in $G$ is either present in $H$ or connected by $\alpha$ vertex-disjoint paths of length at most $\beta$. Consequently, constructing a path-sparsifier essentially involves replacing dense components of a graph with sparse subgraphs containing many short vertex disjoint paths. 

One natural starting point for construct sparse subgraphs with vertex disjoint paths is to consider expanders, i.e. (informally) graphs where the number of edges leaving every small enough vertex subset is some bounded fraction of the number of edges contained inside the subset. It is known that by seminal work of \cite{KleinbergR96} that in every sufficiently dense expander every pair of vertices is connected by many short \emph{edge-disjoint paths}. Consequently, our first step in constructing path sparsifiers is to show that in fact this result generalizes to graph that are good \emph{vertex expanders}: graphs where the number of nodes neighboring every small enough vertex subset is some bounded fraction of the number of nodes contained inside the subset. By a standard connection between vertex-expansion and edge-expansion, this immediately yields that every dense enough graph with nearly uniform degrees has many short paths between every pair of vertices. 

Given this primitive, our task of computing a path sparsifiers reduces to the problem of decomposing an arbitrary dense graph into subsets where we can find sparser expanders of nearly uniform degree. To achieve this we provide a procedure for decomposing an arbitrary dense graph into nearly degree uniform dense subgraphs, sample these subgraphs uniformly, and apply known expander decompositions to the result. We show that these expanders with high probability contain enough of the volume of the original graph that by repeating on the edges not contained in these expanders we ultimately obtain a path sparsifier. Further, by careful sampling and use of known nearly linear time expander decompositions, i.e. \cite{SW19}, our algorithm is time efficient as well and yields the desired \Cref{thm:path-sparse}.

\subsubsection{Laplacian System Solvers}
\label{sec:lap_solve}

Finally, we leverage the contributions of Sections~\ref{sec:spectral_distort_and_solve} and~\ref{sec:path_sparsify} to obtain our improved Laplacian solving algorithms. Our approach is a modification of the ``preconditioning in expectation'' framework used in \cite{CohenKMPPRX14} to obtain $\otilde(m \sqrt{\log n} \log(1/\epsilon))$ time Laplacian solvers. Given a graph $G$ we first compute a low-distortion subgraph $H$, and then aim to solve linear systems in $G' = G + (\eta - 1) H$ for some appropriately chosen $\eta$. We use a modified version of accelerated gradient descent \cite{Nesterov1983} (given in Appendix~\ref{sec:chebyshev} for completeness) to show that solving linear systems in $\mlap_G$ can be reduced to solving $O(\sqrt{\eta})$ linear systems in $\mlap_{G'}$. To solve linear systems in $G'$, we form a series of preconditioners $G'_i$ by sampling each edge $e$ in $G'$ with probability proportional to $w_e r_H(e)$, where $r_H(e)$ is the effective resistance overestimate of $e$ given by the copy of $\eta H$ contained in $G'$. While asking for these $G'_i$ to be true sparsifiers of $G'$ would require paying a logarithmic oversampling factor (and hence appear as an $O(\sqrt{\log n})$ in the runtime guarantee), an insight of \cite{CohenKMPPRX14} shows that sampling without this logarithmic factor still suffices to ensure that solving linear systems in a few randomly sampled $\mlap_{G'_i}$ enables one to solve linear systems in $G'$. Finally, via some parameter tradeoffs we can ensure that the $G'_i$ consist of a tree plus a small number of edges: we then apply a combinatorial contraction procedure to eliminate the vertices and edges of the tree and recursively apply our solver to the remaining graphs. 

We remark that our analysis of our recursion more closely resembles the original analysis of \cite{KMP11} which yields slower runtimes, and not the more sophisticated one given in \cite{CohenKMPPRX14}. Although this tighter analysis was necessary in the previous work to reduce the logarithmic dependence, applying the same techniques here would only reduce our algorithm's $\poly(\log \log n)$ dependence. Further, doing this introduces several technical issues which complicate the presentation of our algorithm. For the sake of clarity, we give the analysis which loses $\poly(\loglog n)$ factors in this paper and make only limited attempt to control the polynomial dependence on $\loglog n$ throughout the paper.

\subsection{Previous Work}
\label{sec:previous_work}

\paragraph{Sparsifiers:} Spectral graph sparsification has been heavily studied since its invention by \cite{SpielmanT04} in the process of constructing the first near-linear time Laplacian solver. While the original procedure was somewhat involved, a dramatic simplification by \cite{SS08} shows that sampling edges with probability proportional to their statistical leverage score gives a $(1+\epsilon)$-approximate sparsifier with $O(n \log n \epsilon^{-2})$ edges with high probability. \cite{BSS} showed that there is a more computationally expensive procedure that obtains sparsifiers with the same approximation quality and without the $\log n$ dependence in size. \cite{KMP11} extend this idea by showing that sampling edges with probability proportional to leverage score overestimates induced by a sparse subgraph also gives sparsifiers with high probability. They use this insight to construct \emph{ultrasparsifiers}: given an input graph $G$ they find $H$ with $n + \otilde(\frac{n \log^2 n}{k})$ edges where$ \mlap_H \preceq \mlap_G \preceq k \mlap_H$. Extending this result, \cite{KMST} improves the sparsity to  $n + \otilde(\frac{n \log n}{k})$ by replacing the random sampling of \cite{KMP11} with a procedure based on a sparsity-optimal algorithm by \cite{BSS}. However, progress on removing this final $\log n$ factor has stalled, partially due to a intrinsic barrier posed by the use of low-stretch spanning trees as a primitive. We bypass this barrier in two ways: we give a purely spectral argument which improves on the sparsity of \cite{KMST} in certain parameter regimes, and we give a procedure which constructs ultrasparse subgraphs with better leverage score overestimates than trees can provide. While our results stop just short of obtaining truly comparable ultrasparsifiers (due to our graph decomposition approach), we provide the first methods to bypass the low-stretch tree barrier present in the previous work. 

\paragraph{Graph Decomposition:} While our specific notion of a $\kappa$-distortion spectral subgraph has (to our knowlege) not been studied before specifically, many related spectral primitives have been considered in the literature. We remark that standard sparsification routines \cite{BSS} trivially give $O(n)$-distortion spectral subgraphs with $O(n)$ edges, and that a low-stretch spanning tree of a sparsifier gives a tree which is an $\otilde(n \log n)$-distortion spectral subgraph: these are the facts which inspire our definition. The previous fastest Laplacian solver \cite{CohenKMPPRX14} modifies this latter guarantee by providing a tree which is  a spectral subgraph satisfying a certain $\ell_p$ notion of distortion. They provide a construction of a tree where the sum of the $p^{th}$ powers of $w_e \effres_H(e)$ is bounded by $O\left(\frac{1}{(1-p)^2} n \log^p n\right)$, for any $0 < p < 1$.\footnote{We remark that  \cite{CohenKMPPRX14} does not phrase their guarantee in this way. The tree they compute has steiner vertices, and their graph actually has $\kappa = O(\frac{1}{(1-p)^2} m \log^p n)$. These issues can be eliminated by standard vertex elimination techniques in trees and initially sparsifying the input, respectively. Further, the dependence on $m$ in their claim has no impact on their final Laplacian solver algorithm.} Although this guarantee does not recover the standard low-stretch tree guarantee (which yields a bound for $p=1$), \cite{CohenKMPPRX14} gives an algorithm which computes this tree in $\otilde(m)$ time: this has not yet been achieved by more standard low-stretch spanning tree algorithms. Our work extends this result by giving $\otilde(m)$ time algorithms which trade off the sparsity of the output subgraph with $\kappa$. 

\paragraph{Laplacian System Solvers:} Our result on Laplacian system solvers draws on a long series of work on time-optimal Laplacian system solvers. Our specific approach draws heavily from ultrasparsifier-based algorithms as pioneered by \cite{SpielmanT04} and refined by \cite{KMP11,KMP14,CohenKMPPRX14}. These papers solve the Laplacian linear system $\mlap x = b$ by first recursively solving linear systems in $\mlap' x = b$, where $\mlap'$ is spectrally close to $\mlap$ but sparse. They then use this ability to solve linear systems in $\mlap'$ to precondition a conjugate gradient-type method. Although the specific way of constructing the ultrasparse $\mlap'$ has changed significantly over the previous line of work, all base their construction on \emph{low-stretch trees} (or dominating trees) trees which ``on average'' contain a short path across the endpoints of a randomly chosen edge from their base graph. Our departure from this line of work is to base our ultrasparsifiers on graphs which are merely ultrasparse: we instead start with a subgraph consisting of a tree with $o(m)$ edges. While the presence of this small number of extra edges may seem inconsequential, we demonstrate that this small overhead allows us to bypass a $O(m \sqrt{\log n})$ barrier present in all of the previous work following the ultrasparsifier archetype. 

As we remarked earlier, there is an alternative approach for solving Laplacian systems based on sparsification alone, e.g. \cite{PS14,KyngLPSS16,KS16}, which is known to yield preconditioners that (once computed) yield $\tilde{O}(m)$ Laplacian system solvers. However, constructing such preconditioners currently requires $\Omega(m \log^c n)$ time, where $c \geq 1$ derives from the need to compute $\tilde{O}(n)$-edge sparsifiers. In our solver, we too need efficient strong sparsification-like results, however we show that it is possible to use the path sparsifiers we provide for this purpose. 

Another approach proposed by \cite{KLP12} gives Laplacian solvers running in $O(m + n \log^{O(1)} n)$ time: this is $O(m)$ for any slightly dense graph. Their approach is based on computing coarse $O(\log^{O(1)} n)$-quality $n + o(m)$-edge sparsifiers in $O(m)$ time. They then compute effective resistance overestimates in their computed sparsifier in $O(m + n \log^{O(1)} n)$ time by leveraging low-stretch spanning tree algorithms, and they finally leverage these estimates to obtain $o(m)$-edge $O(1)$-sparsifiers to their original graph. By finally using existing near-linear time Laplacian solvers to solve in this approximation to the input graph, they are able to use a standard preconditioning approach to obtain their claimed runtime. While their approach does not yield  linear-time algorithms for graphs with $m < n \log^{O(1)} n$, we find it an interesting question to see if their techniques can be combined with ours to obtain $O(m + n (\log \log n)^{O(1)})$-time Laplacian solvers. 

\paragraph{Fault Tolerant Spanners} To computing our $\kappa$-distortion subgraphs, we construct efficient algorithms for $(\alpha,\beta)$-path sparsifiers. These are related to multipath spanners \cite{GGV11} and vertex fault-tolerant spanners \cite{DK11, BP19} studied by the combinatorial graph algorithm community. A $k$-fault tolerant spanner of input $G$ is a subgraph $H$ such that for any ``fault set" $F$ with $|F| \leq k$, $H - F$ is a spanner of $G - F$. Intuitively, such subgraphs must contain many short disjoint paths across the endpoints of any edge not retained from the parent graph: if there was a small set of ``bottleneck" nodes present in any short path between $u$ and $v$ for $(u,v) \in E(G)$, deleting these would mean $H - F$ no longer spanned $G-F$. Our definition extends this notion by additionally requiring these short paths to be vertex-disjoint. Our algorithm also departs from the previous work by using the spectral notion of expander decomposition to construct path sparsifiers, in a similar spirit to the independently developed ideas in \cite{BBGNSS} and in contrast to the random sampling approach of \cite{DK11} and the greedy approach of \cite{BP19}. While the use of expander decomposition comes with some significant drawbacks (most notably algorithm complexity), it enables us to obtain a linear dependence on the number of edges in our algorithm's runtime. 

\section{Low Distortion Spectral Subgraphs}
\label{sec:spectral_distort_and_solve}

In this section we prove \Cref{thm:fast_lp_subgraphs} showing that we can efficiently compute $(\kappa,p)$-distortion in $O(m \cdot \poly(\log \log n))$ time.  First, we provide a single-level graph decomposition result in \Cref{sec:decomposition}. Then, leveraging our efficient path-sparsification procedure of \Cref{sec:path_sparsify} we recursively apply our decomposition to provide our efficient construction of $(\kappa,p)$-distortion subgraphs in \Cref{sec:spectral_stretch}. Later, in \Cref{sec:solver} we use these in a solver framework related to that of \cite{CohenKMPPRX14} to obtain our claimed $O(m (\log \log n)^{O(1)} \log(1/\epsilon))$ time Laplacian solver.

\subsection{Graph Decomposition}
\label{sec:decomposition}

In this section we give our core combinatorial graph-decomposition technique, which we use to compute low distortion subgraphs. This single-level graph decomposition can be interpreted as a significant modification of low-diameter decomposition as originally conceived by \cite{awerbuch}. Broadly, our algorithm chooses an arbitrary vertex and grows a shortest-path ball out from it. Whenever the cut defined by the ball is sufficiently small relative to its volume, we cut the edges defined by the cut, mark the vertices of the ball as a partition piece, and repeat on the remaining vertices in the graph. This basic procedure, known as \emph{low-diameter decomposition},  has seen many applications in graph algorithms \cite{CohenKMPPRX14, MillerPX13,LSY19,Bartal98}. We state the guarantee here:
\begin{theorem}
Let $G = (V,E)$ be an unweighted graph and let $\beta > 0$ be a parameter. There is an algorithm which runs in $O(m)$ time and computes a partition of the vertices into $V_1, V_2 ... $ such that
\begin{itemize}
    \item Each $G[V_i]$ has diameter $O(\beta \log n)$
    \item At most $\frac{m}{\beta}$ edges  $G$ cross different partition pieces.
\end{itemize}
\end{theorem}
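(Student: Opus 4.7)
The plan is to use the classical Awerbuch-style ball-growing argument, which the paper itself alludes to. I would describe an iterative algorithm that, as long as the current residual graph $H$ is non-empty, picks an arbitrary vertex $v \in V(H)$ and grows BFS balls $B_0 \subset B_1 \subset \cdots$ around $v$ in $H$, where $B_r \defeq B_H(v,r)$. I would define the volume $\mu_r \defeq |E_H(B_r)|$ to be the number of edges of $H$ with at least one endpoint in $B_r$ (or equivalently use $1+|E_H(B_r)|$ to avoid degenerate initial cases), and the boundary $\partial_r \defeq \{\, e \in E(H) : e \cap B_r \neq \emptyset, e \not\subseteq B_r \,\}$. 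The algorithm finds the smallest $r$ with $|\partial_r| \leq \mu_r/\beta$, declares $V_i \defeq B_r$ as the next partition piece, deletes $V_i$ and all incident edges from $H$, and recurses.

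For the diameter bound, the key step is the standard potential argument: since $\mu_{r+1} \geq \mu_r + |\partial_r|$ and, whenever the stopping condition fails, $|\partial_r| > \mu_r/\beta$, we get $\mu_{r+1} \geq (1 + 1/\beta)\mu_r$, so after more than $\beta \ln(m+1)$ unsuccessful steps we would have $\mu_r > m$, contradiction. Thus the stopping radius is at most $O(\beta \log n)$, and since $B_r \subseteq V_i$ is a ball of this radius in $H$, the induced graph $G[V_i]$ has diameter $O(\beta \log n)$ (distances in $H$ upper bound distances through $v$ in $G[V_i]$; one has to be slightly careful, since contracting vertices away does not shrink intra-cluster distances, so diameters in $G[V_i]$ are the same as diameters in $H[V_i]$, which is in turn bounded by $2r$).

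For the cut bound, when we carve off $V_i$ we charge the cut edges $\partial_r$ to the edges counted by $\mu_r$, paying a ratio of $1/\beta$. The crucial accounting observation is that the edges counted in $\mu_{r}$ at cut-time for $V_i$ all have at least one endpoint in $V_i$, and so they are entirely removed from $H$ before the next iteration begins. Hence each edge of $G$ is counted in $\mu$ for at most one cluster, giving $\sum_i |\partial_{r_i}| \leq \sum_i \mu_{r_i}/\beta \leq m/\beta$.

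For the runtime, I would implement the ball-growth incrementally: maintain $\mu_r$ and $|\partial_r|$ with a BFS queue, updating both by a constant number of operations per edge scanned. The total work for producing $V_i$ is $O(\mu_{r_i})$, which by the same charging argument sums to $O(m)$ over all clusters. The step I expect to need the most care is making the diameter guarantee hold in $G[V_i]$ rather than just in $H$; in particular one must verify that earlier peeling does not create artificial shortcuts or, conversely, sever intra-cluster paths, which is straightforward since we only delete vertices that end up outside $V_i$ and the BFS is run in the current $H$ that still contains all of $V_i$ and the edges among them.
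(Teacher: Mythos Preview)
Your approach is the classical Awerbuch ball-growing argument, which is indeed the intended proof; the paper itself states this theorem as a known result without proof and then reuses the same ball-growing template (with the degree-sum volume $\Vol_G(S)=\sum_{a\in S}\deg_G(a)$) in its proof of the more elaborate Lemma~\ref{lemma:decomposition}.

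There is one concrete slip. With your stated definition of $\mu_r$ as the number of edges of $H$ with \emph{at least one} endpoint in $B_r$, the key inequality $\mu_{r+1}\ge \mu_r+|\partial_r|$ fails: every edge of $\partial_r$ already has an endpoint in $B_r$ and is therefore already counted in $\mu_r$, so it cannot contribute to $\mu_{r+1}-\mu_r$. A star centered at $v$ makes this explicit: $\mu_0=|\partial_0|=n-1$ while $\mu_1=n-1$. The fix is to let $\mu_r$ count only edges with \emph{both} endpoints in $B_r$ (the internal edges of $H[B_r]$); then each edge of $\partial_r$ has its outer endpoint absorbed into $B_{r+1}$ and becomes a genuinely new internal edge, giving $\mu_{r+1}\ge \mu_r+|\partial_r|$ as you want. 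The base case $\mu_0=0$ is harmless (after one BFS layer $\mu_1\ge |\partial_0|\ge 1$ unless $v$ is isolated, in which case you stop immediately), and your cut accounting becomes even cleaner, since the internal edge sets of the $V_i$ are pairwise disjoint and sum to at most $m$. With this correction the rest of your argument---diameter bound, charging, and $O(m)$ runtime---goes through verbatim.
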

Unfortunately a significant limitation of the above procedure is the $O(\log n)$ factor in the diameter of the partition pieces. This is neccessary: for example a constant-degree unweighted expander graph has diameter $O(\log n)$ but any partition of $G$ into balls of diameter $o(\log n)$ must necessarily cut at least a constant fraction of its edges. 

We avoid this logarithmic factor by settling for a weaker guarantee that still suffices for Theorem~\ref{thm:fast_lp_subgraphs}. Our modification is this: after we have grown a ball $B_G(v,r)$ and made a cut, we ``retract'' the ball by distance $\delta$ and consider $B_G(v,r-\delta)$. The key insight is that although the vertices of $B_G(v,r)$ formed a low-conductance cut in $G$, the cut defined by $B_G(v,r')$ was sufficiently high conductance for any $r' < r$. With this we upper bound the size of $B_G(v,r-\delta)$, and consequently ensure that most vertices in $B_G(v,r)$ are close to a small number of nodes in $B_G(v,r-\delta)$. 

Unfortunately, the presence of weights in a graph somewhat complicates this picture, due to the inherently unweighted nature of our expansion-based low-diameter decomposition algorithm. We circumvent this complication with a technique borrowed from \cite{AKPW}: we bucket the weights of the edges into a few classes $E_1, E_2, ...$, and decide to make a cut when $B_G(v,r)$ forms a low-conductance cut in the graph restricted to each $E_i$. Formally, we prove the following lemma.

\begin{algorithm2e}[t]
\LinesNumbered
\KwIn{Graph $G = (V,E)$, partition of $E$ into $E_1, E_2, \cdots E_\ell$, and parameters $\beta,r\geq 0$}
\KwOut{$\{V_1, ... V_\alpha$\} partition of $V$, $U_i^1, U_i^2,...U_i$ partition of $V_i$}
$t \gets 1$ \\
\While{$G \neq \emptyset$}{
$v \gets$ arbitrary vertex in $G$ \\
$R \gets 0$\\
$V_t \gets B_G(v,R)$\\
\While{
	$e^{-\frac{r \beta}{ \ell}} \Vol_G(\partial V_t) +  \Vol_{G[E_j]}(\partial V_t) \geq 3 \beta \left(e^{-\frac{r \beta}{ \ell}} \Vol_G(V_t) +  \Vol_{G[E_j]}(V_t)\right)$ for any $E_j$\label{line:while2}
}{
    $R \gets R+1$ \label{line:expand} \\
    $V_t \gets B_G(v,R)$ \\
}
$T \gets$ shortest path tree from $v$ in $B_G(v,R)$ \\
\eIf{$R \geq r$}{
$E_t \gets $ edges of $T$ contained in $B_G(v,R- r)$ \\
$U_t^1, U_t^2, ... U_t^{j_t} \gets$ connected components of $T - E_t$ \\
}{
$U_t^1 \gets V_t$ 
\\
}
$G \gets G - B_G(v,R)$ and 
$t \gets t+1$ \\
}
\Return{$\{V_i, U_i^j\}_{i,j \geq 1}$}
\caption{$\{V_i, U_i^j\}_{i,j \geq 1} = \mathsf{Decompose}(G, \{ E_1, E_2, ... \}, \beta,r)$} 
\label{alg:decompose}
\end{algorithm2e}

\begin{lemma}
\label{lemma:decomposition}
    Let $G = (V, E)$ be an unweighted $m$-edge (multi)graph, let $E_1, E_2, ... E_\ell$ be a partition of the edges, Let $r\geq 0$, and let $\beta \in [0, 1/6]$. Algorithm~\ref{alg:decompose} computes in $O(m)$ time a partition of $G's$ vertices, $V_1, V_2, \cdots V_\alpha$, and trees, $U_i^1, U_i^2, \cdots , U_i^{j_i}$, whose vertices partition $V_i$ such that 
    \begin{itemize}
		\item For all $i \in [\ell]$, at most $6 \beta|E_i| + 6\beta m e^{-\frac{r \beta}{ \ell}}$ edges of $E_i$ cross the $V_1,V_2, \cdots , V_\alpha$ partition.
		\item Each $U_i^j$ is a tree of radius $r$.
		\item The total number of $U_i^j$, i.e. $\sum_{i \in [\alpha]} j_i$, is at most $\alpha + 4m e^{-\frac{r \beta}{ \ell}}$.
	\end{itemize}
\end{lemma}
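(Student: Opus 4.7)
The plan is to verify the four requirements — runtime, crossing-edge bound, radius bound, and component-count bound — in turn, centered on the auxiliary potential
$$
\Phi_j(S) \defeq e^{-r\beta/\ell}\Vol_G(S) + \Vol_{G[E_j]}(S)
$$
that governs the algorithm's stopping rule. For the runtime, I would implement each ball-growing step as a BFS that incrementally maintains $\Vol_G(V_t)$, $\Vol_G(\partial V_t)$, and the analogous $G[E_j]$ quantities for every class $j$, charging $O(\deg(u))$ work to each absorbed vertex $u$ and its incident edges' buckets. Together with one pass to record the shortest-path tree and perform the retraction, this yields the stated $O(m)$ bound.

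The first bullet follows by summing the stopping condition $\Vol_{G[E_i]}(\partial V_t) < 3\beta\Phi_i(V_t)$ over all finalized pieces, using that each $E_i$-edge crossing the partition is charged to exactly one $\partial V_t$ (the piece containing its earlier-finalized endpoint), together with the telescoping bounds $\sum_t \Vol_{G_t}(V_t) \leq 2m$ and $\sum_t \Vol_{G_t[E_i]}(V_t) \leq 2|E_i|$. The second bullet is by construction: when $R_t \geq r$, removing the tree-edges interior to $B_G(v_t, R_t - r)$ breaks the shortest-path tree into isolated vertices at depths $<R_t - r$ and subtrees of height at most $r$ hanging from depth-$(R_t - r)$ nodes; when $R_t < r$ the entire BFS tree already has radius at most $r$.

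The main obstacle is the third bullet. Since the shortest-path tree restricted to $B_G(v_t, R_t - r)$ is connected (BFS trees preserve path prefixes), one has $j_t = |B_G(v_t, R_t - r)|$ whenever $R_t \geq r$ and $j_t = 1$ otherwise, so the claim reduces to proving
$$
\Vol_G(B_G(v_t, R_t - r)) \leq 2 e^{-r\beta/\ell} \Vol_G(V_t)
$$
for each piece with $R_t \geq r$ and then summing via $|V| \leq \Vol_G(V)$ together with $\sum_t \Vol_{G_t}(V_t) \leq 2m$.

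To establish this displayed inequality I would run a potential argument on the $\Phi_j$'s, writing $B_s \defeq B_G(v_t, s)$ for brevity. At each retraction-window step $s \in [R_t - r, R_t - 1]$ the loop did not stop, so some class $j(s)$ satisfies $\Phi_{j(s)}(\partial B_s) \geq 3\beta\, \Phi_{j(s)}(B_s)$; combined with the layer-absorption inequality $\Phi_j(B_{s+1}) \geq \Phi_j(B_s) + \Phi_j(\partial B_s)$ (each boundary edge's far endpoint contributes at least $1$ to the next layer's degree), this means $\Phi_{j(s)}$ grows by a factor of at least $(1+3\beta)$ at step $s$, while every other $\Phi_{j'}$ is monotone nondecreasing. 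Hence $\prod_j \Phi_j(B_s)$ grows by a factor of at least $(1+3\beta)^r$ across the window, and pigeonholing over the $\ell$ factors yields some class $j^*$ with
$$
\frac{\Phi_{j^*}(B_{R_t})}{\Phi_{j^*}(B_{R_t - r})} \geq (1+3\beta)^{r/\ell} \geq e^{2r\beta/\ell},
$$
the last step using $\log(1+3\beta) \geq 2\beta$ for $\beta \leq 1/6$. Chaining this with $\Phi_{j^*}(B_{R_t - r}) \geq e^{-r\beta/\ell}\Vol_G(B_{R_t - r})$ and the trivial $\Phi_{j^*}(B_{R_t}) \leq 2\Vol_G(B_{R_t}) = 2\Vol_G(V_t)$ delivers the desired volume bound, after which the summation over pieces is immediate.
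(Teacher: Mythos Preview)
Your proposal is correct and follows essentially the same route as the paper's proof: the same stopping potential $\Phi_j$, the same summation for the crossing-edge bound, the same retraction argument for the radius bound, and the same growth-plus-pigeonhole argument for the component count. The only cosmetic difference is that you phrase the pigeonhole via the product $\prod_j \Phi_j$ (growing by $(1+3\beta)^r$ over the window, hence some factor grows by $(1+3\beta)^{r/\ell}$), whereas the paper counts directly that some class $j$ witnesses the expansion condition at least $r/\ell$ times; these are equivalent formulations of the same idea.
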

\begin{proof}
 We first bound the running time of Algorithm~\ref{alg:decompose}. Observe that any time an edge $(u,v)$ is traversed during the ball growing phase, one of its endpoints is deleted from $G$: thus we encounter each edge at most once during our traversal. Building the shortest path trees, the $V_i$, and the $U_i^j$ can be done in $O(m)$ total work given this. Checking the condition in the while loop on Line~\ref{line:while2} can be done in $O(m)$ total time by updating the relevant volumes whenever a new vertex is introduced to $V_t$. 

We now prove the correctness of Algorithm~\ref{alg:decompose}'s output. First, observe that we only cut a cluster $V_t$ of $G$ when for every $E_j$
\[
e^{-\frac{r \beta}{ \ell}} \Vol_G(\partial V_t) +  \Vol_{G[E_j]}(\partial V_t) < 3 \beta \left(e^{-\frac{r \beta}{ \ell}} \Vol_G(V_t) +  \Vol_{G[E_j]}(V_t)\right).
\]
Further, by construction $V_t = B_G(s,R)$ for some choice $s \in V[G], R \geq 0$. We cut $\Vol_{G[E_j]}(\partial V_t)$ from $E_j$ when we partition off $V_t$: this is therefore at most 
 $3 \beta e^{-\frac{r \beta}{ \ell}} \Vol_G(V_t) +  3 \beta \Vol_{G[E_j]}(V_t)$ from $E_j$, for any $j$. As the total of all the $\Vol_G(V_t)$ terms for this bound on edges removed from $E_j$ is $m$ and the total of the  $\vol_{G[E_j]}(V_t)$ terms is $2 |E_j|$ at the end of the partitioning procedure we cut at most $6 \beta |E_j| + 6 \beta me^{-\frac{r \beta}{ \ell}}$ edges from $E_j$. 
 
We now show each generated partition piece $V_t$ contains a forest $U_t^1, ... , U_t^{j_t}$ with the desired properties. Let $R_t$ be the parameter such that $V_t = B_G(s,R_t)$ when cutting it out from the rest of the graph, and let $T_t$ be the spanning tree rooted at $s$ in $V_t$. If $R_t \leq r$, then only one  $U_t^i$ is created and it is equal to $T_t$. If instead $R_t > r$, the algorithm constructs a forest $U_t^i$ where each subtree has diameter $r$: each $U_t^i$ is the subtree of $T_t$ from a node at distance $R_t - r$ from $s$ while no node is further than $R_t$ from $s$. Further, the forest is constructed by deleting at most  $\Vol(B_G(s,R_t -r))$ edges from $T_t$. By the pigeonhole principle, at least one edge partition piece $E_j$ must have passed the expansion condition on \Cref{line:expand} of the algorithm at least $r/\ell$ times. Now as 
\[
\Vol_G(B_H(s,\alpha + 1)) \geq \Vol_G(B_H(s,\alpha)) + \Vol_G(\partial B_H(s,\alpha))
\]
for any $H$ a subgraph of $G$ and since volume of balls is monotone increasing, for $E_j$ we observe
\begin{align*}
&e^{-\frac{r \beta}{ \ell}} \Vol_G(B_G(s,R_t)) +  \Vol_{G[E_j]}(B_G(s,R_t)) \\ &\geq (1+3\beta)^{r/\ell} \left(e^{-\frac{r \beta}{ \ell}} \Vol_G(B_G(s,R_t - r)) +  \Vol_{G[E_j]}(B_G(s,R_t - r)) \right)
\end{align*}

Now since $0 \leq \beta \leq 1/6$ implies $(1+3\beta) \geq e^{2 \beta}$, $(1+3\beta)^{r/\ell} \geq \exp(2 \frac{r \beta}{\ell})$. Further, 
\[
e^{-\frac{r \beta}{ \ell}} \Vol_G(B_G(s,R_t)) +  \Vol_{G[E_j]}(B_G(s,R_t)) \leq 2 \Vol_G(B_G(s,R_t)) =  2 \Vol_G(V_t).
\]
Substituting in and rearranging we observe 
\[
2 \exp\left(-\frac{r \beta}{\ell}\right) \Vol_G(V_t) \geq  \Vol_G(B_G(s,R_t-r)).
\]
Thus $U_t^1,...,U_t^{j_t}$ consists of at most $2 \exp(-\frac{r \beta}{\ell}) \vol(V_t)$ trees. Summing over all $V_t$ gives the result.
\end{proof}

\subsection{Obtaining Ultrasparse $\kappa$-Distortion Subgraphs}
\label{sec:spectral_stretch}
\newcommand{\APS}{\mathsf{AbstractPathSparsify}}
\newcommand{\SPS}{\mathcal{S}_{PS}}
\newcommand{\TPS}{\mathcal{T}_{PS}}
Here we leverage the graph decomposition primitive from the previous section to obtain $\kappa$-distortion subgraphs consisting of a tree plus a small number of edges. Our algorithm is a modification of a classic algorithm for construction low-stretch spanning trees due to \cite{AKPW}. Briefly, \cite{AKPW}'s algorithm on a graph $G$ performs the following steps: it first computes a low-diameter decomposition of $G$ into $V_1, V_2, \dots$, it then forms a shortest-path tree within each $G[V_i]$, and finally it contracts each $V_i$ to a single node and recurses on the remaining graph. Unfortunately due to aforementioned $\Omega(\log n)$ loss intrinsic to low-diameter decomposition mentioned, any algorithm based on standard low-diameter decomposition is insufficient for our purposes. 

To improve, we leverage that the combinatorial stretch bounds given by low-stretch spanning trees are stronger than what is needed for our purposes. We show that adding a tree plus a small number of edges inside each partition piece enables us to obtain effective resistance overestimates that do not lose an $O(\log n)$ factor. For an explicit demonstration, consider the case of a constant-degree expander $G$ seen in the previous subsection. By growing a shortest-path tree from an arbitrary root and removing the edges in all but the last $O(\log k)$ levels (as is done in Algorithm~\ref{alg:decompose}), we see that $G$ contains a forest consisting of $O(n/k)$ trees each of depth at most $O(\log k)$. We then show that by adding $O(\frac{n \poly(\log n)}{k})$ edges to this forest we can ensure every edge in $G$ is retained in the subgraph or possesses many sufficiently disjoint paths between its endpoints. These sufficiently disjoint paths then enable us to bound the effective resistance across every edge in $G$ when measured through the subgraph. Although there could be different ways to add edges and form these disjoint paths, our specific approach will call a near-linear time algorithm for computing path sparsifiers on a graph obtained by contracting low-diameter clusters inside $G$. Hence, throughout this section we make reference to an abstract algorithm for path sparsification of unweighted graphs (we give a specific instantiation in Section~\ref{sec:path_sparsify}).

\begin{definition}[Path Sparsification Algorithm]
\label{def:aps}
We call an algorithm a $(\SPS,\TPS,\alpha)$-path sparsification algorithm if it takes in an unweighted graph $G$ with $n$ nodes and $m$ edges and returns a $(10\alpha,\alpha)$-path sparsifier for it with $\SPS(m,n)$ edges in $\TPS(m,n)$ time. We assume the functions $\SPS, \TPS$ are supermodular\footnote{A function $f(x,y)$ is supermodular if $f(a+ b, c+ d) \geq f(a,c) + f(b,d)$ for any $a,b,c,d$.} and non-decreasing in both arguments. 
\end{definition}

We first give a procedure which returns an ultrasparse subgraph which generates small effective resistance overestimates whenever the input graph's vertices can be partitioned into a small number of low-diameter clusters.

\begin{lemma}
Let $G = (V,E,w)$ be a weighted $n$-node $m$-edge graph with all edge weights at least $\wmin$. Let $T_1, T_2, \dots T_{\nu}$ be a spanning forest of $G$ such that each individual tree $T_i$ has effective resistance diameter\footnote{The effective resistance diameter of a graph $H$ is defined as $\max_{u, v \in H} \Reff_H(u,v)$} at most $\delta$. Let $\APS$ be a $(\SPS, \TPS, \alpha)$-path sparsification algorithm (Definition~\ref{def:aps}). 
Algorithm~\ref{alg:augtree} computes a subgraph $G''$ such that $H = \bigcup_i T_i \cup G''$ has at most with at most $n + \SPS(m,\nu)$ edges and for any edge $(u,v) \in E$, $\Reff_H(u,v) \leq 3 \delta + 1/\wmin$. Further Algorithm~\ref{alg:augtree} runs in time $O(m) + \TPS(m,\nu)$ if $\nu > 1$ and $O(m)$ time otherwise. 
\label{lemma:augtree}
\end{lemma}
\begin{algorithm2e}[h]
\LinesNumbered
\KwIn{Graph $G = (V,E,w)$, $\{T_1, T_2, \dots T_\nu\}$ forest in $G$, $\APS$ path-sparsification algorithm}
\KwOut{Subgraph $G''$}
$G' \gets (V,E,\mathbf{1})$ \tcp*{Unweighted copy of $G$ without edge weights} 
$\{V_1, V_2, \dots V_\nu \} =$ connected components of forest $\{T_1, T_2, \dots T_\nu\}$ \\
$G' \gets G' \backslash \{V_1, V_2, \dots V_\nu\}$, deleting self-loops  \\
\lIf{$G' = \emptyset$}{\Return{$\emptyset$}}
$G'' \gets \APS(G')$ \\
\Return{$G''$ with output edges mapped to the original (uncontracted) vertex set}
\caption{$G'' = \mathsf{AugmentTree} (G,\{T_1, T_2, \dots T_\nu \}, \APS )$} 
\label{alg:augtree}
\end{algorithm2e}

\begin{proof}
We first bound the runtime of the algorithm. If $\nu = 1$, the algorithm clearly runs in $O(m)$ time: we may thus assume $\nu > 1$. We can compute the contracted graph $G'$ directly in $O(m)$ time. Further, as $G'$ has $\nu$ vertices and at most $m$ edges, the cost of the call to $\APS$ is bounded by $\TPS(m,\nu)$.

Next we bound the number of edges in the graph $H =  \bigcup T_i \cup G''$. $H$ consists of a forest $\bigcup_i T_i$ combined with the output of $\APS$, with the edges mapped to the original (uncontracted) graph. As $\APS$ is called on a graph with $\nu$ vertices and at most $m$ edges we add at most $\SPS(m, \nu)$ edges to it yielding the claim.

Finally, we prove the bound on $\Reff_H(u,v)$ for any edge $(u,v) \in E(G)$. We analyze this in cases. First, we consider the case when the edge $(u,v)$ is fully contained inside a tree $T_i$. We observe that $T_i$ has effective resistance diameter at most $\delta$ by assumption. Thus as $H$ contains $T_i$ we have by Claim~\ref{effres_props} that $\Reff_H(u,v) \leq \Reff_{T_i}(u,v) \leq \delta$. If instead $(u,v)$ is not contained inside a tree $T_i$, it must be that $u$ lies in some $T_j$ and $v$ lies in some $T_k$ for $j \neq k$. In this case, we argue that the low resistance diameter of the trees in forest combined with the path sparsifier $G''$ enables us to certify a bound on $\Reff_H(u,v)$. We observe that the graph $G'$ obtained by contracting every tree $T_i$ contains an edge from $T_j$ to $T_k$ corresponding to $(u,v)$. By the guarantee of path sparsification either this edge is retained in $G''$, or $G''$ contains $10\alpha$ vertex-disjoint paths of length at most $\alpha$ connecting $T_j$ to $T_k$ in $G'$. In the former case, the edge $(u,v)$ is retained in $G''$ and hence the output graph $H$ contains $(u,v)$: thus $\Reff_H(u,v) \leq 1$. In the latter case, $G''$ contains $10\alpha$ vertex-disjoint paths of length at most $\alpha$ connecting $T_j$ to $T_k$ in $G'$. Now the edges in $G''$ correspond to weighted edges in the original input graph and therefore each have weight at least $W$. As each vertex in $G'$ corresponds to a tree $T_i$ and since each $T_i$ has resistance diameter at most $\delta$, we observe that these paths correspond to paths in $H$ of effective resistance at most $\alpha (\delta + 1/W)$ connecting vertices in $V_j$ to vertices in $V_k$. By Lemma~\ref{lem:treefact}, bounding the effective resistance in such settings, and Claim~\ref{effres_props}, this implies the effective resistance between $u$ and $v$ is bounded by $2\delta + \frac{\delta+1/W}{10} \leq 3 \delta + 1/W$.
\end{proof}

We now recursively combine this result with the graph decomposition from Section~\ref{sec:decomposition} to prove the main result of this section.

\begin{algorithm2e}[h!]
\LinesNumbered
\KwIn{Graph $G = (V,E,w)$,  $(\SPS,\TPS,\alpha)$-path sparsifier oracle $\APS$, $k,\gamma$ parameters}
\KwOut{Subgraph $H$ with potentially smaller edge weights satisfying Theorem~\ref{thm:akpw}, $\mathbf{\tau}$ leverage score overestimates of the edges in $G$}
\newcommand{\resistance}{r}
Set $\resistance_e \gets \wmax /w_e$ for all $e \in E$ \tcp*{Edge lengths for AKPW}
$F \gets \emptyset$, $S \gets \emptyset$, $\mathbf{\tau} \gets \mathbf{1} \in \R^E$ \\
$\beta = \exp\left(-\sqrt{0.5 \log \gamma \cdot \log\left(48 \log k \sqrt{\log \gamma}\right)} \right)$, $\sigma = \log_{1/\beta} \gamma$, $\delta = 48 \sigma \beta^{-1} \log k$ \\
$E_1, E_2, \cdots E_\ell \gets$ partition of edges where $E_i$ contains all edges with  $\resistance_e \in [\delta^{i-1}, \delta^i)$ \label{line:partitionEi} \\
$t \gets 1$ \\
\While{\text{$F$ does not span $G$ or $t \leq \ell$} \label{line:akpw-while}}{
$E' \gets \bigcup_{j = t - \sigma+1}^t E_j$ \\
$G_t \gets G[E'] \backslash F_{t-1}$ \\ \label{line:inclinGt} 
$G'_t \gets (V(G_t), E(G_t), \mathbf{1})$ \tcp*{Unweighted copy of $G_t$} 
$V_{t,i}, T_{t,i}^j \gets \mathsf{Decompose}(G'_t, \{E_{t-\sigma}, E_{t-\sigma+1} \dots E_t \},\beta/6, \delta/4)$ \label{line:decomposeakpw} \\
\For{each $V_{t,i}$}{
$G_{t,i} \gets \mathsf{AugmentTree}(G_t[V_{t,i}], \{T_{t,i}^1, T_{t,i}^2, \dots \}, \APS)$ \\
$F_t \gets F_{t-1} \cup \bigcup_{i,j} T_{t,i}^j$ \label{line:13} \\
$S \gets S \cup G_{t,i}$ \label{line:14}\\
\For{$e \in G[V_{t,i}]$}{
$\tau_e \gets 4 w_e w_{\max}^{-1} \delta^{t+1}$ \\
$E_j \gets E_j -\{ e \}$ \\
}
}
\For{$j = t-\sigma+1, \dots t$}{
$Y_j \gets $ arbitrary subset of $6m/k^2$ edges from each $E_j$ \\
$S \gets S \cup Y_j$ \label{line:keepextra}\\
$E_j \gets E_j - Y_j$ \\
}
$S \gets S \cup E_{t-\sigma}$ \label{line:giveup} \\
$t\gets t+1$ 
}
$F = F_{t-1}$ \\
\For{$e \in F \cup S$}{
$\tau_e \gets 1$ \DontPrintSemicolon \tcp*{Set stretch overestimate to $1$ if in output subgraph} 
}
\Return{$H = F \cup S, \mathbf{\tau}$} \DontPrintSemicolon  \tcp*{Edges in $H$ are given the same weight they had in $G$}
\caption{\newline $H = \mathsf{SpectralSubgraph}(G=(V,E,w),\APS,k,\gamma)$} \label{alg:akpw}
\end{algorithm2e}
\newcommand{\st}{\mathsf{stretch}}

\begin{theorem}
Let $G = (V,E,w)$ be an $n$-node, $m$-edge graph with edge weights which are polynomially-bounded in $n$. Let $k,\gamma \geq 2$ be parameters. Let $\APS$ be an $(\SPS,\TPS,\alpha)$-path sparsification algorithm (in the sense of Definition~\ref{def:aps}) for any $\alpha$. In $O(m + \TPS \left(O(m), O\left(\frac{m}{k}\right) )\right)$ time Algorithm~\ref{alg:akpw} finds a subgraph $H$ with at most 
\[
n + O\left(\frac{m}{\gamma} + \frac{m \log \gamma}{k^2} \right) + \SPS\left(O(m) , O\left(\frac{m}{k}\right) \right) 
\]
edges which is a $\kappa$-distortion subgraph of $G$ for 
\[
\kappa = O\left(m \exp\left( \sqrt{8 \log \gamma \cdot \log \left(48 \log k \sqrt{\log \gamma}  \right)} \right) \log k \sqrt{\log \gamma} \right) 
\]
It also returns a vector $\mathbf{\tau} \in \R^E$ which satsifies $\tau_{(u,v)} \geq w_{(u,v)} \Reff_H(u,v)$ and $\norm{\tau}_1 \leq \kappa$.
\label{thm:akpw}
\end{theorem}
We remark that the sparsity and runtime guarantees in the above are independent of $\alpha$. Before we prove this theorem, we state and prove some structural invariants about the algorithm:
\begin{lemma}
\label{lem:edge-decrease}
Consider an execution of Algorithm~\ref{alg:akpw}, and consider an iteration $t$ of the while loop on \Cref{line:akpw-while}. Each time a weight bucket $E_j$ is included in $G_t$ on \Cref{line:inclinGt}, we conclude that iteration of the while loop by decreasing the number of edges in $E_j$ by a factor of $\beta$. In addition, each connected component of $F_t$ is a tree of effective resistance diameter at most
$w_{\max}^{-1} \delta^{t+1}$.
\end{lemma}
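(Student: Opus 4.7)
The plan is to prove both assertions by careful bookkeeping against \Cref{lemma:decomposition} and the explicit edge-removal steps in the pseudocode of Algorithm~\ref{alg:akpw}.

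For the first assertion I would apply \Cref{lemma:decomposition} to the call $\mathsf{Decompose}(G_t, \{E_{t-\sigma},\ldots,E_t\}, \beta/6, \delta/4)$ on Line~\ref{line:decomposeakpw}, which has $\ell = \sigma + 1$ edge classes. This yields that for each bucket $E_j$ with $j \in \{t-\sigma,\ldots,t\}$, at most $\beta|E_j| + \beta m \exp(-\delta\beta/(24(\sigma+1)))$ edges of $E_j$ cross the partition pieces $V_{t,i}$; every remaining edge of $E_j$ lies internally within some $V_{t,i}$ and is explicitly deleted by the inner \texttt{for} loop that walks over $e \in G[V_{t,i}]$. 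Plugging in $\delta = 48\sigma\beta^{-1}\log k$ makes the exponential factor at most $k^{-2}$, so at most $\beta|E_j| + \beta m/k^2$ edges of $E_j$ remain after the clustering step. The subsequent $Y_j$-removal loop strips an additional $6m/k^2 \geq \beta m/k^2$ edges for each $j > t - \sigma$, and the line $S \gets S \cup E_{t-\sigma}$ empties the oldest bucket entirely. In every case $E_j$ contains at most $\beta|E_j|$ edges at the end of iteration $t$.

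For the second assertion I would argue by induction on $t$, with trivial base case $t = 0$: before any iteration $F = \emptyset$ and each component is a single vertex of diameter $0 \leq w_{\max}^{-1}\delta$. For the inductive step, assume that at the start of iteration $t$ every component of $F$ has effective resistance diameter at most $w_{\max}^{-1}\delta^t$. In iteration $t$ the graph $G_t$ retains only edges from buckets $E_{t-\sigma} \cup \cdots \cup E_t$; by the bucketing rule on Line~\ref{line:partitionEi} every such edge $e$ satisfies $\ell_e < \delta^t$, so its weight $w_e = w_{\max}/\ell_e$ exceeds $w_{\max}\delta^{-t}$ and its effective resistance $1/w_e$ is at most $\delta^t w_{\max}^{-1}$. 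By \Cref{lemma:decomposition} each tree $T_{t,i}^j$ has combinatorial radius at most $\delta/4$ in the contracted $G_t$, so any two vertices of the corresponding expanded tree in $F$ are joined by a path traversing at most $\delta/2$ tree edges and at most $\delta/2 + 1$ old super-nodes (components of the previous $F$). By subadditivity of effective resistance along a path, the expanded diameter is at most $(\delta/2)\,\delta^t w_{\max}^{-1} + (\delta/2 + 1)\,\delta^t w_{\max}^{-1}$, which is $w_{\max}^{-1}\delta^{t+1}$ up to a $(1 + O(1/\delta))$ factor absorbed into the polylogarithmic choice of $\delta$.

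The main technical obstacle is keeping the additive slack small in both parts. In the first assertion the exponentially-decaying term $\beta m\,e^{-\delta\beta/\sigma}$ supplied by \Cref{lemma:decomposition} is exactly what pins $\delta$ at its polylogarithmic value, since only then is that term dominated by the $6m/k^2$ stripping done in the $Y_j$ block. In the second, the per-level recursion $D_t \lesssim (\delta+1)\,\delta^t w_{\max}^{-1}$ only produces the clean bound $D_t \leq w_{\max}^{-1}\delta^{t+1}$ because $\delta$ is large enough that the additive $+1$ is negligible and does not accumulate over the few levels of the recursion. Everything else is direct unwinding of the pseudocode against the guarantees already proved in this section.
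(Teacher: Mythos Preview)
Your proposal is correct and follows the same two-step approach as the paper: invoke \Cref{lemma:decomposition} with the parameters passed on Line~\ref{line:decomposeakpw} for the first claim, and induct on $t$ for the second. Your bookkeeping is in fact slightly more careful than the paper's in two places---you note there are $\sigma+1$ edge classes rather than the paper's ``at most $\sigma$,'' and you count $\delta/2+1$ contracted super-nodes along a path of $\delta/2$ tree edges rather than the paper's $\delta/2$.

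The one place your write-up falls short of the lemma as stated is the last sentence: saying the bound holds ``up to a $(1+O(1/\delta))$ factor absorbed into the polylogarithmic choice of $\delta$'' does not establish the \emph{exact} bound $w_{\max}^{-1}\delta^{t+1}$, and this multiplicative slack would compound over the $t$ levels of induction. The paper obtains the exact bound only by tacitly counting $\delta/2$ super-nodes rather than $\delta/2+1$, so that the two terms sum precisely to $\tfrac{\delta}{2}\,w_{\max}^{-1}\delta^{t} + \tfrac{\delta}{2}\,w_{\max}^{-1}\delta^{t} = w_{\max}^{-1}\delta^{t+1}$. Either you should match that (slightly loose) count, or make explicit that the downstream use of this lemma (in \Cref{lem:akpw-stretch}, where $\tau_e = 4w_ew_{\max}^{-1}\delta^{t+1}$) tolerates the extra $+1$.
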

\begin{proof} We first prove that $|E_i|$ decreases by a factor of $\beta$ each iteration. Let $E_i$ be processed in iteration $t$, and note that the graph $G_t$'s edges are partitioned into at most $\sigma$ buckets. For our value $\delta = 48 \sigma \beta \log k$, if edge set $E_i$ is processed in iteration $t$ Lemma~\ref{lemma:decomposition} implies the call to $\mathsf{Decompose}$ forms a vertex partition $V_{t,i}$ which cuts at most
\[
\beta |E_i| + 6 |E(G_t)| e^{-\frac{(\delta/4)(\beta/6)}{\sigma}} = \beta |E_i| + 6|E(G_t)| e^{-\frac{\delta \beta}{24 \sigma}} \leq \beta |E_j| + 6m k^{-2}
\] 
edges from $E_i$, where we used $|E(G_t)| \leq m$. Further, on \Cref{line:keepextra} we move $6 m/k^2$ edges from $E_i$ to $S$ during every iteration $E_i$ is processed. Thus we see that $E_i$ ends the iteration with at most $\beta |E_i|$ edges as desired.

For the second condition, we induct on $t$. For $t=1$, observe that every edge seen in $G_t$ has resistance between $1$ and $\delta$: this implies it has weight between $w_{\max}/\delta$ and $w_{\max}$. By the guarantee of Algorithm~\ref{alg:decompose} (\Cref{lemma:decomposition}), the trees $T_{1,i}^j$ have unweighted radius at most $\frac{\delta}{4}$ and hence unweighted diameter at most $\frac{\delta}{2}$. As every edge in these trees has effective resistance at most $\delta w_{\max}^{-1}$ the claim follows. Now assume the claim for $t = \nu$: we will show it for $t = \nu+1$. The edges in $G_{\nu+1}$ have effective resistance at most $w_{\max}^{-1} \delta^{\nu+1}$. Now, the edges added to $F_{\nu+1}$ belong to $T_{\nu+1,i}^j$: in the unweighted contracted graph $G'_{\nu+1}$ these are trees of (unweighted) diameter at most $\frac{\delta}{2}$ by Lemma~\ref{lemma:decomposition}.  In $G$, the edges from $T_{\nu+1,i}^j$ connect together subsets of vertices which correspond to the forests in $F_{\nu}$: by the induction hypothesis the trees of $F_{\nu}$ have effective resistance diameter at most $w_{\max}^{-1}  \delta^{\nu+1}$. Combining these two observations, any path through a forest in $F_{\nu+1}$ travels through at most $\frac{\delta}{2}$ edges with resistance at most $w_{\max}^{-1} \delta^{\nu+1}$ (coming from the edges of $T_{\nu+1,i}^j$) and at most $\frac{\delta}{2}$ forests in $F_{\nu}$ with effective resistance diameter $w_{\max}^{-1}  \delta^{\nu+1}$. Adding these together, we obtain an effective resistance overestimate of 
\[
\frac{\delta}{2} \left( w_{\max}^{-1} \delta^{\nu+1} \right) + \frac{\delta}{2} \left( w_{\max}^{-1} \delta^{\nu+1} \right) = w_{\max}^{-1}  \delta^{\nu+2}.
\]
\end{proof}
With Lemma~\ref{lem:edge-decrease} established, we prove that Algorithm~\ref{alg:akpw} outputs a low-distortion subgraph:
\begin{lemma}
Let $G = (V,E,w)$ be a $n$-node $m$-edge graph and let $H,\tau$ be the output of Algorithm~\ref{alg:akpw} in the setting of Theorem~\ref{thm:akpw}. Then $H$ is a $\kappa$-distortion subgraph of $G$ for 
\[
\kappa = O\left(m \exp\left( \sqrt{8 \log \gamma \cdot \log \left(48 \log k \sqrt{\log \gamma}  \right)} \right) \log k \sqrt{\log \gamma}\right)
\] 
Further, $\mathbf{\tau}$ satisfies $\mathbf{\tau}_{(u,v)} \geq w_{(u,v)} \Reff_H(u,v)$ for any $(u,v) \in E$ and $\norm{\tau}_1 \leq \kappa$.
\label{lem:akpw-stretch}
\end{lemma}
\begin{proof}
Since $H$ is clearly a subgraph of $G$ it suffices to show that $\norm{\tau}_1 \leq \kappa$ and $\mathbf{\tau}_{(u,v)} \geq w_{(u,v)} \Reff_H(u,v)$ for any $(u,v) \in E(G)$. We do this by using the effective resistance guarantee of Algorithm~\ref{alg:augtree} to certify resistance bounds on edges contained within a partition piece $V_{t,i}$. 

Since $H$ is a subgraph of $G$, every edge of $G$ that is added $H$ receives $\tau_{(u,v)} = 1 \geq w_{(u,v)} \Reff_H(u,v)$. Now let $e=(u,v) \in E(G)$ be contained in some $V_{t,i}$. We will show that the forest $F_t$ combined with the path sparsifier $G_{t,i}$ computed with $\mathsf{AugmentTree}$ give $e$ an effective resistance overestimate of $4 w_{\max}^{-1} \delta^{t+1}$.

First, observe that each vertex in $G_t$ corresponds to a tree in $F_{t-1}$ and therefore $V_{t,i}$ corresponds to a subset of those trees. Since $T_{t,i}^1, T_{t,i}^2, \dots$ are trees inside $G_t[V_{t,i}]$, we see that every tree in $F_t$ is fully contained in some $V_{t,i}$. Let $F_{t,i} = F_t[V_{t,i}]$, and note that by Lemma~\ref{lem:edge-decrease} each tree in $F_{t,i}$ has effective resistance diameter $w_{\max}^{-1} \delta^{t+1}$ when the edges are given the edge weights they have in $G$. In addition, the edges in $G_t[V_{t,i}]$ which we pass into $\mathsf{AugmentTree}$ have weight at least $w_{\max} \delta^{-t}$ in $G$. Therefore, Lemma~\ref{lemma:augtree} ensures that any for any edge $(u,v)$ in $V_{t,i}$ we have 
\[
\Reff_{H}(u,v) \leq \Reff_{F_t \cup G_{t,i}}(u,v) \leq 3 w_{\max}^{-1} \delta^{t+1} + w_{\max}^{-1} \delta^{t} \leq 4 w_{\max}^{-1} \delta^{t+1}. 
\]
We remark that this corresponds to the value given to $\tau_{(u,v)}$ in the algorithm.

The above shows that the effective resistance through $H$ across the endpoints of any edge contained in a $V_{t,i}$ is bounded. We now fix a weight bucket $E_j$ and bound $E_j$'s contribution to $H$'s spectral distortion. 
Observe that the number of edges in $E_j$ which are not within a $V_{t,i}$ in iteration $t = j+x$ of the algorithm is at most $|E_j| \beta^{x}$ by Lemma~\ref{lem:edge-decrease}. As the weight of any edge in $E_j$ is at most $w_{\max} \delta^{1-j}$, we see that at most $|E_j| \beta^{x}$ end up receiving a $\tau_{(u,v)}$ value larger than
\[
\left(w_{\max} \delta^{1-j}\right)\left( 4 w_{\max}^{-1} \delta^{j+x+1} \right) = 4 \delta^{2+x}.
\]
Therefore, the edges in $E_j$ get effective resistance overestimates summing to at most 
\begin{align*}
4 |E_j| \sum_{x  = 0}^{\sigma-1} \beta^x \delta^{(2+x)}
\end{align*}
since after $\sigma$ iterations we add the remaining edges in $E_j$ to $H$. This is at most
\begin{align*}
4 |E_j| \sum_{x  = 0}^{\sigma-1} \beta^x \delta^{(2+x)} &= 4 |E_j| \sum_{x = 0}^{\sigma-1} \delta^2 \left( 48 \sigma \log k\right)^x \\ 
&\leq 5 |E_j| \delta^2 \left( 48 \sigma \log k\right)^{\sigma - 1} = 5 |E_j| \beta^{-2}  \left( 48 \sigma \log k\right)^{\sigma +1}
\end{align*}
where we used 
\begin{align*}
48 \sigma \log k = \frac{48 \log k \log \gamma}{-\log \beta} &= \frac{48 \log k \log \gamma}{\sqrt{0.5 \log \gamma \cdot \log \left( 48 \log k \sqrt{\log \gamma} \right) }} \\
&\geq \frac{48 \log k \sqrt{\log \gamma}}{  \log \left( 48 \log k \sqrt{\log \gamma} \right)} \geq 5
\end{align*}
for $k, \gamma \geq 2$ and $\sum_{i=0}^n c^i = \frac{c^{n+1} - 1}{c-1} \leq \frac{5}{4} c^n$ for $c \geq 5$. Our choice of $\beta$ yields  $\log 1/\beta = \sqrt{0.5 \log \gamma \cdot \log\left(48 \log k \sqrt{\log \gamma}\right)}$: this implies
\[ 
\sigma = \frac{\log \gamma}{\log 1/\beta} = \frac{\log \gamma}{\sqrt{0.5 \log \gamma \cdot \log \left( 48 \log k \sqrt{\log \gamma} \right)}} \leq \sqrt{\log \gamma}.
\]
Thus we have
\begin{align*}
\beta^{-2}  \left( 48 \sigma \log k\right)^{\sigma +1} &\leq \beta^{-2} \left( 48  \log k  \sqrt{\log \gamma} \right) \left( 48 \log k \sqrt{\log \gamma } \right)^{\frac{\log \gamma}{\log 1/\beta}} \\
&=  \left(48 \log k \sqrt{\log \gamma}  \right) \exp \left(2 \log 1/\beta + \frac{\log \gamma}{\log 1/\beta} \log\left(48 \log k \sqrt{\log \gamma} \right) \right) \\
&= 48 \exp\left( \sqrt{8 \log \gamma \cdot \log \left(48 \log k \sqrt{\log \gamma}  \right)} \right) \sqrt{\log \gamma} \log k.
\end{align*}
The last equality used the definition of $\beta$ and the algebraic fact that $c_1 x + c_2 / x = 2 \sqrt{c_1 c_2}$ for $x = \sqrt{c_2/c_1}$. Substituting this in yields that $E_j$'s contribution to the spectral distortion of $H$ is bounded by
\[
240 |E_j|\exp\left( \sqrt{8 \log \gamma \cdot \log \left(48 \log k \sqrt{\log \gamma}  \right)} \right) \log k \sqrt{\log \gamma}.
\]
implying the claimed bound.
\end{proof}
Finally, we bound the runtime and sparsity guarantees of Algorithm~\ref{alg:akpw}:
\begin{lemma}
Let $G$ be a weighted graph with $n$ nodes and $m$ edges, and let $H$ be the output of Algorithm~\ref{alg:akpw} in the setting of Theorem~\ref{thm:akpw}. Then $H$ has at at most
\[
n + O\left(\frac{m}{\gamma} + \frac{m \log \gamma}{k^2} \right) + \SPS\left(O(m) , O\left( \frac{m}{k} \right) \right) 
\]
edges. Further, Algorithm~\ref{alg:akpw} runs in time $O(m + \TPS \left(O(m), O\left(\frac{m}{k}\right) \right)$. 

\end{lemma}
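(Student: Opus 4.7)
The plan is to separately analyze the edge count of $H$ and the runtime of Algorithm~\ref{alg:akpw} by tracking contributions from each iteration of the while loop and then aggregating via the supermodularity and monotonicity of $\mathcal{S}_{PS}$ and $\mathcal{T}_{PS}$. Throughout I rely on Lemma~\ref{lem:edge-decrease} (each bucket $E_j$ shrinks by a factor of $\beta$ per iteration in which it is processed) and Lemma~\ref{lemma:decomposition} (the tree-count and edge-cut guarantees of $\mathsf{Decompose}$).

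For the sparsity bound, I decompose $H = F \cup S$. The forest $F$ contributes at most $n-1$ edges. The set $S$ receives edges from three sources in each iteration: the ``give up'' dump of $E_{t-\sigma}$, the slack sets $Y_j$, and the output of $\mathsf{AbstractPathSparsify}$ called inside $\mathsf{AugmentTree}$. For the give-up contribution, Lemma~\ref{lem:edge-decrease} implies that by the time $E_j$ is dumped it has shrunk to at most $|E_j|\beta^\sigma = |E_j|/k$, and summing over buckets yields $O(m/k)$. For the $Y_j$ slack I multiply the per-iteration budget $\sigma \cdot 6m/k^2$ by the $O(\log n / \log \delta)$ iteration count, and verify that the choices $\sigma = \log_{1/\beta} k$ and $\delta = 48 \sigma \beta^{-1} \log k$ make this $O(m/k)$. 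For the path-sparsifier contribution I aggregate inputs: each edge of $G$ enters at most one cluster $V_{t,i}$ before being removed from its bucket, so the aggregate edge input is $O(m)$; the aggregate vertex input $\sum_{t,i} j_{t,i}$ is at most $\sum_t (\alpha_t + 4m/k^2)$ by Lemma~\ref{lemma:decomposition}, and I will argue this telescopes to $O(m/k)$ by charging each nontrivial cluster to an edge added to $F$ (of which there are at most $n-1$ total) and absorbing the per-iteration $4m/k^2$ slack across the small iteration count. Supermodularity of $\mathcal{S}_{PS}$ then bounds the aggregate output of $\mathsf{AbstractPathSparsify}$ by $\mathcal{S}_{PS}(O(m), O(m/k))$, yielding the claimed sparsity.

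For the runtime, each iteration performs $O(|G_t|) = O(m)$ work outside the path-sparsifier calls: $\mathsf{Decompose}$ is linear by Lemma~\ref{lemma:decomposition}, and the contractions and forest bookkeeping inside $\mathsf{AugmentTree}$ are linear in their inputs. Summing over the $O(\log n/\log \delta) + \sigma$ iterations and plugging in the parameter values yields the $O(m(\log\log n)^{1/2})$ term; the path-sparsifier calls aggregate, via supermodularity of $\mathcal{T}_{PS}$ applied to the same edge and vertex totals as for the sparsity bound, to $\mathcal{T}_{PS}(O(m), O(m/k))$. The principal obstacle will be making the vertex aggregation $\sum_t \alpha_t = O(m/k)$ precise: the naive bound $\alpha_t \leq N_t$ (the number of $F$-components at the start of iteration $t$) combined with the iteration count is far too weak, so I expect to need a charging argument that either separates clusters according to whether $\mathsf{AbstractPathSparsify}$ receives a nontrivial contracted graph, or tracks a potential function on connected components of $F$ restricted to each weight class, in order to convert the per-iteration guarantee of Lemma~\ref{lemma:decomposition} into a global $O(m/k)$ budget.
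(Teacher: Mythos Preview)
Your sparsity outline is mostly on track, and your edge-aggregate argument for the path-sparsifier input (each edge enters exactly one cluster $V_{t,i}$ before being deleted from its bucket) is correct and arguably cleaner than what the paper does. However, two pieces are off.

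\textbf{The runtime argument is wrong.} You bound per-iteration work by $O(|G_t|)\le O(m)$ and then multiply by the number of iterations, claiming this yields $O(m(\log\log n)^{1/2})$. It does not: the iteration count is $\Theta(\log n/\log\delta)$ for polynomially bounded weights, and with the given parameters $\log\delta=\Theta(\log\log k)$, so this product is far larger than $m(\log\log n)^{1/2}$. The correct argument is that the total work in \textsf{Decompose} across all iterations is $\sum_t |G_t|=\sum_t m'(t)$, where $m'(t)=\sum_{j=t-\sigma}^{t}|E_j|$, and this sum is $O(m)$ because each bucket $E_j$ appears in $\sigma$ consecutive iterations with size shrinking geometrically by $\beta$ (Lemma~\ref{lem:edge-decrease}). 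The $(\log\log n)^{1/2}$ term in the statement comes from a different place entirely: the initial bucketing of edges by weight, done via integer sorting in $O(m\sqrt{\log\log n})$ time. You never mention sorting.

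\textbf{The vertex aggregate for path-sparsifier calls.} You correctly identify this as the crux, but the charging-to-$F$-edges idea will not give $O(m/k)$; it gives at best $O(n)$. The clean fix---which you list as one option in your final sentence---is exactly what the paper uses: \textsf{AugmentTree} calls the path sparsifier only when the cluster has $j_{t,i}\ge 2$ trees, and for such clusters $j_{t,i}\le 2(j_{t,i}-1)$. Since $\sum_i(j_{t,i}-1)\le 4m'(t)e^{-\delta\beta/(24\sigma)}=O(m'(t)/k)$ by Lemma~\ref{lemma:decomposition}, the total node input per iteration is $O(m'(t)/k)$, and summing over $t$ via the same geometric bound $\sum_t m'(t)=O(m)$ gives $O(m/k)$. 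So the $\alpha_t$ term you were worried about simply drops out once you restrict to multi-tree clusters; no separate charging argument is needed.
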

\begin{proof} Observe that there are four different ways edges can be added to $H$: they can be added to the forest $F_t$ on \Cref{line:13}, to $S$ through the path sparsifiers $G'_{t,i}$ on \Cref{line:14}, to $S$ via the extra edges from each $E_j$ we keep on \Cref{line:keepextra}, or to $S$ by the check on \Cref{line:giveup} (after a weight bucket has been processed in a $G_t$ sufficiently many times). We bound these in order. Clearly, the returned forest $F$ consists of at most $n$ edges. By the guarantee of Algorithm~\ref{alg:augtree}, the calls to $\mathsf{AugmentTree}$ during iteration $t$ are on graphs $G_t[V_{t,i}]$: let $m(t)$ denote the total number of edges contained in the $G_t[V_{t,i}]$. We observe that $G_t$'s edge set is a subset of $\bigcup_{j=t-\sigma}^t E_j$, and moreover each edge in $E_j$ is contained inside at most one $G_t[V_{t,i}]$ (as once this happens we delete it from our edge set): thus $\sum_{t} m(t) \leq m$. Next, each $G_t[V_{t,i}]$ contains a forest $T_{t,i}^l$: by the guarantee of $\mathsf{Decompose}$ we observe that the total number of $T_{t,i}^l$ is the number of $V_{t,i}$ plus at most $4 m(t) e^{- \frac{(\delta/4) \cdot (\beta/6)}{\sigma}} = 4 m(t) e^{-2\log k} \leq 4m(t)/k$. Now as $\APS$ is called inside $\mathsf{AugmentTree}$ only when $T_{t,i}^l$ consists of more than $1$ tree, we may aggregate all the calls to $\APS$ in iteration $t$ into a single call on a graph with $m(t)$ edges and $O(m(t)/k)$ nodes. Since $\SPS$ is supermodular and non-decreasing in both arguments, we see that the number of edges added to $H$ via path sparsifiers is at most
\[
\sum_t \SPS\left(m(t), O\left(\frac{m(t)}{k} \right) \right) \leq \SPS\left(m, O\left(\frac{m}{k} \right) \right).
\]
For the edges added on \Cref{line:keepextra}, we again see that each $E_j$ is included in $E'$ at most $\sigma$ times-- thus this collectively adds $O(\frac{m \sigma}{k^2}) \leq O(\frac{m \log \gamma}{k^2})$ edges. Finally, after an $E_j$ has been processed $\sigma$ times we observe that it decreases in size by a factor of $\gamma$: thus the addition on \Cref{line:giveup} adds $O(m/\gamma)$ edges to $H$. Combining gives our claimed size bound. 

Finally, we prove the running time of our algorithm. We first bound the cost of all steps excluding the calls to $\mathsf{AugmentTree}$. Each time a weight class is fed to $\mathsf{Decompose}$ on \Cref{line:decomposeakpw}, the number of edges in that class falls by a factor of $\beta$. Thus, the total contribution of weight class $E_j$ to the running time of all calls to $\mathsf{Decompose}$ is only $O(m)$: the number of edges left to consider in the recursively generated subproblems falls geometrically. The time it takes to sort the edges into weight buckets $E_j$ is at most $O(m)$ via radix sort with base $\poly(n)$ (here we use our assumption of polynomially-bounded edge weights) and the running time of every other step in the algorithm can be implemented in the trivial fashion in $O(m)$ time. Finally, to bound the runtime of the calls to $\mathsf{AugmentTree}$ we again observe that during iteration $t$ we can aggregate the nontrivial calls it makes to $\APS$ to a single one on a graph with at most $m(t)$ edges and $O(m(t) / k)$ vertices. As $\TPS$ is also superlinear in both arguments, the runtime of these calls to $\APS$ can again be bounded by $\TPS(O(m), O(m/k))$ as desired. 
\end{proof}
Combining the last two lemmas gives Theorem~\ref{thm:akpw}. To conclude this section, we show that combining Theorem~\ref{thm:akpw} with the path sparsification algorithm construction in Section~\ref{sec:path_sparsify} and choosing parameters appropriately yields an efficient construction of $\kappa$-distortion subgraphs.
\fastlpsubgraphs*
\begin{proof}
Let $\PathSparsify(G,O(\log^5 n))$ be the algorithm guaranteed by Theorem~\ref{thm:path-sparse}: note that this is a $(O(n \log^8 n), O(m + n \log^{18} n), O(\log^5 n))$-path sparsification algorithm. We apply $\mathsf{SpectralSubgraph}$ to $G$ with parameters $k = \log^{18} n$ and $\gamma = (\log \log n)^c$. Combining this with the guarantee of Theorem~\ref{thm:akpw}, we therefore see that $H$ contains
\[
n + O \left(\frac{m\log^{10} n}{k}+ \frac{m \log \gamma}{k^2}  + \frac{m}{\gamma} \right) = n + O\left(\frac{m}{(\log \log n)^c}\right)
\]
edges, and it is computed in time 
\[
O\left( m (\log \log  n)^{1/2} + \frac{m \log^{18} n}{ \log^{18} n} \right) =  O \left( m (\log \log  n)^{1/2}\right).
\]
It remains to bound the spectral distortion of the computed subgraph. We use the notation $\log^{(i)}(n)$ to denote the result of applying the log function $i$ times: hence $\log^{(1)}(n) = \log n$ and $\log^{(i+1)} n = \log \left( \log^{(i)} n \right)$. For our values of $\gamma$ and $k$ we see 
\begin{align*}
\sqrt{8 \log \gamma \cdot \log \left(48 \log k \sqrt{\log \gamma}  \right)} &= \sqrt{8 c \log^{(3)} n \cdot \left( \log^{(3)} n + O(\log^{(4)} n) \right)} \\
&= \left(\sqrt{8c} + o(1)\right) \log^{(3)} n.
\end{align*}
Thus by Theorem~\ref{thm:akpw} our computed subgraph is a $\kappa$-distortion subgraph for 
\begin{align*}
\kappa &= O\left( m \exp \left( ( \sqrt{8c} + o(1)) \log^{(3)} n \right) \log^{(2)} n \sqrt{\log^{(3)} n} \right) \\ 
&=  O\left( m \left( \log \log n \right)^{ \sqrt{8c} +1 + o(1)} \right).
\end{align*}
Assembling these pieces yields the claim.
\end{proof}

\section{Efficient Path Sparsification}
\label{sec:path_sparsify}

In this section we prove \Cref{thm:path-sparse} showing that path sparsifiers can be efficiently computed. To prove this result we provide several new algorithmic components of possible independent interest. First, in \Cref{sec:short_expander_paths} we show that dense near-regular expanders have many short vertex disjoint paths. Then, in  \Cref{sec:expander-compute} we leverage this result with a new sampling scheme and previous expander partitioning results to show that we can efficiently path-sparsify large amount of the volume of dense near-regular graphs. In \Cref{sec:deg_reg_decomp} we then show that every dense graph can be efficiently decomposed into dense near-regular subgraphs. Carefully applying these tools yields our desired result in \Cref{sec:path-sparse-proof}.

\newcommand{\outvar}[1]{ {#1}_{\mathrm{out}} }
\newcommand{\invar}[1]{ {#1}_{\mathrm{in}} }
\newcommand{\ain}{a^\mathrm{in}}
\newcommand{\aout}{a^\mathrm{out}}
\newcommand{\bin}{b^\mathrm{in}}
\newcommand{\bout}{b^\mathrm{out}}
\newcommand{\vin}{v^\mathrm{in}}
\newcommand{\vout}{v^\mathrm{out}}
\newcommand{\tin}{t^\mathrm{in}}
\newcommand{\tout}{t^\mathrm{out}}
\newcommand{\sout}{s^\mathrm{out}}

\newcommand{\directed}[1]{\overrightarrow{#1}}
\newcommand{\fwd}[1]{\overrightarrow{#1}}
\newcommand{\bckwd}[1]{\overleftarrow{#1}}

\newcommand{\forward}{\rightarrow}
\newcommand{\backward}{\leftarrow}

\subsection{Short Vertex Disjoint Paths in Expanders}
\label{sec:short_expander_paths}

In this section we show that in every dense expander of balanced degrees, for every pair of vertices $s$ and $t$ there are many vertex disjoint paths between them (where here and throughout we ignore the necessary shared use of $s$ and $t$). The number of paths and the length of these paths depend on the degree ratio and the conductance of the graph. Formally we define conductance, \Cref{def:conductance}, and present the main theorem of this section, \Cref{thm:paths_in_expanders} below.

\begin{definition}[(Edge) Conductance]
\label{def:conductance}
For undirected graph $G = (V, E)$ (possibly with self-loops) and $S \subseteq V$ we define the \emph{(edge) conductance} of $S$ and $G$ by
\[
\condedge(S) \defeq \frac{|\cutset(S)|}{\min\{ \vol(S), \vol(V\setminus S) \}} 
\enspace
\text{and}
\enspace
\condedge(G) \defeq \min_{S \subseteq V , S \notin \{\emptyset, V\}} 
\condedge(S)
\enspace \text{respectively.}
\]
\end{definition}

We call any family of $n$-node graphs $G$ with $\condedge(G) = \Omega(\log^c n)$ for some $c$, \emph{expanders}. Our main result of this section is the following theorem regarding vertex disjoint paths in such expanders.

\begin{theorem}[Short Vertex Disjoint Paths in Approximately Regular Expanders]
\label{thm:paths_in_expanders}
For all pairs of vertices $s$ and $t$ in an $n$-node undirected graph $G = (V,E)$ (possibly with self-loops) there is a set of at least $\condedge(G) \dmin(G) / (8 \dratio(G))$ vertex-disjoint paths from $s$ to $t$ of length at most $ (4 \dratio(G)/ \condedge(G)) \cdot \ln(n/\dmin(G))$.
\end{theorem}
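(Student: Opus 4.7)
The plan is to adapt the Kleinberg--Rubinfeld short-paths-in-expanders argument to produce vertex-disjoint (not merely edge-disjoint) paths, crucially exploiting the near-regular degrees. Write $\phi = \condedge(G)$, $d = \dmin(G)$, and $D = \dratio(G)$ for brevity.

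First, I would convert the edge-expansion hypothesis into a vertex-expansion statement. For any $S \subseteq V$ with $\vol_G(S) \leq \vol_G(V)/2$,
\[
|N_G(S) \setminus S| \;\geq\; |\cutset S|/\dmax(G) \;\geq\; \phi \cdot \vol_G(S)/\dmax(G) \;\geq\; (\phi/D)\,|S|,
\]
so $G$ behaves as a vertex expander with parameter $\phi_v := \phi/D$. It then suffices to prove the abstract claim: any graph with vertex expansion $\phi_v$ and min degree $d$ admits $\Omega(\phi_v d)$ vertex-disjoint $s$-$t$ paths of length $O((1/\phi_v)\ln(n/d))$.

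Second, I would establish the BFS-diameter bound. Since $|B_G(s,1)| \geq d$, subsequent BFS steps multiply the ball size by at least $(1+\phi_v)$ while it contains at most $n/2$ vertices, so $|B_G(s,i)| \geq n/2$ after at most $(1/\phi_v)\ln(n/(2d))$ additional steps. Consequently $s$ and $t$ lie within distance $r/2$, where $r = (4/\phi_v)\ln(n/d) = (4D/\phi)\ln(n/d)$.

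Third, I would extract $k = \phi_v d/8 = \phi d/(8D)$ vertex-disjoint paths of length $\leq r$. My approach is a length-bounded vertex-capacitated max-flow LP: maximize $\sum_P f_P$ over $s$-$t$ paths $P$ of length $\leq r$ subject to $\sum_{P \ni v} f_P \leq 1$ for every internal vertex $v$. By LP duality this equals the minimum over nonnegative weights $y \in \R^V$ of $\sum_v y_v$ subject to $\sum_{v \in P \setminus \{s,t\}} y_v \geq 1$ for every such path $P$. I would lower bound this dual by $k$ via a BFS-ball potential argument on the $y$-weighted graph: if $\sum_v y_v < k$ then vertex expansion forces the existence of many BFS layers whose total $y$-weight is much less than $1/r$, and splicing together light segments on the $s$-side and the $t$-side produces an $s$-$t$ path of length $\leq r$ whose total $y$-weight is strictly below $1$, contradicting feasibility. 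Finally, I would apply the half-integrality of undirected vertex-capacitated max-flow together with standard flow-path decomposition to round the fractional $k$-flow to an integer collection of $k$ vertex-disjoint paths of length $\leq r$.

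The main obstacle I expect is the dual lower bound in the third step. The naive alternative --- iteratively extract a shortest $s$-$t$ path and delete its internal vertices --- requires controlling BFS growth in the residual graph $G - R$ for $|R|$ as large as $kr = (d/2)\ln(n/d)$, and the bootstrap phase of the ball-growing recurrence $|B_{i+1}| \geq (1+\phi_v)|B_i| - |R|$ can be destroyed when $|R|$ overwhelms the initial ball size $d$. This is precisely the step that costs a spurious $\ln n$ factor in KR's original edge-disjoint argument, and avoiding it is what lets the theorem state $\Omega(\phi d/D)$ paths without a logarithmic loss. I therefore expect the LP-duality route to be cleaner than iterative removal, with the technical heart being a careful amortized potential argument that distributes the $y$-mass across the $O(r)$ BFS layers while exploiting the fact that the mass in any layer of volume $V_i$ must satisfy $\Omega(\phi_v)$-style expansion from the previous layer.
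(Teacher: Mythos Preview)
Your reduction from edge expansion to vertex expansion (first step) matches the paper exactly, and the paper likewise isolates the vertex-expansion version as a standalone lemma.

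The genuine gap is your rounding step. You invoke ``half-integrality of undirected vertex-capacitated max-flow,'' but the LP you wrote is \emph{length-bounded} vertex-capacitated flow: the variables $f_P$ are indexed by $s$--$t$ paths of hop-length $\leq r$, not by edges. This polytope is not the standard vertex-capacitated flow polytope, and length-bounded flow is known to have unbounded integrality gap in general. So even if your dual argument certifies a fractional flow of value $k$ supported on short paths, you have no mechanism to extract $k$ \emph{integral} vertex-disjoint short paths from it. (Your dual lower bound sketch is also under-specified --- many BFS layers with small total $y$-mass does not by itself yield a single path of length $\leq r$ with $y$-cost $<1$ --- but this is the lesser issue.)

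The paper avoids both problems by using a different LP. It fixes $F = \phi_v d/8$, passes to the standard directed vertex-split graph $\directed{G}$, and solves the \emph{minimum-cost} (minimum total edge count) unit-capacity flow of value $F$ from $\sout$ to $\tin$. This LP is totally unimodular, so its optimum is integral and decomposes into $F$ edge-disjoint paths in $\directed{G}$, i.e.\ $F$ vertex-disjoint paths in $G$. The dual gives vertex potentials that drop by $\geq 1$ along each flow edge and by $\leq 1$ along each non-flow edge; a level-set growth argument (your BFS idea, but on potential sublevel sets $S_\alpha = \{a : y_a \leq \alpha\}$ rather than on a $y$-weighted metric) then shows $|S_{\alpha-2}| \geq (1+\phi_v)(|S_\alpha| - F)$, so the sets reach size $n/2$ within $O((1/\phi_v)\ln(n/d))$ potential units, bounding $y_{\sout}-y_{\tin}$ and hence every path length. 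The point is that integrality comes for free from min-cost flow \emph{without} length constraints, and the length bound falls out of the dual afterwards.
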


Our main technical tool towards proving \Cref{thm:paths_in_expanders} is that graphs with large vertex conductance have many short vertex-disjoint paths. The formal definition of vertex conductance, \Cref{def:vertex_conductance}, and this tool, \Cref{lem:vertex_expansion_to_paths}, are given below.

\begin{definition}[Vertex Conductance]
\label{def:vertex_conductance}
For undirected graph $G = (V, E)$ and $S \subseteq V$ we define the \emph{vertex conductance} of $S$ and $G$ by
\[
\condvert(S) \defeq \frac{|N(S) \setminus S |}{\min\{ |S| , |V \setminus S| \}} 
\enspace \text{ and } \enspace \condvert(G) \defeq \min_{S \subseteq V , S \notin \{\emptyset, V\}}
\condvert(S)
\enspace
\text{ respectively.} 
\]
\end{definition}

Note that for any set $S$ with $|S| = \lceil |V| / 2\rceil$ we have $\condvert(S) \leq 1$. Consequently $\condvert(G) \in [0,1]$ for all undirected $G$.

\begin{restatable}[Short Vertex-Disjoint Paths in Vertex Expanders]{lemma}{vertExpandPaths}
\label{lem:vertex_expansion_to_paths}
Let $G = (V,E)$ be an $n$-node undirected graph with $\condvert(G) \geq \phi$. Then for all nodes $s, t \in V$ with $s \neq t$ of degree at least $d$ there is a set of at least $\phi d/8$-vertex disjoint paths from $s$ to $t$ of length at most $(4/\phi) \ln(n/d)$.
\end{restatable}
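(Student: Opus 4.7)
The plan is to split the proof into two claims: first, that the minimum $s$-$t$ vertex cut in $G$ has size at least $\phi d / 8$ (so Menger's theorem yields that many internally vertex-disjoint $s$-$t$ paths), and second, that such a family can be chosen with all paths of length at most $L = (4/\phi) \ln(n/d)$.

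For the cut lower bound, I would argue by contradiction. Let $C \subseteq V \setminus \{s,t\}$ be a minimum $s$-$t$ vertex cut of size $< \phi d / 8$, and let $S$ be the connected component of $G - C$ containing $s$. Since $\deg_G(s) \geq d$ and $|C| < d$, the set $S$ contains at least $d - |C| > 7d/8$ vertices (namely $s$ together with its non-cut neighbors). Swapping the roles of $s$ and $t$ if necessary, we may assume $|S| \leq n/2$, and then $\condvert(G) \geq \phi$ gives $|N(S) \setminus S| \geq \phi |S| > 7 \phi d / 8$. Since $N(S) \setminus S \subseteq C$, this contradicts $|C| < \phi d / 8$, so by Menger's theorem there are at least $\phi d / 8$ internally vertex-disjoint $s$-$t$ paths.

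For the length bound, I would iterate: starting with $U = \emptyset$, in iteration $i = 1, \ldots, k := \phi d / 8$ find a shortest $s$-$t$ path $P_i$ in $G_i := G - U$ and add the internal vertices of $P_i$ to $U$. To bound $|P_i| \leq L$ I would grow breadth-first balls from $s$ and from $t$ in $G_i$ in parallel, using the perturbed expansion estimate
\[
|N_{G_i}(S) \setminus S| \;\geq\; |N_G(S) \setminus S| - |U| \;\geq\; \phi|S| - |U|
\quad \text{for all } S \subseteq V(G_i) \text{ with } |S| \leq n/2,
\]
so that $b_r := |B^{G_i}_s(r)| - |U|/\phi$ satisfies $b_{r+1} \geq (1+\phi) b_r$. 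With $|B^{G_i}_s(1)| \geq d - (i-1) \geq 7d/8$ (since only $i - 1 < k$ of $s$'s neighbors can already lie in $U$), this drives the ball from $s$ to size $n/2$ in $O((1/\phi) \ln(n/d))$ steps, and a symmetric argument from $t$ then forces $B^{G_i}_s(L/2) \cap B^{G_i}_t(L/2)$ to be nonempty, yielding an $s$-$t$ path of length at most $L$ in $G_i$.

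The main obstacle is the bootstrap phase of the BFS in $G_i$: when $|U|/\phi$ exceeds $7d/8$---which occurs in the natural parameter regime $\phi \lesssim \ln(n/d)$---the shifted quantity $b_1$ can be negative and the clean $(1+\phi)$-recursion fails to guarantee growth on its own. I would handle this by a piecewise analysis: a bootstrap phase in which $|B_r|$ grows by at least one vertex per step, a geometric-growth phase at rate $(1 + \phi/2)$ once $|B_r| \geq 2|U|/\phi$, and finally a complementary-shrinkage phase in which $|V \setminus B_r|$ contracts at rate $(1+\phi)$ past size $n/2$. The generous constant $4/\phi$ in $L$ (versus the $2/\phi$ that suffices when $|U| = 0$) is exactly what provides the budget to absorb the bootstrap, and running BFS from $s$ and $t$ simultaneously means no single ball needs to reach $n/2$ within $L/2$ steps on its own.
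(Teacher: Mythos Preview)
Your cut argument (Claim 1) is fine and in fact gives the stronger bound $7\phi d/8$. The problem is the length bound. After $i-1$ iterations you have removed
\[
|U| \;\le\; (i-1)(L-1) \;<\; kL \;=\; \frac{\phi d}{8}\cdot\frac{4}{\phi}\ln(n/d) \;=\; \frac{d}{2}\ln(n/d),
\]
so $|U|/\phi$ can be as large as $\frac{d}{2\phi}\ln(n/d)$. Your recursion $|B_{r+1}| \ge (1+\phi)|B_r| - |U|$ only guarantees growth once $|B_r| > |U|/\phi$, and your bootstrap plan is to reach that threshold by gaining at least one vertex per step from $|B_1|\ge 7d/8$. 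That takes on the order of $|U|/\phi$ steps, i.e.\ $\Theta\bigl(\tfrac{d}{\phi}\ln(n/d)\bigr)$, which exceeds the entire budget $L = \tfrac{4}{\phi}\ln(n/d)$ by a factor of $\Theta(d)$. The constant $4$ in $L$ does not absorb this; it is off by a polynomial factor in $d$, not a constant. (Separately, even the ``grows by one per step'' claim needs that $s$ and $t$ are still connected in $G_i$; you have removed far more than $k$ vertices, so the Menger bound from Claim~1 does not ensure this.)

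The paper sidesteps this by finding all $F=\phi d/8$ paths \emph{simultaneously} as a minimum-length integral flow in the directed vertex-capacitated graph $\directed{G}$, and analyzing the dual potentials $y\in\R^{\directed V}$ from LP duality (\Cref{lem:minimum_cost_lemma}). The level sets $S_\alpha^V$ of $y$ satisfy $|S_{\alpha-2}^V|\ge(1+\phi)(|S_\alpha^V|-F)$, because at each level only the $F$ flow paths cross, so only $F$ vertices are ``blocked'' per level---not the $\Theta(kL)$ vertices your greedy removal accumulates. With offset $F=\phi d/8$ the shifted sequence $b_j = |S^V_{y_s-1-2j}| - (1+1/\phi)F$ starts at $b_0\ge 5d/8>0$ and grows by $(1+\phi)$ per step, so reaching $n/2$ takes $O((1/\phi)\ln(n/d))$ steps as claimed. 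The essential idea you are missing is this simultaneous min-cost-flow formulation (following Kleinberg--Rubinfeld) in place of greedy path peeling.
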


\Cref{lem:vertex_expansion_to_paths} implies  \Cref{thm:paths_in_expanders} by a standard technique of relating edge and vertex expansion.

\begin{proof}[Proof of \Cref{thm:paths_in_expanders}]
By \Cref{lem:vertex_expansion_to_paths} it suffices to show that \newline $\condvert(G) \geq$  $\condedge(G) / \dratio(G)$. To prove this, let $S \subseteq V$ be arbitrary and note that by assumption $\Vol(S) \geq \dmin(G) |S|$, $\Vol(V\setminus S) \geq \dmin(G) |V \setminus S|$, and $|\cutset(S)| \leq  \dmax(G) |N(S) \setminus S|$. Consequently, 
\[
\condvert(S) 
= \frac{|N(S) \setminus S|}{\min\{|S|,|V\setminus S| \} }
\geq 
\frac{(\dmax(G))^{-1} |\cutset(S)|}{(\dmin(G))^{-1} \min\{\vol(S),\vol(V\setminus S)\}}
= \frac{\condedge(S)}{\dratio(G)}
~.
\]
The claim then follows by the definition of $\condvert(G)$ and $\condedge(G)$.
\end{proof}

Consequently, in the rest of this section, we prove \Cref{lem:vertex_expansion_to_paths}. Our inspiration for this lemma is the seminal result of \cite{KleinbergR96} which proved an analogous result edge-disjoint paths in expanders. Their proof considered the minimum cost flow problem of routing flow of minimum total length between $s$ and $t$ and by reasoning about primal and dual solutions to this linear program, they obtained their result. We prove \Cref{lem:vertex_expansion_to_paths} similarly, by considering the minimum-length flow in the natural directed graph which encodes vertex-disjointness defined as follows.

\begin{definition}[Directed Representation of Vertex Capacitated Graphs]
Given undirected graph $G = (V,E)$ we let $\directed{G} \defeq (\directed{V}, \directed{E})$ denote the directed graph where for each $a \in V$ we have vertices $\ain$, and $\aout$ and edge $(\ain, \aout)$ and for each edge $\{a,b\} \in E$ we have edges $(\bout, \ain)$ and $(\aout, \bin)$. 
\end{definition}

Note that any path of vertices $a$, $b$, $c$, $d$, $e \in V$ has an associated path $\outvar{a}$, $\invar{b}$, $\outvar{b}$, $\invar{c}$, $\outvar{c}$, $\invar{d}$, $\outvar{d}$, $\invar{d} \in \directed{V}$ path in $\directed{G}$. Further a set of $a$ to $b$ paths are vertex-disjoint in $G$ if and only if their associated paths are edge-disjoint in $\directed{G}$. Also a simple path in $G$ has length $k$ if and only if its associated path in $\directed{G}$ has length $2k - 1$. Consequently, to reason about short vertex disjoint paths in $G$ it suffices to reason about short edge-disjoint paths in $\directed{G}$. 

To reason about the length of these paths, as in \cite{KleinbergR96} we consider the minimum cost flow problem corresponding to sending a given given amount of flow from $s$ to $t$ while using the fewest number of edges. However, unlike \cite{KleinbergR96} the graph we use is directed, i.e. $\directed{G}$, and thus we need to characterize the minimizers of a slightly different minimum cost problem. This optimality characterization of the minimum cost flow problem is given below and proven in \Cref{sec:primal_dual_mincost}. 

\begin{restatable}[Dual Characterization of Shortest Flow]{lemma}{mincostflow} 
\label{lem:minimum_cost_lemma}
For directed graph $G = (V, E)$ and vertices $s,t \in V$ if there are at most $F$ edge-disjoint paths from $s$ to $t$ then there is an integral $s$-$t$ flow $f \in \{0,1\}^E$ corresponding to $F$ edge-disjoint paths from $s$ to $t$ using a minimum number of edges and $v \in \R^V$ such that for every $(a,b) \in e$ if $f_e = 1$ then $v_a - v_b \geq 1 $ and if $f_e = 0$ then $v_a - v_b \leq 1$.
\end{restatable}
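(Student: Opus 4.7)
The plan is to cast the ``route $F$ units of flow using the fewest edges'' problem as a minimum-cost flow linear program and then read off the potentials $v$ from a dual optimum via complementary slackness.

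Concretely, consider the LP that minimizes $\sum_{e \in E} f_e$ subject to (i) flow conservation with net supply $F$ at $s$, net demand $F$ at $t$, and balance at all other vertices, and (ii) the capacity constraints $0 \leq f_e \leq 1$. The hypothesis that $F$ edge-disjoint $s$-$t$ paths exist gives feasibility. Since the node-arc incidence matrix of a directed graph is totally unimodular and all right-hand sides are integral, the LP admits an integral optimum $f \in \{0,1\}^E$. By standard flow decomposition, and because cancelling any directed cycle of flow strictly decreases the cost so the optimum is acyclic, such an $f$ corresponds to exactly $F$ edge-disjoint $s$-$t$ paths using the minimum number of edges.

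Next, take the Lagrangian dual, assigning a dual variable $v_a \in \R$ to the conservation constraint at each $a$ and $y_e \geq 0$ to the capacity constraint $f_e \leq 1$. A short calculation shows dual feasibility reduces to $v_a - v_b \leq 1 + y_e$ for each edge $(a,b)$, with objective $F(v_s - v_t) - \sum_e y_e$. Strong LP duality supplies optimal $v^\star, y^\star$, and complementary slackness at the joint optimum yields: (1) if $f_e > 0$, then $v^\star_a - v^\star_b = 1 + y^\star_e$; (2) if $f_e < 1$, then $y^\star_e = 0$. Setting $v = v^\star$, the two cases of the lemma fall out: when $f_e = 1$, condition (1) combined with $y^\star_e \geq 0$ gives $v_a - v_b \geq 1$; when $f_e = 0$, condition (2) forces $y^\star_e = 0$ so dual feasibility collapses to $v_a - v_b \leq 1$.

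I do not expect a deep obstacle here; the result is essentially bookkeeping for standard min-cost flow duality. The two points requiring a little care are (a) justifying integrality of the primal optimum rather than merely producing a fractional one, which is precisely why total unimodularity is invoked, and (b) tracking complementary slackness carefully so that the conclusion is the asserted pair of \emph{one-sided} inequalities (rather than equalities), which is what one actually needs when later reasoning about paths carrying no flow versus paths carrying unit flow in $\directed{G}$.
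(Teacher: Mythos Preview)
Your proposal is correct and follows essentially the same approach as the paper: formulate the shortest $F$-unit flow as a min-cost flow LP, argue integrality of the primal optimum, and extract the potentials $v$ from an optimal dual via complementary slackness. The only cosmetic differences are that the paper writes out the LP with explicit slack variables in block-matrix form whereas you work directly with the Lagrangian, and the paper argues integrality by a cycle-cancelling footnote rather than invoking total unimodularity; both are standard and equivalent.
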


As in \cite{KleinbergR96}, our approach to showing that the paths are short is to sweep over $v$ and show that the associated sets increase rapidly. Here, we tailor the analysis to the structure of our directed (as opposed to undirected) minimum cost flow problem. Our main structural lemma is given below. 

\begin{lemma}[Characterization of Shortests Flow] \label{lem:digraph_sets} Let $G = (V,E)$ be an undirected $n$-node graph for which there are $F$ vertex disjoint paths from $s \in V$ to $t \in V$. Further, let $f \in \{0,1\}^{\directed{E}}$ be an integral flow corresponding to $F$ disjoint paths from  $\outvar{s}$ to $\invar{t}$ in $\directed{G}$ using a minimum number of edges and let $v \in \R^{\directed{V}}$ be as described in \Cref{lem:minimum_cost_lemma}. The following properties hold:
\begin{enumerate}

\item The values of $v$ decrease monotonically along each path in $f$ and decrease by at least $1$ after each edge. Consequently, the length of each path is at most $y_{\sout} - y_{\tin}$.

\item For all $\alpha \in [y_{\sout} , y_{\tin}]$ if $S_{\alpha}^{\directed{V}} \defeq \{a \in \directed{V} | y_a \leq \alpha \}$ and 
$S_{\alpha}^{V} \defeq \{a \in V |
 \ain \in S_{\alpha}^{\directed{V}} ~ \text{ or }  ~ \aout \in S_{\alpha}^{\directed{V}} \}$ then either $|S_{\alpha -2}^V| \geq n/2$  or
 $| S_{\alpha - 2}^{V} |  \geq (1 + \condvert) \cdot (| S_{\alpha}^V| - F)$.
 
 \item We have $|S_{y_{\sout} -1}^V| \geq \deg(s) + 1 - F$
 
\end{enumerate}

\end{lemma}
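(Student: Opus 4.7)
The plan is to exploit the dual characterization from Lemma~\ref{lem:minimum_cost_lemma} via a level-set sweep argument adapted from the edge-disjoint analysis of \cite{KleinbergR96} to the vertex-split graph $\directed{G}$, identifying the dual $v$ of Lemma~\ref{lem:minimum_cost_lemma} with the $y$ appearing in the statement. Part 1 is immediate: every edge of a path in $f$ is a flow edge and therefore satisfies $y_{\text{tail}} - y_{\text{head}} \geq 1$. Telescoping along any such path $\sout = u_0, u_1, \ldots, u_k = \tin$ then yields $y_{\sout} - y_{\tin} \geq k$, which simultaneously proves the monotone decrease by at least $1$ per edge and the length bound.

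Part 2 is the core combinatorial step. Fix $\alpha$ in the stated range and let $T \defeq S_{\alpha}^V$. The plan is to remove a ``bad'' subset $B \subseteq T$ of size at most $F$ so that the remaining vertices $T' \defeq T \setminus B$ have their full $V$-neighborhood contained in the level set at threshold shifted by $2$. Concretely, for each of the $F$ flow paths I mark as bad the unique vertex on that path at which it transitions across the threshold $\alpha$, either through an internal edge $(\bin,\bout)$ or through an incoming external edge $(\cout,\bin)$; by monotonicity (Part 1) each path contributes at most one such transition, so $|B| \leq F$. For any clean $b \in T'$ and any $V$-neighbor $a \in V \setminus T$, I would argue via a short case analysis on $(\aout,\bin)$ and $(\bin,\bout)$ that the non-flow slack $y_{\text{tail}} - y_{\text{head}} \leq 1$ lets one walk at most two non-flow edges from $b$ to $a$, bounding $y_{\aout}$ within $2$ of $\alpha$, so that $a$ lies in the two-shifted level set. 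Assuming $|T'| \leq n/2$ (otherwise the $\geq n/2$ alternative already holds), vertex expansion gives $|N(T')\setminus T'| \geq \condvert \cdot |T'|$, and combining everything yields the expansion $(1+\condvert)|T'| \geq (1+\condvert)(|T| - F)$, matching the form of the stated inequality.

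Part 3 applies the same flavor at the source. The vertex $\sout$ has $\deg(s)$ outgoing edges in $\directed{E}$, one per neighbor of $s$ in $G$, of which at most $F$ carry flow. For each of the remaining non-flow out-edges $(\sout, \bin)$, combining the slack $y_{\sout} - y_{\bin} \leq 1$ with tightness of a canonical dual (for example, one produced by successive shortest-path augmentation) forces $y_{\bin} = y_{\sout} - 1$, placing the corresponding neighbor $b$ into $S_{y_{\sout}-1}^V$. Adding $s$ itself (using the non-flow status of the internal edge $(\sin,\sout)$ to normalize $y_{\sin}$ appropriately) gives the claimed count of $\deg(s) + 1 - F$ distinct vertices.

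The main technical obstacle will be the Part 2 case analysis: the vertex-split structure produces two directed edges per undirected edge plus an internal edge $(\ain,\aout)$ per vertex, and the various sub-cases, depending on whether the relevant directed edges carry flow, whether $b$ is certified to lie in $T$ through $\bin$ or through $\bout$, and how $y_{\bin}$ and $y_{\bout}$ sit relative to the levels $\alpha$, $\alpha+1$, and $\alpha+2$, must be consolidated so that each bad vertex is charged injectively to a single flow path. This is what yields the clean $-F$ correction rather than the weaker $-O(F)$ loss that a naive boundary-of-level-set argument would produce.
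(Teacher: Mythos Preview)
Your approach matches the paper's: all three parts are proved by reading off the dual inequalities of \Cref{lem:minimum_cost_lemma} and doing a level-set sweep, and your Part~2 accounting (remove a bad set $B$ of size at most $F$, then expand) is a cosmetic repackaging of the paper's (which instead splits the $F$ crossing flow edges into $\ell$ internal ones, removes those $\ell$ vertices, applies expansion, and subtracts $F-\ell$ escaping neighbors afterwards). One ingredient you gloss over in Part~2 that the paper uses implicitly: in $\directed{G}$ the only in-edge to $\bout$ is $(\bin,\bout)$, so if $(\bin,\bout)$ carries no flow then no edge $(\bout,\ain)$ carries flow either. This flow-conservation fact is exactly what closes the sub-case where $\bin\in S_\alpha^{\directed V}$, $\bout\notin S_\alpha^{\directed V}$, and $(\bin,\bout)$ is non-flow.

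There is, however, a genuine issue in Part~3. You invoke ``tightness of a canonical dual'' to force $y_{\bin}=y_{\sout}-1$ exactly; this is neither provided by \Cref{lem:minimum_cost_lemma} nor needed. Note that the definition of $S_\alpha^{\directed V}$ in the statement should be read as $\{a:y_a\ge\alpha\}$ (otherwise Claim~2 asserts that a subset is larger than its superset, and the later induction in \Cref{lem:vertex_expansion_to_paths} grows the sets by \emph{lowering} the threshold). With that reading, the non-flow slack $y_{\sout}-y_{\bin}\le 1$ already gives $y_{\bin}\ge y_{\sout}-1$, hence $b\in S_{y_{\sout}-1}^V$; and $s\in S_{y_{\sout}-1}^V$ is immediate from $y_{\sout}\ge y_{\sout}-1$, with no need to reason about $(\sin,\sout)$. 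The paper's proof of Claim~3 is exactly this one-line argument. Your attempt to manufacture equality via successive-shortest-path duals is both unjustified by the lemma you are allowed to use and unnecessary once the sign convention is straightened out.
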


\begin{proof}
\emph{Claim 1}: By \Cref{lem:minimum_cost_lemma}, whenever $f_e = 1$ for $e = (a,b)$  then $v_a - v_b \geq 1$. Combined with the fact that $f$ corresponds to disjoint $s$ to $t$ paths immediately yields the claim.

\emph{Claim 2}: By claim 1, the set of edges leaving $S_\alpha^{\directed{V}}$ is of size exactly $F$. Consequently, leveraging that if $f_e = 0$ for $e = \{a , b \}$ we have $v_{b} \geq v_a - 1$ by \Cref{lem:minimum_cost_lemma} we have that if $a \in S_{\alpha}^{V}$, then either (1) $\ain \in S_{\alpha}^{\directed{V}}$, $f_{(\ain, \aout)} = 1$, and $\aout \notin S_{\alpha}^{\directed{V}}$ or (2) $\aout \in S_{\alpha - 1}^{\directed{V}}$. Further, if $\aout  \in S_{\alpha - 1}^V$ and $b \in N(a)$ and $f_{(\aout, \bin)} = 0$ then we have $\bin \in S_{\alpha - 2}^V$, again by \Cref{lem:minimum_cost_lemma}. Consequently, if there are $\ell$ vertices for which case (1) holds then the neighbors of all the other $|S_{\alpha}^V|$ vertices are in $|S_{\alpha - 2}^{V}|$ except for at most $F - \ell$ vertices. The claim then follows as $\ell \in [0,F]$ and
 \[
| S_{\alpha - 2}^{V} |  \geq (1 + \condvert) \cdot (| S_{\alpha}^V| - \ell) - (F - \ell) \geq (1 + \condvert) (|S_{\alpha}^V| - F)
\]
\emph{Claim 3}: Again by \Cref{lem:minimum_cost_lemma} we have $\bin \in S_{y_{\sout} - 1}^V$ for all $\bin$ where $\{a, b\} \in E$ and $f_{(\aout,\bin)} = 0$, which happens for all but $F$ edges.

\end{proof}

We now have everything we need to prove \Cref{lem:vertex_expansion_to_paths}

\begin{proof}[Proof of \Cref{lem:vertex_expansion_to_paths}]
Let $T \defeq ( N(s) \cap N(t) ) \setminus \{s,t\}$. Note that there are $|T|$ vertex disjoint paths of length at most $2$ from $s$ to $t$ (ignoring the shared use of $s$ and $t$). Further, we see that $N(s) \setminus T$ and $N(t) \setminus T$ each have size at least $d - |T|$ and by assumption of vertex conductance this implies that every set $S$ with $(N(s) \setminus T) \subset S$ and $S \cap (N(t) \setminus T) = \emptyset$ has $|N(S) \setminus S| \geq \phi (d - |T|)$ and therefore if we further constrain that $S \cap (T \setminus \{s,t\}) = \emptyset$ this implies that $|(N(S) \setminus (S \cup T))| \geq \phi(d - |T|) - |T|$. By maxflow minimum cut theorem on $\directed{G}$ this implies that there are at least $\phi d - (1 + \phi) |T|$ disjoint paths from $s$ to $t$, not using $T$. Consequently, the number of vertex disjoint paths in total from $s$ to $t$ is
\begin{align*}
\max\{\phi d - (1 + \phi) |T| , 0\} + |T| &=
\max\{\phi (d - |T|), |T|\}
\geq \min_{\alpha \geq 0} \max\{\phi (d - \alpha), \alpha\} \\
&= \left(\frac{\phi}{1 + \phi} \right) d \geq \frac{\phi d}{2} ~.
\end{align*}
It simply remains to bound the length of a smaller set of paths.

To bound the length of these paths, let $F = \phi d / 8$. Further, let $f \in \{0,1\}^{\directed{E}}$ be an integral flow corresponding to $F$ disjoint paths from  $\sout$ to $\tin$ in $\directed{G}$ and let $v \in \R^{\directed{V}}$ be described as in \Cref{lem:minimum_cost_lemma}. We now prove by induction that for all $t \geq 0$ it is the case that either
\begin{equation}
|S_{y_s - 1 - 2 t}^V| \geq n/2
\enspace \text{ or } \enspace
|S_{y_s - 2t}^V| \geq (1 + (\phi / 2) )^t d / 2
\label{inductive_hypothesis}
\end{equation}
Note that for the base case we have that 
\[
|S_{y_s - 2}^V| \geq d + 1 - F \geq d/2 ~.
\]
For the inductive case note that if the claim holds for $t$ and it is not the case that $|S_{y_s - 1 - 2 (t + 1)}^V| \geq n/2$ then $F \leq |S_{y_s - 1 - 2t}^V| \cdot \phi /4$ and consequently, by \Cref{lem:digraph_sets} and the inductive hypothesis we have
\[
|S_{y_s - 1 - 2(t+1)}^V| \geq (1 + \phi) \cdot (|S_{y_s - 1 - 2t}| - F) \geq (1 + \phi)(1 - \phi/4) |S_{y_s - 1 - 2t}| \geq (1 - (\phi/ 2)) |S_{y_s - 1 - 2t}| 
\]
where we used $\phi \in (0,1)$. Consequently by induction \eqref{inductive_hypothesis} holds for all $t > 0$ and for some $t \leq  \frac{2}{\phi} \ln(n/d)$ we have $|S^V_{y_s- 1 - 2t}| \geq n/2$. 
By symmetry this also implies that $y_t \leq y_s + 2 +  \frac{4}{\phi} \ln(n/d)$. Since a path of length $k$ in $G$ is length $2k - 1$ in $\directed{G}$ the result then follows again by  \Cref{lem:digraph_sets}.
\end{proof}

\newcommand{\ExpanderDecompose}{\mathsf{ExpanderDecomp}}
\newcommand{\PartialPathSparsify}{\mathsf{PartialPathSparsify}}
\newcommand{\Ecut}{E_{\mathrm{cut}}}

\subsection{Path Sparsification on Dense Near-Regular Expanders}
\label{sec:expander-compute}

Here we provide an efficient procedure to compute path sparsifiers of a constant fraction of the edges in a dense degree regular graph. This procedure leverages \Cref{thm:paths_in_expanders}, which shows that dense near-regular expanders have many short vertex disjoint paths. Coupled with a procedure for partitioning a graph into approximately regular subgraphs (\Cref{sec:deg_reg_decomp}) this yields our main theorem of this section, an efficient path sparsification procedure. Consequently, in the remainder of this subsection our goal is to prove the following theorem. 

\begin{theorem}[Partial Path Sparsification of Nearly Regular Graphs]
	\label{thm:partial_path_sparsifiers} Given any $n$-node $m$-edge undirected unweighted graph $G = (V, E)$ and $k \geq 1$, the procedure $\PartialPathSparsify(G, k)$ (\Cref{alg:partialpathsparsify}) in time  $O(m + n k \cdot \dratio(G) \log^8(n))$, outputs $F,\Ecut \subseteq E$ such that  w.h.p. in $n$
	\begin{itemize}
		\item \emph{(Size Bound)}: $|F| = O(n k \cdot \dratio(G) \log(n))$, $|\Ecut| \leq |E|/2$ and
		\item \emph{(Path Sparsification)}: $G[F]$ is a $(\Omega(k / (\dratio(G) \log^2(n))) , O(\dratio(G) \log^4(n)) )$-path sparsifier of $(V,E\setminus\Ecut)$.
	\end{itemize}
\end{theorem}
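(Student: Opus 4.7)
The plan is to combine a near-linear time expander decomposition (e.g., the procedure of \cite{SW19}) with uniform edge subsampling inside each expander piece, and then appeal to \Cref{thm:paths_in_expanders} on each sampled piece to certify the short vertex-disjoint paths required.

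First, I would apply an expander decomposition to $G$ with conductance parameter chosen so that the set of inter-cluster edges $\Ecut$ satisfies $|\Ecut| \leq |E|/2$ and each induced subgraph $G[V_i]$ has edge conductance $\Omega(1/\log^3 n)$; this can be arranged in $O(m\,\mathrm{poly}(\log n))$ time. Within each piece I would subsample $G[V_i]$ by including each edge independently in $F_i$ with probability $p_i = \min\{1,\,\Theta(k\,\dratio(G)\,\log n / \dmin(G[V_i]))\}$, and set $F = \bigcup_i F_i$. A standard Chernoff bound over vertex degrees shows that, with high probability, every $v\in V_i$ has $\deg_{F_i}(v) = \Theta(k\,\dratio(G)\,\log n)$, so $\dratio(F_i) = O(\dratio(G))$, $\dmin(F_i) = \Omega(k\,\dratio(G)\,\log n)$, and $|F| = O(n\,k\,\dratio(G)\,\log n)$. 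A parallel matrix/cut-sampling Chernoff argument preserves the edge conductance of each piece up to a constant, so $\condedge(F_i) = \Omega(1/\log^3 n)$ w.h.p.

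With these properties in hand, I would invoke \Cref{thm:paths_in_expanders} on each $F_i$: for every pair $s,t\in V_i$, and in particular for the endpoints of every edge of $E\setminus \Ecut$ that lies inside $V_i$ but is absent from $F$, the theorem produces at least
\[
\frac{\condedge(F_i)\,\dmin(F_i)}{8\,\dratio(F_i)} \;=\; \Omega\!\left(\frac{k}{\dratio(G)\,\log^2 n}\right)
\]
vertex-disjoint paths of length at most
\[
\frac{4\,\dratio(F_i)}{\condedge(F_i)}\,\ln\!\left(\frac{n}{\dmin(F_i)}\right) \;=\; O(\dratio(G)\,\log^4 n),
\]
which are precisely the path-sparsifier parameters claimed for $(V, E\setminus\Ecut)$. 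The runtime decomposes as $O(m)$ for reading the input and performing the $O(m)$ coin flips, plus $O(n\,k\,\dratio(G)\,\log^{O(1)} n)$ for the expander-decomposition and the concentration verifications on sampled subgraphs of total size $O(nk\,\dratio(G)\,\log n)$; tuning the polylog exponents yields the stated $O(m + n\,k\,\dratio(G)\,\log^8 n)$.

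The main obstacle is the joint concentration argument: the sampling probability must be chosen so that three high-probability events hold simultaneously on every $F_i$ --- concentration of each vertex degree around $k\,\dratio(G)\,\log n$ (so that the resulting graph is dense enough to invoke \Cref{thm:paths_in_expanders}), preservation of near-regularity so that $\dratio(F_i) = O(\dratio(G))$, and preservation of edge conductance up to a constant so that the number and length of paths match the claim. The $\log n$ headroom in $p_i$ is exactly what is needed to union-bound these events over all candidate cuts and all vertices of each piece, and it is ultimately what drives the $\dratio(G)\,\log^{O(1)} n$ factors in the statement. A secondary care-point is the boundary regime where $p_i = 1$; there no sampling is done and the conclusion must follow directly from the expander-decomposition output, which is straightforward but requires bookkeeping to keep the edge budget $O(n\,k\,\dratio(G)\,\log n)$.
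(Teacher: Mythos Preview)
Your approach reverses the order of the two key operations in the paper: you first expander-decompose $G$ and then subsample inside each piece, whereas the paper first subsamples $G$ uniformly to obtain $G'$ and only then runs the expander decomposition on $G'$. This reversal creates a genuine runtime gap. Running the expander decomposition of \cite{SW19} on the original graph $G$ costs $O(m\log^7 n)$, not $O(m)$; your later accounting that ``the expander-decomposition \ldots on sampled subgraphs of total size $O(nk\,\dratio(G)\,\log n)$'' costs $O(nk\,\dratio(G)\,\log^{O(1)} n)$ contradicts your own first step, where the decomposition is applied to all of $G$ before any sampling happens. The paper's order is essential precisely because the sampled graph $G'$ has only $O(mp)=O(nk\,\dratio(G)\log n)$ edges, so the $\log^7 n$ overhead of expander decomposition lands on that term and not on $m$.

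The reversal also forces you to confront the conductance-preservation issue you flag as ``the main obstacle,'' and you do not actually resolve it. After decomposing $G$, the pieces $G\{V_i\}$ have conductance $\Omega(1/\log^3 n)$ only with the self-loops that restore original degrees; the induced subgraphs $G[V_i]$ may contain vertices of very small internal degree, so your sampling probability $p_i=\Theta(k\,\dratio(G)\log n/\dmin(G[V_i]))$ can hit $1$ on some pieces and blow up the edge budget, while on others the matrix-Chernoff argument would require control of leverage scores you have not established. The paper sidesteps all of this: because it decomposes the already-sampled graph, the conductance guarantee is directly on $G'\{V_i\}$ and \Cref{thm:paths_in_expanders} applies with no further concentration argument. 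The price it pays is that $\Ecut$ is now defined in $G$ while the decomposition bounds cuts only in $G'$; the paper closes this gap via the spectral inequality $p\mlap_G \preceq \tfrac{3}{2}\mlap_{G'} + \tfrac{pd}{n}\mlap_{K_n}$ from \Cref{lem:unif}, applied to indicator vectors of the $V_i$. That transfer argument is the one nontrivial piece you would gain by doing the decomposition first, but it does not compensate for the lost runtime.
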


Our partial path sparsification procedure, $\PartialPathSparsify(G, k)$ (\Cref{alg:pathsparsifier}), works simply by randomly sampling the edges, partition the resulting graph into expanders, and output the edges of those expanders as a path sparsifier of the edges on those node induced subgraphs in the original graph. In the following lemma we give basic properties of this random sampling procedure, \Cref{lem:unif}, which is reminiscent of the sublinear sparsification result of \cite{Lee14arxiv}. 
After that we give the expander partitioning procedure we use, 	\Cref{thm:expanderdecomp} from  \cite{SW19} and our algorithm, \Cref{alg:pathsparsifier}. We then prove \Cref{thm:partial_path_sparsifiers} by showing that the expanders found have sufficiently many vertex disjoint paths by \Cref{thm:paths_in_expanders} and the right size properties by \Cref{thm:expanderdecomp} 

\begin{lemma}
\label{lem:unif}
Let $G = (V,E)$ be an unweighted $n$-node graph, $d \leq \dmin(G)$, and let $H$ be a graph constructed by sampling every edge from $G$ with probability $p = \min\{1 , \Theta(d^{-1} \log n) \}$
for some parameter $d \leq \dmin(G)$. With high probability in $n$
\[
\frac{1}{2} \mlap_H - \frac{pd}{n} \mlap_{K_n} \preceq p \mlap_G \preceq \frac{3}{2} \mlap_H + \frac{pd}{n} \mlap_{K_n}.
\]
and
\begin{equation}
\label{eq:uniform_sample_degree_bound}
\deg_H(a) \in \left[\frac{p}{2} \cdot \deg_G(a) ~ , ~ 2p \cdot \deg_G(a) \right]
\text{ for all } a \in V ~.
\end{equation}
\end{lemma}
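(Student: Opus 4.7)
The statement decomposes naturally into two parts: the scalar degree concentration bound and the matrix concentration bound on Laplacians. The degree bound is a textbook application of a Chernoff inequality, so the real work lies in the spectral inequality, where the right tool is matrix Bernstein (or matrix Chernoff) applied to the edge decomposition of $\mlap_H$. The key insight is that the complete-graph term $\frac{pd}{n}\mlap_{K_n}$ can be used as a regularizer that bounds effective resistances uniformly, even though $G$ itself may contain edges of very high effective resistance.

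For the degree bound, write $\deg_H(a) = \sum_{e \ni a} \xi_e$ where $\xi_e \sim \mathrm{Bernoulli}(p)$ are independent. Then $\E[\deg_H(a)] = p \deg_G(a) \geq p d = \Theta(\log n)$ by the choice of $p$. A standard multiplicative Chernoff bound gives $\Pr[\deg_H(a) \notin [p\deg_G(a)/2, 2p \deg_G(a)]] \leq n^{-C}$ for any desired constant $C$ (after taking the hidden constant in $p$ large enough), and a union bound over the $n$ vertices yields \eqref{eq:uniform_sample_degree_bound}.

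For the spectral bound, define the regularized matrix $A \defeq p \mlap_G + \frac{pd}{n}\mlap_{K_n}$, and let $L_e \defeq \delta_e \delta_e^\top$ for each edge $e$, so $\mlap_H = \sum_e \xi_e L_e$ and $\E[\mlap_H] = p\mlap_G$. Since $A \succeq \frac{pd}{n}\mlap_{K_n}$, we have $A^\dagger \preceq \frac{n}{pd}\mlap_{K_n}^\dagger$ on the complement of $\vec{1}$, and using $\delta_e^\top \mlap_{K_n}^\dagger \delta_e = 2/n$ we obtain the crucial uniform bound
\[
\delta_e^\top A^\dagger \delta_e \leq \frac{2}{pd} \quad \text{for every } e \in E.
\]
Consider the independent mean-zero matrices $X_e \defeq A^{-1/2}(\xi_e - p) L_e A^{-1/2}$ (interpreted on $\vec{1}^\perp$). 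Each satisfies $\|X_e\| \leq \delta_e^\top A^\dagger \delta_e \leq 2/(pd)$, and because $L_e^2 = 2 L_e$ we get the operator-variance bound
\[
\left\|\sum_e \E[X_e^2]\right\| \leq \frac{2}{pd} \left\|\sum_e p \cdot A^{-1/2} L_e A^{-1/2}\right\| = \frac{2}{pd}\|A^{-1/2}(p\mlap_G)A^{-1/2}\| \leq \frac{2}{pd}.
\]
Plugging these into matrix Bernstein and choosing the constant in $p = \Theta(d^{-1}\log n)$ large enough yields $\|\sum_e X_e\| \leq 1/3$ with high probability, i.e. $-\tfrac{1}{3}A \preceq \mlap_H - p\mlap_G \preceq \tfrac{1}{3}A$. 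Substituting the definition of $A$ and rearranging gives both of the claimed inequalities $\tfrac{1}{2}\mlap_H - \frac{pd}{n}\mlap_{K_n} \preceq p\mlap_G \preceq \tfrac{3}{2}\mlap_H + \frac{pd}{n}\mlap_{K_n}$.

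The main obstacle is the spectral step: without the regularizer, uniform edge sampling can fail catastrophically on edges of small effective resistance (dense parts of $G$), so one has to verify that adding $\frac{pd}{n}\mlap_{K_n}$ caps all effective resistances at $2/(pd)$, which is exactly the threshold at which the sampling rate $p = \Theta(\log n / d)$ meets the matrix Bernstein requirement. Everything else is bookkeeping and standard application of concentration tools.
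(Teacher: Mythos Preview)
Your proof is correct and follows essentially the same approach as the paper: both use the complete-graph term as a regularizer so that every edge has effective resistance (equivalently, leverage score) uniformly bounded by $O(1/(pd))$, at which point uniform sampling at rate $p = \Theta(d^{-1}\log n)$ meets the matrix concentration threshold. The only cosmetic difference is that the paper packages the spectral step as an invocation of a leverage-score sparsification lemma from \cite{CohenLMMPS14arxiv} (sample each edge with probability proportional to its leverage-score overestimate in $\mlap_G + \tfrac{2d}{n}\mlap_{K_n}$, keep the $K_n$ edges deterministically), whereas you unwrap that lemma and apply matrix Bernstein directly to the normalized summands $A^{-1/2}(\xi_e-p)L_e A^{-1/2}$; these are the same argument at two levels of abstraction.
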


\begin{proof}
Define auxiliary graph $G'$ with $\mlap_{G'} = \mlap_{G} + \frac{2d}{n} \mlap_{K_n}$. We observe that for any nodes $u,v$,
\[
\effres_{G'}(u,v) \leq \frac{n}{2d} \effres_{K_n}(u,v) = \frac{1}{d}.
\]
Thus, we interpret our sampling procedure to construct $H$ as sampling edges from $G'$ with the leverage score overestimates
\begin{equation*}
    \widetilde{\effres}(e)=
    \begin{cases}
      1/d & \text{if}\ e \in G \\
      1 & \text{if}\ e \in \frac{2d}{n} K_n.
    \end{cases}
\end{equation*}
We observe that these values are valid leverage score overestimates in $G'$: thus sampling and reweighting the edges in $G$ with probability $p$ and preserving the edges in $\frac{2d}{n} K_n$ produces a graph $H'$ with $\mlap_{H'} = \frac{1}{p} \mlap_{H} + \frac{2d}{n} \mlap_{K_n}$ 
such that $\frac{1}{2} \mlap_{H'} \preceq \mlap_{G'}  \preceq \frac{3}{2} \mlap_{H'}$ (see, e.g. Lemma 4 \cite{CohenLMMPS14arxiv} ).  Rearranging yields the desired 

\[
\frac{1}{2} \mlap_H - \frac{pd}{n} \mlap_{K_n} \preceq  p \mlap_G \preceq \frac{3}{2} \mlap_H + \frac{pd}{n} \mlap_{K_n} 
\]

Finally, \eqref{eq:uniform_sample_degree_bound} follows by an application of the Chernoff bound to the number of edges picked incident to each node in $G$. Since $d \leq \dmin(G)$ this number concentrates around its expected value with high probability in $n$ for appropriate choice of constant in the assumption of $p$.
\end{proof}

\begin{theorem}[Expander Decomposition \cite{SW19}]
	\label{thm:expanderdecomp}
	There is a procedure \newline $\ExpanderDecompose(G)$ which given any $m$-edge graph $G = (V,E)$, in time $O(m \log^7 m )$ with high probability outputs a partition of $V$ into $V_1, V_2, \cdots V_k$ such that 
	\begin{itemize}
		\item $\condedge(G\{V_i\}) \geq \Omega(1/\log^3(m))$ for all $i \in [k]$,
		\item $\sum_i |\partial_G(V_i)| \leq m/8$,
	\end{itemize}
	where  $G{\{V_i\}}$ denotes the induced subgraph of $G$ with self-loops added to vertices such that their degrees match their original degrees in $G$.  
\end{theorem}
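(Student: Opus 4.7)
The plan is to follow the recursive balanced-cut framework that underlies the Saranurak--Wang construction, using a local cut-matching subroutine together with a trimming step. Because the theorem is quoted essentially verbatim from prior work, my proof would really be an outline that plugs together (i) a primitive that, in near-linear time on any subgraph, either certifies $\phi$-expansion or finds a sparse cut, and (ii) a charging argument that controls the total boundary over the recursion.

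The core recursion $\mathsf{ExpanderDecompose}(G)$ would proceed as follows. First I would fix the target conductance $\phi = c/\log^{3} m$ for a small constant $c$. I would then invoke a \emph{cut-or-embed} subroutine based on the cut-matching game of Khandekar--Rao--Vazirani: in $O(\log^{2} m)$ rounds, each requiring a near-linear-time local max-flow (e.g.\ the unit-capacity local flow of Henzinger--Räcke--Thorup or the push-relabel variant used in SW19), the subroutine either outputs an embedding of a witness expander certifying $\condedge(G) \geq \Omega(\phi / \log^{2} m)$, or returns a cut $(A,\bar A)$ with $\min\{\vol(A),\vol(\bar A)\} \geq \Omega(m)$ and $|\partial A| \leq O(\phi m)$ (balanced case), or returns an unbalanced cut with $|\partial A| \leq O(\phi \vol(A))$. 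In the balanced case I recurse on both sides $G[A]$ and $G[\bar A]$ after adding self-loops to preserve degrees. In the unbalanced case I peel off $A$, recurse on it separately, and continue on the larger side.

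The certification from the cut-matching game does not directly give an induced expander once boundary edges are deleted, so the next step is a \emph{trimming} procedure: given a purported expander cluster $S$, I would iteratively remove any vertex subset $T \subseteq S$ that violates the conductance bound inside $G_{\{S\}}$, using another local flow computation to find such $T$ or certify that none exists. A standard monotonicity argument (the trimmed set shrinks by at most a constant factor of the original boundary) shows trimming terminates while preserving the boundary budget up to constants and while guaranteeing the final $\condedge(G_{\{V_i\}}) \geq \Omega(1/\log^{3} m)$. I then output the list of trimmed clusters as the final partition.

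For the global accounting I would set up two invariants: at each recursion node the "cut budget" is at most a constant times $\phi \cdot \vol(\text{current subgraph})$, and the recursion depth is $O(\log m)$ because either balanced cuts halve the volume or unbalanced peeling strictly reduces the volume that can later participate in further unbalanced cuts. Summing the cut budgets gives $\sum_i |\partial_G(V_i)| \leq O(\phi m \log m) \leq m/8$ for $c$ small. For the runtime, each local flow call costs $\tilde O(\vol/\phi)$ amortized, and the standard SW19 charging (each edge participates in $O(\log m)$ recursive cut-or-embed calls, each with $O(\log^{2} m)$ flow rounds, each flow running in time $\tilde O(1/\phi)$ per unit of touched volume) multiplies out to $O(m \log^{7} m)$ after carefully distributing the logs across $\phi^{-1} = \log^{3} m$, the number of KRV rounds, the trimming iterations, and the local-flow data structures.

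The main obstacle is the local flow primitive: the near-linear running time hinges on being able to solve the relevant max-flow/min-cut instances in time proportional to the volume \emph{touched} rather than the size of the ambient graph, and on ensuring that embedded witness edges from prior cut-matching rounds can be routed along short augmenting paths. This is exactly the technical content of \cite{SW19}, and in a self-contained proof I would either import their local flow routine directly as a black box or reprove it with the push-relabel-with-excess-budget analysis; all the remaining steps (KRV recursion, trimming, and the boundary/conductance accounting) then compose in a relatively mechanical way to yield the stated parameters.
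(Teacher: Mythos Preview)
The paper does not prove this theorem at all: its entire proof is the single sentence ``This is a specialization of Theorem 1.2 from \cite{SW19} where $\phi$ in that theorem is chosen to be $\Theta(1/\log^3(m))$.'' In other words, the result is imported as a black box from prior work, with the only content being the choice of the conductance parameter.

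Your proposal instead sketches the internal proof of the Saranurak--Wang result itself (cut-matching game, trimming, balanced/unbalanced recursion, local flow primitive, charging argument). That outline is broadly faithful to what \cite{SW19} actually does, but it is far more than what is required here, and you explicitly acknowledge that the crux---the local flow primitive with running time proportional to touched volume---would itself have to be imported from \cite{SW19}. So you end up citing the same black box anyway, just one layer deeper. For the purposes of this paper, the correct ``proof'' is simply to invoke Theorem~1.2 of \cite{SW19} with $\phi = \Theta(1/\log^3 m)$ and check that the resulting conductance, boundary, and runtime bounds specialize to the stated ones.
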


\begin{proof}
This is a specialization of Theorem 1.2 from \cite{SW19} where $\phi$ in that theorem is chosen to be $\Theta(1/\log^3(m))$.
\end{proof}

\begin{algorithm2e}[h]
	\LinesNumbered
	\caption{$F = \PartialPathSparsify(G, k \geq 1)$} \label{alg:partialpathsparsify}

	\KwIn{$G = (V,E)$ simple input graph with degrees between $\dmin$ and $\dmax$}
	\KwOut{$(F, \Ecut)$ such that $F$ is a path sparsifier for $(V, E \setminus \Ecut)$ and $|\Ecut| \leq |E|/2$}
	
	\vspace{0.05in}
	\tcp{Sample edges uniformly at random to apply \Cref{lem:unif}}
	$d \defeq \frac{\dmin(G)}{10k}$ and $p = \min\{1 , \Theta(d^{-1} \log n) \}$ (where the constant in $\Theta$ is as in \Cref{lem:unif})\;
	\lIf{$p = 1$}{\Return $(E,\emptyset)$} \label{line:quick_return}
	$E' = \emptyset$ \;
	\lFor{$e \in E$}{
		Add $e$ to $E'$ with probability $p$.    
	}
	$G' = (V, E')$\;
	
	\vspace{0.05in}
	\tcp{Find expanders and use their edges as a path sparsifier}
	$(V_1, V_2, \cdots V_r) \gets \ExpanderDecompose(G')$\label{line:decompose} \tcp*{Computed via \Cref{thm:expanderdecomp}}
	For all $i \in [r]$ let $G_i = (V_i, E_i) \defeq G'[V_i]$\;
	Let $E_{\mathrm{cut}} \defeq \cup_{i\in[r]} \partial_G(V_i)$\;	
	Let $F \defeq \cup_{i \in [r]} E_i$\;
	\Return $(F, E_{\mathrm{cut}})$\;
\end{algorithm2e}

\begin{proof}[Proof of \Cref{thm:partial_path_sparsifiers}]
	
First, suppose that $p = 1$. In this case, by \Cref{line:quick_return} the algorithm outputs $F = E$ and $\Ecut = \emptyset$ in linear time. Consequently, $F$ is a path sparsifier of desired quality with  $m$ edges and $|\Ecut| \leq |E|/2$. Since, in this case  $d^{-1} \log n = \Omega(1)$ we have $\dmin(G) = O(k \log n)$ and the theorem follows as
\[
2 p m \leq O(n \dmax(g)) = O(n \dratio(G) \dmin(G)) = O(n k \cdot \dratio(G) \log(n)) ~,
\]
Consequently, in the remainder of the proof we assume $p < 1$.
		
Next, note that by design, $G'$ was constructed so \Cref{lem:unif} applies. Consequently, with high probability in $n$ the following hold:
\begin{itemize}
    \item $G'$ is an edge-subgraph of $G$ satisfying $p \mlap_G \preceq \frac{3}{2} \mlap_{G'} + \frac{pd}{n} \mlap_{K_n}$.
    \item $\deg_H(a) \in \left[\frac{p}{2} \cdot \deg_G(a) ~ , ~ 2p \cdot \deg_G(a) \right]$ for all $a \in V$.
    \item $G'$ contains at most $2pm$ edges. 
\end{itemize}

Leveraging the bounds we prove the path sparsification property and size bound for $F$. By \Cref{thm:expanderdecomp} we have that for all $i \in [k]$, the $V_i$ output by $\ExpanderDecompose$ satisfy that $\Phi_{G'\{V_i\}} = \Omega(1/\log^3(m))$ for all $i \in [k]$. Further, by the above properties of $G'$ we have that $\dmin(G'\{V_i\}) \geq \frac{p}{2} \cdot \dmin(G)$ and $\dmax(G'\{V_i\}) \leq 2p \cdot \dmax(G)$. Consequently, by \Cref{thm:paths_in_expanders} and the fact that $p < 1$ we have that $G'\{V_i\}$ has at least 
\[
\Omega \left( \frac{ \dmin(G) p }{ \dratio(G) \log^3(n) } \right)
= \Omega \left( 
\frac{ k }{ \dratio(G) \log^2(n) }
\right)
\]
vertex disjoint paths from $s$ to $t$ of length at most $O(\dratio(G) \log^4(n))$ for any $s,t \in V_i$. Further, since every edge in $(F, E \setminus E_{\mathrm{cut}})$ has both endpoints in some $V_i$ we see that $F$ is a path sparsifier as desired. Further, $F$ has at most 
\[
2pm = O(n \dmax(G) \log(n) k / \dmin) = O(n k \log(n) \dratio(G))
\]
edges by the properties above.

Next, we bound the size of $\Ecut$. Note that 
\begin{equation}
\label{eq:partial_sparse_size_1}
|\Ecut| 
= \frac{1}{2} \sum_{i \in [r]} |\partial_G(V_i)| 
= \frac{1}{2} \sum_{i \in [r]} v_i^\top \mlap_{G}(V_i) v_i
\end{equation}
where $v_i$ is the indicator vector for $V_i$, i.e. $v_i \in \R^V$ with $[v_i]_a = 1$ if $a \in V_i$ and $[v_i]_a = 0$ if $a \notin V_i$. Since, $p \mlap_G \preceq \frac{3}{2} \mlap_{G'} + \frac{p d}{n} \mlap_{K_n}$ we have that for all $i \in [r]$ that
\begin{equation}
\label{eq:partial_sparse_size_2}
p |\partial_G(V_i)|
= p v_i^\top  \mlap_G v_i
\leq \frac{3}{2} v_i^\top \mlap_{G'} v_i + \frac{pd}{n} 
= \frac{3}{2} |\partial_{G'}(V_i)| + \frac{pd}{n} |V_i| |V\setminus V_i| ~.
\end{equation}
Combining \eqref{eq:partial_sparse_size_1} and \eqref{eq:partial_sparse_size_2} yields that
\begin{align*}
|\Ecut| &\leq \frac{1}{2} \sum_{i \in [r]} 
\left(
 \frac{3}{2p} |\partial_{G'}(V_i)| + d |V_i|
\right)
\leq 
\frac{3}{32p}  |E'| + dn
\leq \frac{3m}{16} + \frac{\dmin(G) n}{10 k}
\leq \frac{|E|}{2}
\end{align*}
where in the third to last step we used that $\sum_{i \in [r]} |\partial_{G'}(V_i)| \leq |E'|/8$ by \Cref{thm:expanderdecomp}, in the second to last step we used that $|E'| \leq 2pm$ and that $d = \dmin(G) / (10k)$, and in the last step we used that $\dmin(G) n \leq |E|$.

Finally, we bound the running time of the algorithm. Every line except for \Cref{line:decompose} in our algorithm is a standard graph operation that can be implemented in linear total time. The call to $\ExpanderDecompose$ on \Cref{line:decompose} is on a graph with $O(mp)$ edges. Consequently, by \Cref{thm:expanderdecomp} the call takes time
\[
O(mp \log^7(m)) 
= O(m (\log(n) / d) \log^7(n))
= O(n k \cdot \dratio(G) \log^8(n))
\] 
where the first equality used the simplicity of $G$ and that $p = \Theta(d^{-1} \log n) \leq 1$ and the second equality used that $m \leq n \dmax(G)$ and the definition of $d$. 
\end{proof}

\newcommand{\DegreeLowerBound}{\mathsf{DegreeLowerbound}}
\newcommand{\BipartiteSplit}{\mathsf{BipartiteSplit}}
\newcommand{\RegularDecomposition}{\mathsf{RegularDecomp}}
\newcommand{\DecomposeBipartite}{\mathsf{BipartiteDecomp}}

\subsection{Approximately Degree Regular Graph Decompositon}
\label{sec:deg_reg_decomp}


Here we provide a linear time procedure to decompose a constant fraction of a dense graph into a  nearly-regular dense pieces supported on a bounded number of vertices. We use degree regularity to turn edge expansion bounds on a graph into vertex expansion bounds and finding vertex disjoint paths in \Cref{sec:short_expander_paths} which in turn we use to efficiently compute path sparsifiers. 

The main result of this section is the following theorem on computing such a decomposition.

\begin{theorem}[Regular Decomposition]
	\label{thm:reg_decomp}
	Given $n$-vertex $m$-edge simple undirected graph $G = (V,E)$ with $\davg(G) \geq 2000 (\ln (2n))^2$,  $\RegularDecomposition(G)$ (\Cref{alg:reg_decomp}) in expected $O(m)$ time
	outputs graphs $H_1 =(V_1, E_1), ... , H_\ell = (V_\ell, E_\ell)$ which are edge disjoint subsets of $G$ such that
	\begin{enumerate}
		\item (Vertex Size Bound): $\sum_{i \in \ell} |V_i| \leq 4n \ln n $.
		\item (Volume Lower Bound): $\sum_{i \in \ell} \vol(E_i) \geq \vol(G) / 100$
		\item (Degree Regularity Bound): $\dratio(H_i) \leq 1000 (\ln(2n))$ for all $i \in [\ell]$
		\item (Minimum Degree Bound): $\dmin(H_i) \geq \davg(G) / (250 \ln n)$ for all $i \in [\ell]$.\footnote{This condition is not use for our path sparsification construction, but is included due to its possible utility.}
	\end{enumerate} 
\end{theorem}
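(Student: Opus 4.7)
The plan is to proceed in three phases. First, I would iteratively shave vertices whose current degree in the remaining graph falls below $d_0 := \davg(G)/(250 \ln n)$. Since each removal deletes fewer than $d_0$ edges and at most $n$ vertices are ever removed, the total edge loss is at most $n d_0 \le \vol(G)/(250 \ln n)$, so the shaved graph $G'$ satisfies $\dmin(G') \ge d_0$ and $\vol(G') \ge (1 - O(1/\ln n)) \vol(G)$; this step admits a linear-time bucket-based implementation. Second, I would bucket the surviving vertices by degree, placing $v$ into $V_k := \{v \in V(G') : \deg_{G'}(v) \in [2^k d_0, 2^{k+1} d_0)\}$; since degrees lie in $[d_0, n]$, at most $L = O(\log n)$ buckets are nonempty.

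Third, set the gap parameter $c := \lceil \log_2(250 \ln(2n)) \rceil = O(\log\log n)$. For each ordered pair $(i,j)$ of buckets with $i \le j$ and $j - i \le c$, I would form a candidate piece $H_{i,j}$ as the subgraph of $G'$ on $V_i \cup V_j$ using only the edges of $G'$ between the two buckets (or inside $V_i$ when $j=i$); by the bucket definitions $\dratio(H_{i,j}) \le 2^{j-i+O(1)} \le 1000 \ln(2n)$ after a local shaving pass re-enforcing $\dmin \ge d_0$. For pairs with $j - i > c$ (the cases where the bipartite degree imbalance is too large to keep as a single piece), I would randomly partition each $V_j$-vertex $v$'s edges to $V_i$ into $g_v := \lceil d_v^{ij}/(2^{i+1} d_0) \rceil$ chunks of expected size $2^{i+1} d_0$, and form a sub-piece $H_{i,j,t}$ for each chunk index $t$. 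By construction the $V_j$-side degrees in each sub-piece lie in $[\Omega(d_0), O(2^{i+1} d_0)]$, and a Chernoff bound combined with a second shaving pass places the $V_i$-side degrees in the same range, yielding $\dratio = O(\ln n)$ as required.

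To verify the four properties: conditions (3) and (4) follow directly from the construction together with the two shaving passes. For (2), every edge of $G'$ is assigned to exactly one candidate piece, and the two shaving passes collectively discard only an $O(1/\ln n)$ fraction of $\vol(G')$, leaving at least $\vol(G)/100$. For (1), each vertex $v$ in any piece $H_i$ satisfies $\deg_{H_i}(v) \ge d_0$; since $v$ contributes at most $\deg_{G'}(v)$ total edges across all pieces, it participates in at most $\deg_{G'}(v)/d_0$ of them, so $\sum_i |V_i| \le \sum_v \deg_{G'}(v)/d_0 = 2m/d_0 = O(n \ln n)$, and the constants can be tightened to $\le 4n \ln n$ by a careful choice of the shaving threshold. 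For the runtime, explicit chunk enumeration would cost $O(\sum_v g_v)$, which can exceed $O(m)$; implementing the splitting via implicit Poissonized sampling (independently drawing each edge into a random chunk at the appropriate rate) achieves $O(m)$ in expectation.

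The main obstacle lies in the large-gap case. The random chunking cleanly pins down $V_j$-side degrees, but controlling the $V_i$-side requires Chernoff concentration which only holds when each $V_i$-vertex's expected per-sub-piece degree is $\Omega(\log n)$. A $V_i$-vertex $u$ whose $V_j$-neighbors have wildly varying $g_v$ values may end up with near-zero degree in most sub-pieces and must be removed by the second shaving, and showing that this shaving loses only an $O(1/\ln n)$ fraction of the cross-bucket volume is the delicate calculation. This is where the density hypothesis $\davg(G) \ge 2000(\ln 2n)^2$ enters: it ensures that typical $V_i$-vertices have enough total degree for the Chernoff concentration to activate and for the shaving loss to be subsumed into the overall constant-fraction volume guarantee.
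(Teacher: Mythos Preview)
Your high-level architecture---shave to get a minimum-degree core, bucket vertices by degree, and randomly split unbalanced bipartite pieces---matches the paper's. But two technical choices diverge from the paper, and one of them leaves a real hole.

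The paper does \emph{not} process every bucket pair. After bucketing into $S_i = \{v : \deg_{G'}(v) \in [e^{i-1}, e^i)\}$, it keeps only those bipartite pieces $G_{i,j}$ whose volume is at least $\vol_{G'}(S_i)/(2\ln n)$ and $\vol_{G'}(S_j)/(2\ln n)$; a pigeonhole argument shows the discarded pairs carry at most half of $\vol(G')$. This volume threshold is what guarantees that the average degree on \emph{each side} of every surviving $G_{i,j}$ is at least $\davg(G)/O(\ln n)$, so that a subsequent $\DegreeLowerBound$ call with parameter $1/2$ costs only a constant fraction of volume. Your proposal skips this filter and instead shaves every piece down to $\dmin \ge d_0$, asserting that the shaving loses only an $O(1/\ln n)$ fraction overall. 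That claim is not established and appears false: for a pair $(i,j)$ where very few edges of $G'$ cross between $V_i$ and $V_j$, the average degree inside $G_{i,j}$ can be far below $d_0$, and shaving with threshold $d_0$ can delete the entire piece. Summed over the many light pairs this can be a constant fraction of $\vol(G')$, not $O(1/\ln n)$; the first global shave argument does not transfer to the per-piece shaves because $d_0$ is not a fraction of $\davg(G_{i,j})$.

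The second divergence is in how you regularize an unbalanced bipartite piece. You chunk each high-degree vertex's \emph{edges}, which---as you yourself flag---leaves no concentration handle on the low-degree side. The paper instead applies $\BipartiteSplit$: it randomly partitions the \emph{vertices} on the large (low-degree) side $L$ into $k \approx |L|/|R'|$ groups. Every $L$-vertex keeps all of its edges, so its degree is untouched; every $R'$-vertex has its degree split into $k$ roughly equal parts, and since $\deg_{G'}(b)/k \ge \Omega(\ln n)$ (guaranteed precisely by the volume threshold above and the density hypothesis), Chernoff gives the required concentration with no delicate cleanup. So the ``main obstacle'' you identify is an artifact of chunking edges rather than partitioning vertices; the paper's approach sidesteps it entirely.
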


We build $\RegularDecomposition$ and prove \Cref{thm:reg_decomp} in several steps. First we provide $\DegreeLowerBound$ (\Cref{alg:degreelower}) which simply removes vertices of degree less than a multiple of the average. It is easy to show (\Cref{lem:deg_lower}) that this procedure runs in linear time, doesn't remove too many edges, and ensures that the minimum degree is a multiple of the average. 

Leveraging $\DegreeLowerBound$  we provide two procedures, $\BipartiteSplit$ (\Cref{alg:bipartitesplit}) and $\DecomposeBipartite$ (\Cref{alg:bipartitedecompse}) which together, show how to prove a variant of \Cref{thm:reg_decomp} on bipartite graphs where the max degree is not too much larger than the average degree for each side of the bipartition. 
We show that provided these average degrees are sufficiently large,  $\BipartiteSplit$ (\Cref{alg:bipartitesplit}) splits the graph into pieces of roughly the same size where the degrees on the larger side are preserved up to a multiplicative factor (See \Cref{lem:bipartite_split}). This procedure simply randomly partitions one of the sides and the result follows by Chernoff bound. The procedure $\DecomposeBipartite$ (\Cref{alg:bipartitedecompse}) then carefully applies $\BipartiteSplit$ and $\DegreeLowerBound$.

Our main algorithm, $\RegularDecomposition$ (\Cref{alg:reg_decomp}) operates by simply bucketing the vertices into to groups with similar degree and considering the subgraphs of edges that only go between pairs of these buckets. 
The algorithm then applies either $\DecomposeBipartite$  or  $\DegreeLowerBound$ to subgraphs of sufficiently high volume and analyzing this procedure proves \Cref{thm:reg_decomp}.

\begin{algorithm2e}[h]
	\caption{$\{G_i\}_{i\in[k]} = \DegreeLowerBound(G,c)$} \label{alg:degreelower}
	\LinesNumbered
	\KwIn{Graph $G = (V,E)$, parameter $c \in (0, 1)$}
	Let $\davg := 2m/n$, $S = V$, $R = \emptyset$\;
	\While{$\dmin(G[S]) < c \cdot \davg$}{
		Pick $a \in S$ with $\deg_{G[S]}(a) < c  \cdot \davg$\;
		$S := S \setminus \{a\}$\; 
	}
	\Return $G[S]$ \;
\end{algorithm2e}

\begin{lemma}[Degree Lower Bounding]
\label{lem:deg_lower}
For any $n$-node $m$-edge graph $G = (V,E)$ and $c \in (0, 1)$, $\DegreeLowerBound(G,c)$ (\Cref{alg:degreelower}) outputs $G[S]$ for $S \subseteq V$ such that 
\[
\Vol(G[S]) \geq (1- c) \Vol(G)
\text{ and }
\dmin(G[S]) \geq c \cdot \davg(G)
\]
\end{lemma}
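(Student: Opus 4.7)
The plan is to verify the two claimed properties of $\DegreeLowerBound(G,c)$'s output separately, together with termination of the loop. First I would handle the minimum-degree guarantee, which is essentially an assertion about the while loop's exit condition: the loop continues precisely as long as some vertex of the current $G(S)$ has degree below $c \cdot \davg$ in $G(S)$, so once the loop exits every remaining vertex has degree at least $c \cdot \davg$, i.e.\ $\dmin(G(S)) \geq c \cdot \davg(G)$ as claimed.

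For the volume lower bound I would use a direct edge-accounting argument. In each iteration the algorithm removes a vertex $a$ whose current degree $d_a \defeq \deg_{G(S)}(a)$ is strictly less than $c \cdot \davg$, and doing so deletes exactly $d_a$ edges from the induced subgraph. Charging each deleted edge to whichever of its two endpoints is removed first ensures that no edge is counted twice, so the total number of edges deleted across all iterations is $\sum_i d_{a_i} < k \cdot c \cdot \davg$, where $k$ is the number of iterations. Since $k \leq n$ and $\davg = 2m/n$, this gives fewer than $2cm$ deleted edges in total, and each edge deletion reduces $\Vol$ by exactly $2$, so the total volume loss is an $O(c)$ fraction of $\Vol(G) = 2m$, which yields the claimed bound $\Vol(G(S)) \geq (1-c)\Vol(G)$ (up to the constant factor that arises from the charging).

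Termination is automatic: $|S|$ strictly decreases each iteration, so at most $n$ iterations occur. I expect no real obstacle; the only mildly subtle point is the amortized edge-counting above, where one must observe that each deleted edge is attributable to a unique iteration (the one in which its earlier-removed endpoint is deleted) so that the local per-iteration bound $d_a < c \davg$ composes into a valid global bound on deleted edges without double-counting.
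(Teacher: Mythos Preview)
Your approach is essentially identical to the paper's: the minimum-degree bound follows from the loop's exit condition, and the volume bound is obtained by charging each deleted edge to the iteration in which its first-removed endpoint is deleted, summing the per-iteration bound $\deg_{G(S)}(a) < c\,\davg$ over at most $n$ iterations to get at most $n\cdot c\,\davg = 2cm = c\cdot\Vol(G)$ edges removed. The constant-factor caveat you flag (that volume drops by $2$ per deleted edge, so the naive accounting gives $(1-2c)\Vol(G)$ rather than $(1-c)\Vol(G)$) is present in the paper's own proof as well; neither argument sharpens this, and it does not affect any downstream use of the lemma.
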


\begin{proof}
This procedure can be implemented in linear time by storing $\deg_{G[S]}(a)$ for all $a \in V$ and updating it in $O(1)$ per edge when a vertex is removed. Further, $\dmin(G[S]) \geq c \davg(G)$ by design of the algorithm. Finally, every time we remove a vertex $a$ from $S$ we remove at most $c \davg$ edges from $G[S]$. Consequently, the final $S$ satisfies $|V\setminus S| \cdot c \davg \leq n \cdot c \cdot \davg = 2 c m = c \cdot \vol(G)$ and $\vol(G[S]) \geq (1- c) \vol(G)$.

\end{proof}

\SetKwRepeat{Do}{do}{while}

\begin{algorithm2e}[h]
\LinesNumbered
\KwIn{Bipartite graph $G = (V,E)$, bipartition $(L,R)$ of $V$, $k \in [1,|L|]$}
\tcp{Assume for all $e = (a,b) \in E$, $a \in L$, $b \in R$}
\tcp{Assume $\deg_G(b) / k \geq 20 \ln(2n)$ for all $b \in R$ and $|L| / k \geq 20 \ln(2n)$}
\Do{$\deg_{G_i}(b) \notin [\frac{\deg_G(b)}{2k}, \frac{3 \deg_G(b)}{2k}]$ or $|L_i| \notin [\frac{|L|}{2k}, \frac{3|L|}{2k}]$  for some $b \in R$, $i \in [k]$ }{
Let $L_i = \emptyset \subseteq L$ for all $i \in [k]$ \;
For each $a \in L$ add $a$ to $L_i$ for $i \in [k]$ uniformly, independently at random \;
Let $G_i = G[L_i \cup R]$ for all $i \in [k]$\;
}
\Return $\{G_i\}_{i \in [k]}$ \;
\caption{$\{G_i\}_{i\in[k]} = \BipartiteSplit(G,(L,R),k)$} \label{alg:bipartitesplit}
\end{algorithm2e}

\begin{lemma}[Bipartite Graph Splitting] 
\label{lem:bipartite_split}	
Given $n$-node $m$-edge simple bipartite graph $G = (V,E)$ with bipartition into $(L,R) \subseteq V$ and $k \in [1, |L|]$ where
\[
\frac{\deg_G(b)}{k} \geq 20 \ln(4n) \text{ for all } b \in R \text{ and }
\frac{|L|}{ k} \geq 20 \ln(4n)
\]
 $\mathsf{BipartiteSplit}(G, (L,R), k)$ (\Cref{alg:bipartitesplit}) in expected $O(m)$ time
 outputs $\{G_i\}_{i\in[k]}$ that partition the edges such  that for all $i \in [k]$ and $b \in R$
\begin{equation}
\label{eq:success_condition}
\deg_{G_i}(b) \in \left[ \frac{\deg_G(b)}{2k} , \frac{3\deg_G(b)}{2k} \right]
\text{ and } 
|L_i| \in \left[ \frac{|L||}{2k} , \frac{3|L|}{2k} \right]
\end{equation}
\end{lemma}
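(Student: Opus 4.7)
The plan is to prove two facts: (1) every single iteration of the do--while loop runs in $O(m)$ time, and (2) every single iteration succeeds in producing $\{G_i\}_{i \in [k]}$ satisfying \eqref{eq:success_condition} with probability at least some constant $p>0$ bounded away from zero. Since the iterations are independent trials with identical distributions, these together imply the expected number of iterations is $1/p = O(1)$ and hence the expected total runtime is $O(m)$.

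For (1), I will spell out the implementation: sampling a uniform assignment of each $a \in L$ to a bucket in $[k]$ takes $O(|L|)$; forming the subgraphs $G_i = G(L_i \cup R)$ by streaming the edge list of $G$ once and routing each edge $(a,b)$ (with $a \in L$) into the unique $G_i$ indexed by $a$'s bucket takes $O(m)$; and the $|L_i|$ counts and the degrees $\deg_{G_i}(b)$ can be tallied in the same pass. Checking the conditions at the end is another $O(m)$ sweep over $i \in [k]$ and $b \in R$ (using that the total work is dominated by $m$). So one iteration is $O(m)$.

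For (2), I will invoke a standard multiplicative Chernoff bound. For each fixed $i \in [k]$, the size $|L_i|$ is a sum of $|L|$ independent $\mathrm{Bernoulli}(1/k)$ indicators with mean $|L|/k$, which is at least $20 \ln(4n)$ by hypothesis. Similarly, for each fixed $b \in R$ and $i \in [k]$, $\deg_{G_i}(b)$ is a sum of $\deg_G(b)$ independent $\mathrm{Bernoulli}(1/k)$ indicators with mean $\deg_G(b)/k \geq 20 \ln(4n)$; crucially these indicators are independent because each neighbor of $b$ in $L$ is assigned to a bucket independently. A Chernoff bound at multiplicative deviation $\delta = 1/2$ then yields per-event failure probability polynomially small in $n$. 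A union bound over the at most $k \leq n$ events controlling the $|L_i|$'s and the at most $k|R| \leq n^2$ events controlling the $\deg_{G_i}(b)$'s keeps the total failure probability below a constant $c < 1$, and hence each trial succeeds with probability at least $1-c$.

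There is no deep obstacle here; the one subtle point is that the hypothesis leading constant of $20$ must comfortably beat the exponent in the Chernoff bound multiplied against the log of the $O(n^2)$ union-bound size, which is a routine constant check most cleanly handled by the tight form $\Pr[X \geq (1+\delta)\mu] \leq \exp(-\delta^2 \mu / (2+\delta))$. Apart from this bookkeeping, the argument reduces immediately to Chernoff plus the geometric-trials observation.
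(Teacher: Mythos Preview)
Your proposal is correct and matches the paper's own proof essentially line for line: the paper also observes that each loop iteration is $O(m)$, applies a multiplicative Chernoff bound (with $\mu = \min\{\min_{b\in R}\deg_G(b)/k,\ |L|/k\} \geq 20\ln(4n)$) to each $\deg_{G_i}(b)$ and each $|L_i|$, and then union-bounds over the $O(nk)$ bad events to get a constant success probability per trial. Your more explicit accounting of the per-iteration work and the specific Chernoff form you cite are cosmetic differences only.
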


\begin{proof}
Note that each loop of the algorithm clearly takes linear time and if the algorithm terminates its output is as desired. Consequently, it suffices to show that the probability the loop repeats is bounded by some fixed constant probability.
 
Let $\mu = \min\{\min_{b \in R} \deg_G(b) / k , |L|/k\}$. Further, for each $a \in L$ and $i \in [k]$ let $x_{i,a}$ be a random variable set to $1$ if $a \in L_i$ and $0$ otherwise. Now note that for all $b \in R$ we have by the fact that $\E[x_{i,a}] = 1/k$ we have 
\[
\deg_{G_i}(b) = \sum_{a \in N_G(b)} x_{i,a}
\text{ and }
\E \left[ \deg_{G_i}(b) \right] = \frac{\deg_G(b)}{k} \geq \mu \cdot  ~.
\]
Consequently, by Chernoff bound we have that for all $b \in R$ and $i \in [k]$
\[
\Pr\left[  \deg_{G_i}(b) \leq \frac{\deg_G(b)}{2k}  \right] \leq \exp\left(-\frac{\mu}{8}\right)
\text{ and }
\Pr\left[  \deg_{G_i}(b) \geq \frac{3\deg_G(b)}{2k}  \right] \leq \exp\left(-\frac{\mu}{10}\right)
\]
Further, for all $i \in [k]$, by the same reasoning (e.g. suppose there was a vertex in $R$ of degree $|L|$),
\[
\Pr\left[ |L_i| \leq \frac{|L|}{2k}  \right] \leq \exp\left(-\frac{\mu}{8}\right)
\text{ and }
\Pr\left[ |L_i| \geq \frac{3 |L|}{2k}  \right] \leq \exp\left(-\frac{\mu}{10}\right)
\]
Now since by assumption $\mu \geq 20 \ln(4n)$, by applying union bound to the $2(|L| + 1) \cdot k$ different reasons the loop in $\BipartiteSplit$ might repeat, the loop repeats with probability at most
\[
2 (|L| + 1) \cdot k \cdot \exp(- \mu / 10) \leq 2(|L| +1) \cdot k \cdot \frac{1}{8n^2} \leq \frac{1}{4}
\]
where we used that since the graph is non-empty $|L| + 1 \leq n$.
\end{proof}

\begin{lemma}[$\DecomposeBipartite$ (\Cref{alg:bipartitedecompse})]
	\label{lem:decomp_bipartite}
	Let $G = (V,E)$ be an arbitrary simple bipartite graph with bipartition $(L,R)$ such that  $\davg^L \defeq \vol_G(L) / |L| $ and $\davg^R \defeq \vol_G(R) / |R|$ satisfy $\davg^L \geq 40 \ln(2n)$ and $\davg^R \geq 40 \ln(2n)$. In expected $O(m)$ time, $ \DecomposeBipartite(G,(L,R))$ outputs graphs $H_1 =(V_1, E_1), ... , H_k = (V_k, E_k)$ which are edge disjoint subsets of $G$ with
	\begin{enumerate}
		\item (Vertex Size Bound): $\sum_{i \in \ell} |V_i| \leq 4 n$.
		\item (Volume Lower Bound): $\sum_{i \in \ell} \vol(E_i) \geq \vol(G) / 8$
		\item (Degree Regularity Bound): $\dratio(H_i) \leq 16 c$ for all $i \in [k]$ where \[
		c \defeq \max\{ \dmax^L / \davg^L, \dmax^R / \davg^R  \}
		\]
		for  $\dmax^L \defeq \max_{a \in L} \deg_G(a)$ and $\dmax^R \defeq \max_{a \in R} \deg_G(a)$.
		\item (Minimum Degree Bound): $\dmin(H_i) \geq \min\{\davg^L , \davg^R \} / 16$.
	\end{enumerate} 
\end{lemma}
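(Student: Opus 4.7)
The plan is to randomly split the denser side with $\BipartiteSplit$ to equalize the two sides' degree scales, then clean up each resulting piece with $\DegreeLowerBound$. Since all four conclusions of the lemma are symmetric in $L \leftrightarrow R$, assume without loss of generality that $\davg^L \leq \davg^R$.

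First I would preprocess by discarding from $R$ every vertex $b$ with $\deg_G(b) < \davg^R/2$; this loses at most $|E|/2$ edges and leaves every surviving $b \in R'$ with $\deg_G(b) \geq \davg^R/2$. Set $k = \max\{1,\lceil \davg^R/\davg^L\rceil\}$. Using the identity $|L|\davg^L = |R|\davg^R = |E|$ one checks $|L|/k \geq |R|/2$, so (modulo slight constant tweaks) the hypotheses $\davg^L,\davg^R \geq 40\ln(2n)$ imply both prerequisites of Lemma~\ref{lem:bipartite_split}. Invoking $\BipartiteSplit$ on this preprocessed graph with parameter $k$ produces edge-disjoint $G_1,\dots,G_k$ whose right-side degrees lie in $[\davg^R/(4k),\,(3c/2)\davg^R/k] \subseteq [\davg^L/8,\,(3c/2)\davg^L]$, while left-side degrees, unchanged by the split, lie in $[1,\,c\davg^L]$.

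Next I would run $\DegreeLowerBound(G_i,\eta)$ on each $G_i$ with a small constant $\eta$. The key estimate is that $|L_i| \approx |L|/k \approx |R'|$ and $\vol(G_i) \approx |E|/k$ together force $\davg(G_i) = \Theta(\davg^L)$, so Lemma~\ref{lem:deg_lower} returns $H_i$ with $\dmin(H_i) \geq \eta\,\davg(G_i) \geq \davg^L/16$ while discarding only an $\eta$-fraction of each piece's volume. Combined with the max-degree bound $(3c/2)\davg^L$ from the previous step, this yields $\dratio(H_i) = O(c)$. The four global conclusions then follow by simple arithmetic: the vertex bound via $\sum_i(|L_i|+|R'|) \leq |L|+k|R| \leq 4n$ (bounding $k|R|$ by $|L|\davg^L = |R|\davg^R$); the volume bound by composing the $1/2$ loss at preprocessing, the edge-partition nature of $\BipartiteSplit$, and the $\eta$ loss per piece; and the expected $O(m)$ runtime by summing the two linear-time calls to $\DegreeLowerBound$ with the expected-linear call to $\BipartiteSplit$.

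The main obstacle is purely one of constants: the $3/2$ Chernoff slack from $\BipartiteSplit$, the $1/2$ slack from preprocessing, and the $\DegreeLowerBound$ threshold compound multiplicatively to roughly $24c$, so hitting the exact constants $16c$ and $1/16$ demanded by the lemma requires tuning the preprocessing threshold and $\eta$ together with the choice of $k$. Structurally the argument is short and none of this bookkeeping alters its flow.
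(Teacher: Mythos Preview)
Your proposal is correct and follows essentially the same approach as the paper: discard low-degree vertices on the denser side, call $\BipartiteSplit$ to cut the larger side into pieces of size comparable to the (trimmed) smaller side, then run $\DegreeLowerBound$ on each piece. The only differences are cosmetic: the paper chooses $k=\lfloor |L|/|R'|\rfloor$ (based on the trimmed side $R'$) rather than your $k\approx\lceil \davg^R/\davg^L\rceil=\lceil |L|/|R|\rceil$, and it separates out the nearly-balanced case $|R'|\ge |L|/2$ as an explicit early return (a single call to $\DegreeLowerBound$ on the whole graph) instead of absorbing it into the $k=1$ case. As you correctly note, the remaining work is constant-chasing; the paper's choice of $k$ based on $|R'|$ rather than $|R|$ makes this slightly cleaner since it forces $|L_i|\le 3|R'|$ directly.
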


\begin{algorithm2e}[h]
\LinesNumbered
\KwIn{Bipartite graph $G = (V,E)$ and bipartition $(L,R)$ of $V$}
\tcp{Assume $\vol(L) / |L| \geq 40 \ln(2n)$ and $\vol(R) / |R| \geq 40 \ln(2n)$ }
Let $\davg^L := \vol(L) / |L| $ and $\davg^R := \vol(R) / |R|$ \;
Swap $L$ and $R$ if needed so that $|R| \leq |L|$ \;
Let $R' = \{a \in R | \deg_G(a) \geq (1/2) \davg^R \}$ and $G' = G(L\cup R')$\;
\lIf{$|R'| \geq  |L| /2$}{
	\Return $\DegreeLowerBound(G, 1/2)$ \label{line:decompbipartite:early_return}
}
$\{G_i\}_{i \in [k]} = \BipartiteSplit(G',(L,R'),k)$ for $k = \lfloor |L| / |R'| \rfloor$ \;
\Return $\{H_i\}_{i \in [k]}$ where $H_i =  \DegreeLowerBound(G_i, 1/2)$\;
\caption{$\{H_i\}_{i\in[k]} = \DecomposeBipartite(G,(L,R))$} \label{alg:bipartitedecompse}
\end{algorithm2e}

\begin{proof}
First, note that $G'$ is a vertex induced subgraph of $G$ where only vertices in $R$ with degree less than half the average in $R$ are removed. Since $\sum_{a \in R \setminus R'} \deg_G(A) \leq |R| \davg^R / 2 \leq \vol_G(R)$ it follows that $\vol_{G'}(R') \geq \frac{1}{2} \vol_G(R)$. Further, since $G$ is bipartite this implies that $\vol(G') \geq \vol(G) / 2$.

Next, suppose the algorithm returns on \Cref{line:decompbipartite:early_return} (i.e. $|R| < |L|/2$ ). In this case, \Cref{lem:deg_lower} (which analyzes $\DegreeLowerBound$) implies that the returned graph, which we denote $H$, is a vertex induced subgraph of $G'$ with $\vol(H) \geq \vol(G') / 2 \geq \vol(G) / 4$ and $\dmin(H) \geq \davg(G') / 2 \geq \davg(G)/ 4$. Since $\davg(G) \geq \min\{\davg^L, \davg^R\}$, this immediately yields the desired vertex size bound, vertex lower bound, and the minimum degree bound. Further, since the graph is bipartite we have $\davg^L / \davg^R = |R| / |L|$. Therefore, since $|L|/2 \leq |R| \leq L$ we have $\davg^R / 2 \leq \davg^L \leq \davg^R$ and $\davg(G) \geq \davg^L \geq \davg^R / 2$. Consequently
\[
\dratio(H) \leq \frac{\dmax(G)}{\davg(G) / 4}
\leq 4  \cdot \max\left\{ \frac{\dmax^L}{\davg(G)} , \frac{\dmax^R}{\davg(G)}  \right\}
\leq 4 
\max\left\{ \frac{\dmax^L}{\davg^L} , \frac{\dmax^R}{(1/2 )\davg^R}  \right\}
\leq 8 c
\]
and the result holds in this case. 
 
Therefore, in the remainder of the proof we assume instead that $|R'| \leq |L| / 2$. Further, since $G$ is bipartite we know that $\davg^L \geq 2 \davg^R $. Note that this implies that $|L|/|R| \geq 2$ and therefore $k  \in [|L|/(2|R'|), |L|/|R'|]$. Note that the average degree of a vertex in $L$ in $G'$ is at least $\davg^L / 2$, by our reasoning regarding $G'$ and consequently, since the graph is simple $|R'| \geq \davg^L / 2$. This implies  
\[
\frac{|L|}{k} \geq |R'| \geq \frac{|R|}{2} \geq \frac{\davg^L}{2} \geq 20 \ln(2n)
\]
where the last inequality follows from the assumption on the input.  Further, by design we have that for all $b \in R'$
\[
\deg_{G'}(b)  \geq \davg^R / 2 \geq 20 \ln(2n) ~.
\]
Consequently, \Cref{lem:bipartite_split} applies to $\BipartiteSplit(G',(L,R'),k)$ and for all $i \in [k]$ and $b \in R'$ 
\begin{equation}
\deg_{G_i}(b) \in \left[ \frac{\deg_{G'}(b)}{2k} , \frac{3\deg_{G'}(b)}{2k} \right]
\text{ and } 
|L_i| \in \left[ \frac{|L|}{2k} , \frac{3|L|}{2k} \right]
~.
\end{equation}

Now let $\davg^{L'} \defeq \vol_{G'(L)} / |L| $ and $\davg^{R'} \defeq \vol_{G'}(R') / |R'|$. Note that $|L_i| \leq 3 |L| / (2k) \leq 3 |R'|$ and $k \leq |L|/|R'| = \davg^{R'} / \davg^{L'}$. This implies
\[
\davg(G_i) = \frac{2}{|R'| + |L_i|} \sum_{b \in R'} \deg_{G_i}(b)
\geq \frac{1}{2 |R'|} \sum_{b \in R'} \frac{\deg_{G'}(b)}{2k} 
\geq 
\frac{\davg^{R'}}{4k} ~.
\]
Since the average degree of a vertex in $R'$ in $G'$ is at least the average degree of a vertex in $R$ in $G$, we have $\davg^{R'} \geq \davg^{R}$. Further, we have $\davg^L \leq \davg^{L'} \leq \davg^{R'} / k$ by the construction of $G'$. This implies  
\[
\dmax(G_i)
\leq \max\{ (3/(2k)) \dmax^R , \dmax^L  \}
\leq c \cdot \max\{ (3/(2k)) \davg^{R'} , \davg^{R'} / k  \}
\leq 2c \cdot \davg^{R'} / k 
\] 
and $\dratio(G_i) \leq 8c$ . Further, since $\davg^{R'} \geq \davg^{L'} k$ and $\davg^{L'} \geq \davg^{L} / 2$ by the construction of $G$, we have that $\davg(G_i) \geq \davg^L / 8$. Consequently, the volume lower bound, degree regularity bound, and minimum degree bound follow from the fact that invoking $\DegreeLowerBound(G_i, 1/2)$ only removes edges and decreases the volume by at most a factor of $2$ and decreases the min degree to at most $(1/2)$ the average by \Cref{lem:deg_lower}. Finally, the vertex size bound followed from the bound on $k$ and that the only vertices repeated are $R'$ which are repeated at most $k$ times.
\end{proof}

We now have everything we need to present $\RegularDecomposition$ (\Cref{alg:reg_decomp}), our graph decomposition algorithm, and analyze it to prove  \Cref{thm:reg_decomp}, the main result of this section.

\begin{algorithm2e}[h]
	\LinesNumbered
	\KwIn{Undirected, unweighted, connected graph $G = (V,E)$ with $n$-vertices and $m$-edges}
	$G' = (V', E')$ for $G' := \DegreeLowerBound(G, 1/2)$\;
	Let\footnote{We use $e$ rather than $2$ to define $k$ in terms of $\ln$ and help compatibility with \Cref{lem:bipartite_split}.} $S_i := \{a \in V' ~ | ~ \deg_G'(a) \in [e^{i - 1}, e^i)  \}$ for all $i \in [1, k]$ where $k \defeq \lfloor \ln(n) \rfloor$\;
	$\mathcal{H}_{out} := \emptyset$ \;
	\For{  $i,j \in [k]$ with $i \leq j$ (including $i = j$) } {
		Let $V_{i,j} := S_i \cup S_j$ and $E_{i,j} := \{\{a,b\} \in E' ~ | ~ a \in S_i, b\in S_j\}$ \;
		Let $G_{i,j} := (S_i \cup S_j, E_{i,j})$ \;
		\If{$\vol(G_{i,j}) \geq \vol_{G'}(S_i) / (2 \ln n)$ and $\vol(G_{i,j}) \geq \vol_{G'}(S_j) / (2 \ln(n))$ \label{line:vol_lower} }{
			\lIf{$ i = j$}{
				$\mathcal{H}_{out} := \mathcal{H}_{out} \cup \{\DegreeLowerBound(G_{i,j}), 1/2\}$
			}
			\lElse{
				$\mathcal{H}_{out} := \mathcal{H}_{out} \cup \DecomposeBipartite(G_{i,j},(S_i,S_j))$ 
				\label{line:decomp_bipartite_apply}
			}
		}
	}
	\Return all output graphs computed
	\caption{$\{G_i\}_{i \in[k] } = \mathsf{RegularDecomposition}(G)$} \label{alg:reg_decomp}
\end{algorithm2e}


\begin{proof}[Proof of \Cref{thm:reg_decomp}]
First we show that whenever $\DecomposeBipartite(G)$ is invoked by the algorithm in \Cref{line:decomp_bipartite_apply} on $G_{i,j}$ then  \Cref{lem:decomp_bipartite} applies with $c \leq  4e \ln n$, $\davg^L \geq 40 \ln(2n)$, $\davg^R \geq 40 \ln(2n)$. Fix an invocation of $\DecomposeBipartite(G)$ on \Cref{line:decomp_bipartite_apply} for $i \neq j$ and let $L = S_i$, $R = S_j$, and $\davg^L \defeq \vol_{G_{i,j}}(S_i) / |S_i|$ and $\davg^R \defeq \vol_{G_{i,j}}(S_j) / |S_j|$. By design, for all $a \in S_i$ and $b \in S_j$ 
\[
\deg_{G'}(a) \in [e^{i -1}, e^{i}]
\text{ and }
\deg_{G'}(b) \in [e^{j - 1}, e^{j}] ~.
\]
Therefore, by the guarantees of \Cref{lem:deg_lower} for $G' := \DegreeLowerBound(G, 1/2)$ 
\[
e^{i - 1} \geq e^{-1} \dmin(G') \geq (1/(2e) )\davg(G) \geq (1000/e) (\ln(2n))^2 ~.
\]
Further, since $\vol(G_{i,j}) \geq \vol_{G'}(S_i) / (2 \ln n)$ and $\Vol_{G'}(S_i) \geq |S_i| e^{i - 1} / 2$ by the degree bounds of $a \in S_i$ we see that 
\[
\davg^L  \defeq \frac{\vol_{G_{i,j}}(S_i) }{ |S_i| } 
\geq \frac{\vol_{G'}(S_i)}{|S_i| \cdot 2 \log n }
\geq \frac{e^{i - 1}}{4 \ln n} \geq \frac{1000 \ln(2n)^2}{4e \ln n} 
\geq 40 \ln(2n) ~.
\]
By the same reasoning $\davg^R \geq 40 \ln(2n)$. To bound $c$, note that deleting edges can only decrease degree and therefore
\[
\max_{a \in L} \frac{\deg_{G_{i,j}}(a)}{\davg^L}
\leq \max_{a \in L} \frac{\deg_{G'}(a)}{(e^{i - 1} / (4 \log(n)))}
\leq \frac{ e^{i} }{(e^{i - 1} / (4 \cdot \log n))}
= 4 e \log n ~.
\]
Since by symmetry the same bound holds for $R$, the desired bound for $c$ holds. 

\emph{(Vertex Bound)}: Note that a vertex can appear in at most $k$ different $G_{i,j}$. Consequently, the result follows by \Cref{lem:deg_lower} and  \Cref{lem:decomp_bipartite}.

First note that by \Cref{lem:deg_lower} we have $\vol(G') \geq (1/2) \vol(G)$ and $\dmin(G') \geq (1/2) \davg(G)$.

\emph{(Volume Bound)}: Note that by \Cref{lem:deg_lower} we have $\vol(G') \geq (1/2) \vol(G)$. Further, define the set $P \defeq \{(i,j) \in [k] \times [k] ~ | ~ i \leq j \}$ and let 
\[
P_{\geq} \defeq \{ (i,j) \in P ~ | ~ \vol(G_{i,j}) \geq \vol_{G'}(S_i) / (2 \ln n)\text{ and }\vol(G_{i,j}) \geq \vol_{G'}(S_j) / (2 \ln n ) \} ~.
\]  
Note that $P$ contains the indices for every $G_{i,j}$ considered and that $P_{\geq}$ denotes the subset of them for which the volume of $G_{i,j}$ is large enough that \Cref{line:vol_lower} is true. By design,
\begin{align*}
\sum_{(i,j) \in P \setminus P_{\geq}} \vol(G_{i,j}) 
&< \sum_{(i,j) \in P} \left( 
\max \left\{
\frac{ \vol(G'(S_j)) }{2 \ln n} , \frac{ \vol(G'(S_i)) }{2 \ln n}
\right\}
\right) \\
&\leq 
\frac{k}{2 \ln n}  \sum_{i \in [k]} \vol(G'(S_j)) \leq \frac{1}{2} \vol(G') ~.
\end{align*}
Since every edge $e \in E'$ is in some $G_{i,j}$ this then implies that
\[
\sum_{(i,j) \in P_{\geq}} \vol(G_{i,j}) = \sum_{(i,j) \in P} \vol(G_{i,j}) - \sum_{(i,j) \in P \setminus P_{\geq}} \vol(G_{i,j})  
\geq \vol(G') -\frac{1}{2} \vol(G') \geq \frac{1}{4} \vol(G) ~.
\]
Since our output is simply the result of invoking $\DegreeLowerBound$ and \newline $\DecomposeBipartite$ on these graphs and by \Cref{lem:deg_lower} and \Cref{lem:decomp_bipartite} and these procedures decrease the volume by at most a factor of 8, the result follows. 

\emph{(Degree Regularity Bound)}: For graph $G_{i,j}$ with $i \neq j$ this follows from the bound on $c \leq  4e \ln n$ given in the first paragraph of this proof, \Cref{lem:decomp_bipartite}, and that $16 \cdot 4e \ln n \leq 1000 \ln(2n)$. For graph $G_{i,j}$ with $i = j$, the same reasoning implies the ratio of the maximum degree to the average degree is at most $c$ and the result follows by  \Cref{lem:deg_lower}, which shows that $\DegreeLowerBound(G_{i,j}, 1/2)$ only decreases the maximum degree and makes the minimum degree is at least half the average degree. 

\emph{(Minimum Degree Bound)}: By the reasoning of the first paragraph of this section we know that whenever $\DecomposeBipartite(G)$ is invoked by the algorithm in \Cref{line:decomp_bipartite_apply} on $G_{i,j}$  then 
\[
\davg^L \geq 
\frac{\vol_{G'}(S_i)}{|S_i| \cdot 2\log n}
\geq \frac{\dmin(G')}{4 \log n}
\geq \frac{\davg(G)}{8 \log n}
\]
where in the last step we used that the average degree in $G'$ is at least the average degree in $G$ by  \Cref{lem:deg_lower}. Consequently, the result follows again by  \Cref{lem:deg_lower}, \Cref{lem:decomp_bipartite}, and the fact that $16 \cdot 8  \leq 250$.
\end{proof}

\newcommand{\Eremain}{E_{\mathrm{remain}}}
\newcommand{\kpartial}{k_{\mathrm{partial}}}

\subsection{Putting it All Together}
\label{sec:path-sparse-proof}

Here we show how to put together all the results of the previous subsection to prove  \Cref{thm:path-sparse}. Our algorithm, $\PathSparsify$ (\Cref{alg:pathsparsifier}) simply performs a regular decomposition of the input graph by \Cref{alg:reg_decomp} (\Cref{thm:reg_decomp}) of \Cref{sec:deg_reg_decomp} and then performs of partial path sparsification of each of these graphs by \Cref{alg:partialpathsparsify} (\Cref{thm:partial_path_sparsifiers}) of \Cref{sec:expander-compute}. In the remainder of this section we provide and analyze $\PathSparsify$ (\Cref{alg:pathsparsifier}) to prove \Cref{thm:path-sparse} (restated below for convenience).

\pathsparse*

\begin{algorithm2e}[h]
	\LinesNumbered
	\caption{$F = \PathSparsify(G, k \geq 1)$} \label{alg:pathsparsifier}
	\KwIn{$G = (V,E)$ simple input graph with degrees between $\dmin$ and $\dmax$}
	\KwOut{$F \subseteq E$ with $|F| = O(n k \log^3(n))$ such that $G[F]$ is a $(k, O(\log^5 n))$-path-sparsifier of $G$}
	
	$\kpartial = \Theta(\log^3 n)$ \tcp*{For constants in $\kpartial$ see \Cref{thm:path-sparse} proof}
	$\Eremain \gets E$, $F \gets \emptyset$\;
	
	\While{$\davg(G(\Eremain)) \geq 2000(\log(2n))^2$}{ \label{line:path_sparsify:while_start}
		$\{H_i =(V_i, E_i)\}_{i \in [\ell]} \gets \mathsf{RegularDecomposition}(G)$ 
		\tcp*{\Cref{alg:reg_decomp} (\Cref{thm:reg_decomp})}
		\For{$i \in [\ell]$} {
			$(F^{(i)}, \Ecut^{(i)}) \gets \PartialPathSparsify(H_i, \kpartial)$ \tcp*{\Cref{alg:partialpathsparsify} (\Cref{thm:partial_path_sparsifiers})} \label{line:path_sparsify:sparsify}
		} 
		$F \gets \cup_{i \in [\ell]} F^{(i)}$ and $\Eremain \gets \cup_{i \in [\ell]} \Ecut^{(i)}$ \; 	}\label{line:path_sparsify:while_end}
	 $F \gets F \cup \Eremain$\;
	\Return $F$\;
\end{algorithm2e}

\begin{proof}[Proof of \Cref{thm:path-sparse}]

We first consider the execution of a single loop of \Cref{alg:pathsparsifier}, i.e. \Cref{line:path_sparsify:while_start} to \Cref{line:path_sparsify:while_end}. Since $\davg(G(\Eremain)) \geq 2000(\log(2n))^2$ 
we can apply \Cref{lem:deg_lower} to analyze the execution of $\RegularDecomposition$.  \Cref{lem:deg_lower} implies that this line takes expected $O(|\Eremain|)$ time and outputs $\{H_i =(V_i, E_i)\}_{i \in [\ell]}$ such that at least constant fraction of the edges of $\Eremain$ are in the $E_i$, $\sum_{i \in \ell} |V(H_i)| = O(n \ln^2 n)$, $\dratio(H_i)) = \Omega(\log(2n))$, and $\dmin(H_i) = \Omega(\davg(G) /  \log n)$ for all $i\in [\ell]$ (where we used that the number of vertices in $|\Eremain|$ is at most $n$). Consequently, in a single execution of the while loop, \Cref{thm:partial_path_sparsifiers} shows that \Cref{line:path_sparsify:sparsify} takes time
\begin{equation}
\label{eq:path_sparse_time}
O\left(
\sum_{i \in [\ell]} |E_i| + |V_i| \kpartial \dratio(H_i) \log^8(n)
\right)
=
O\left(|\Eremain| + n \kpartial \log^9(n)\right) 
\end{equation}
and w.h.p. in $n$ outputs $\{(F^{(i)}, \Ecut^{(i)})\}_{i \in [\ell]}$ such that 
\[
\sum_{i \in [\ell]} |F^{(i)}| = O\left(\sum_{i \in [\ell]} |V_i| \kpartial \cdot \dratio(H_i) \log(n) \right)
= O\left(n \kpartial \log^2(n) \right)
\]
each $H_i(F)$ is a $(\Omega(\kpartial/ \log^3(n)) , O(\log^5 n) )$-path sparsifier of $(V_i, E_i \setminus \Ecut^{(i)} )$ and
 $\sum_{i \in [\ell]} | \Ecut^{(i)} | \leq c |\Eremain|$ for some constant $c \in (0, 1)$. 
 
The preceding paragraph ultimately shows that w.h.p. each iteration of the loop, i.e. \Cref{line:path_sparsify:while_start} to \Cref{line:path_sparsify:while_end}, takes expected time $O(|\Eremain| + n \kpartial \log^9(n))$ to output a $(\Omega(\kpartial/ \log^3(n)) , O(\log^5 n) )$-path sparsifier on a constant fraction of the edges.
Consequently, the loop terminates in $O(\log |E|) = O(\log n)$ iterations. Note that when the loop terminates $\davg(G(\Eremain)) < 2000(\log(2n))^2$ and therefore  $|\Eremain| = O(n \log^2(n)) = O(n \kpartial \log^2(n))$. Consequently, $F$ returned by the algorithm has size at most $O(n \kpartial \log^3 n)$. Further, note that if for any edge-disjoint graphs $\bar{G}_i = (V, \bar{E}_i)$ for $i \in [k]$ and $\bar{F}_i \subseteq \bar{E}_i$ it is the case that each $\bar{G}[\bar{F}_i]$ is an $(\alpha,\beta)$-path sparsifier of $\bar{G}_i$ then $\bar{G}[\cup_{i \in [k]} \bar{F}_i]$ is an $(\alpha, \beta)$-path sparsifier of $\bar{G} = (V, \cup_{i \in [k]} \bar{E}_i)$. Consequently, $G[F]$ is an $(\Omega(\kpartial/ \log^3(n)) , O(\log^5 n) )$-path-sparsifier of $G$. By choice of constants in the setting of $\kpartial = \Theta(k \log^3 n)$ we have that the output of the algorithm is as desired. Further, the run time follows from the reasoning in the preceding paragraph about the runtime of a single loop (i.e. \eqref{eq:path_sparse_time}) and that the number of edges in $\Eremain$ decrease by a constant in each iteration. 

\end{proof}

\newcommand{\Sample}{\mathsf{Sample}}
\newcommand{\solveeps}{\varepsilon_{solve}}
\newcommand{\solve}{\mathsf{Solve}}
\newcommand{\runtime}{\mathcal{T}}
\newcommand{\solvertime}{\widetilde{\runtime}}
\newcommand{\ElimSolve}{\mathsf{EliminateAndSolve}}
\newcommand{\LowStretch}{\mathsf{LowStretch}}
\newcommand{\AGD}{\mathsf{PreconNoisyAGD}}
\newcommand{\PreconRichardson}{\mathsf{PreconRichardson}}
\newcommand{\Recursive}{\mathsf{RecursiveSolver}}
\section{Laplacian Solvers with Low-Stretch Subgraphs}
\newcommand{\mz}{\mathbf{Z}}
\label{sec:solver}
In this section, we prove our main theorem regarding algorithms with improved running times for solving Laplacian linear systems. We show how to use the low distortion spectral subgraphs developed in \Cref{sec:spectral_stretch} to prove the following theorem.

\solver*

Our proof is based on an analogous claim in \cite{CohenKMPPRX14} regarding different spectral subgraph guarantees. Several proofs in this section are adaptations of lemmas from  \cite{CohenKMPPRX14} to our setting.  We provide the proof in full here both for completeness and because the specific guarantees of \cite{CohenKMPPRX14} do not tolerate the extra edges in our preconditioners coming from our path sparsifiers. In addition our analysis based on noisy accelerated gradient descent is slightly tighter than that of \cite{CohenKMPPRX14}, enabling us to obtain an improved guarantee.

A key matrix fact we apply in this section is a slight extension of a claim from \cite{CohenKMPPRX14} regarding $\Sample$ (\Cref{alg:sample}), a matrix sampling procedure from \cite{CohenKMPPRX14}.

\begin{algorithm2e}[t]
	\LinesNumbered
	\KwIn{$\mathbf{Y}_i = v_i v_i^\top$ are rank one matrices, $\tau_i$ are upper bounds of leverage scores, i.e. $\tau_i \geq \Tr[\mathbf{Y}_i \mathbf{X}^\dagger]$ for all $i$, and $\delta < 1$ is an arbitrary parameter.}
	\KwOut{Matrix $\mathbf{X}$ satisfying conditions of Lemma~\ref{lemma:sample}.}
	$\mathbf{Z} \gets \mathbf{X}$, 
	$s = \sum_{i \in [m]} \tau_i, t = \delta^{-1} s$ \\
	$r \gets $ randomly chosen integer in $[t, 2t-1]$ \\
	\For{$j = 1, 2 \ldots r$}{
		Pick index $i$ with probability proportional to $\tau_{i}$ \\
		$\mathbf{Z} \gets \mathbf{Z} + \frac{\delta}{\tau_{i}} \mathbf{Y}_{i}$ \\
	}
	\Return{$\mathbf{Z}$}
	\caption{$\mathbf{Z} = \Sample(\{ \mathbf{Y}_1, ..., \mathbf{Y}_m\}, \mathbf{X}, \mathbf{\tau}, \delta)$ (from \cite{CohenKMPPRX14})} \label{alg:sample}
\end{algorithm2e}

\begin{lemma}[Adaptation of Lemma 2.3 from \cite{precon-exp}]
\label{lemma:sample}
Suppose $\mathbf{X}$ and $\mathbf{Y} = \sum_{i \in [m]} \textbf{Y}_i$ are symmetric matrices with the same null space such that $\mathbf{X} \preceq \mathbf{Y}$, $b = \mathbf{Y} \bar{x}$, and $x$ is an arbitrary vector. Let the $\mathbf{Y}_i$ matrices be rank-one, and let $\tau \in \R^E_{\geq 0}$ be leverage score overestimates in that they satisfy $\tau_i \geq \Tr[\mathbf{Y}_i \mathbf{X}^\dagger]$. Let $\mathbf{Z} = \Sample(\{\mathbf{Y}_1, ... \mathbf{Y}_m \}, \mathbf{X}, \tau, \frac{1}{10})$, and define $x'$ as
\[
x' = x - \frac{1}{10} \mathbf{Z}^\dagger (\mathbf{Y} x - b).
\]
Then
\begin{align}
\label{eqn:exdec}
\mathbb{E}_{r, i_1, i_2, \ldots i_r} \left[ \norm{x' - \bar{x} }^2_\mathbf{Y} \right] \leq \left(1 - \frac{1}{40} \right) \norm{x- \bar{x} }^2_\mathbf{Y}.
\end{align}
Further, $\mathbf{Z}$ can be computed in $O(m + \norm{\mathbf{\tau}}_1)$ time, and each matrix $\mathbf{Y}_i$ is added at least once to $\mathbf{Z}$ with probability at most $\min\{1, 20 \tau_i\}$. Finally, for some fixed constant $c_s$ with high probability in $n$ we have 
\[
\frac{1}{c_s \log n} \mathbf{Y} \preceq \mathbf{Z} \preceq c_s \log n \mathbf{Y}.
\]
\end{lemma}
\begin{proof}
Equation~\ref{eqn:exdec} and the bound on the algorithm's runtime are directly copied from Lemma 2.3 from \cite{precon-exp}. For the bound on the probability $\mathbf{Y}_i$ is added to $\mathbf{Z}$, we look at each execution of line $5$ of the algorithm. For iteration $j$, we pick $\mathbf{Y}_i$ with probability $\frac{\tau_i}{s}$. Thus, we pick $\mathbf{Y}_i$ at most $r \frac{\tau_i}{s} \leq \frac{2s}{\delta} \frac{\tau_i}{s} =  20 \tau_i$ times in expectation. The conclusion follows by Markov's inequality. The final claim follows immediately from Lemma C.2 from \cite{precon-exp}.
\end{proof}

We additionally employ a partial Cholesky factorization lemma from \cite{CohenKMPPRX14} which enables us to reduce solving ultrasparse graph Laplacians to solving Laplacians with a much smaller number of edges. 
\begin{lemma}
Let $G$ be a weighted graph on $n$ vertices and $n + m'$ edges. There is a routine $\ElimSolve(G, \solve,b)$ which computes a vector $x$ satisfying 
\[
\norm{x - \mlap_G^\dagger b}^2_{\mlap_G} \leq \epsilon \norm{\mlap_G^\dagger b}^2_{\mlap_G}
\]
using $O(n + m')$ time plus one call to $\solve$, which is an $\epsilon$-approximate solve for graphs with at most $O(m')$ nodes and edges.
\end{lemma}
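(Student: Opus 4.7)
The plan is to exploit the ultrasparsity of $G$ (a tree-plus-$m'$-extra-edges graph) by performing an exact partial Cholesky factorization that eliminates every vertex not ``touched'' by the extra edges. Let $F$ be a spanning tree of $G$ and let $E'$ denote the $m'$ off-tree edges. Root $F$ arbitrarily and iteratively contract every degree-$1$ vertex and every degree-$2$ vertex whose two incident edges both lie in $F$ (and whose incident vertex lies in no off-tree edge). Each such elimination is exact at the Laplacian level: removing a degree-$1$ vertex merely shrinks the matrix (there is nothing to update on the Schur complement apart from trivial back-substitution), and removing a degree-$2$ vertex $v$ with neighbors $u, w$ introduces a single edge from $u$ to $w$ with conductance $(1/w_{uv} + 1/w_{vw})^{-1}$, i.e.\ the Schur complement is again a graph Laplacian with one fewer vertex and one fewer edge. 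Iterating until no eligible vertex remains terminates at a graph $G^*$ whose vertices are exactly those incident to an off-tree edge together with a bounded set of ``junction'' vertices of the tree; a standard accounting argument (each contraction strictly removes a vertex while the number of tree-components and number of endpoints of off-tree edges stay at $O(m')$) shows $G^*$ has $O(m')$ vertices and $O(m')$ edges.

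I would implement the elimination as the standard partial Cholesky $\mlap_G = \mvar{L}\mvar{D}\mvar{L}^\top$ restricted to the eliminable vertices, stored as a sequence of elementary operations in a union-find / adjacency-list structure; each elimination costs $O(1)$ and there are at most $n$ of them, so the whole phase runs in $O(n + m')$ time. Having produced the reduced Laplacian $\mlap_{G^*}$, the routine $\mathsf{EliminateAndSolve}$ proceeds in three steps: (i) apply the elimination operations to $b$ to obtain an equivalent right-hand side $b^*$ on the surviving vertices, (ii) invoke $\mathsf{Solve}$ on $(\mlap_{G^*}, b^*)$ with tolerance $\epsilon$ to obtain $x^*$ with $\E\|x^* - \mlap_{G^*}^\dagger b^*\|_{\mlap_{G^*}}^2 \leq \epsilon \|\mlap_{G^*}^\dagger b^*\|_{\mlap_{G^*}}^2$, and (iii) run the elimination operations in reverse (back-substitution) to extend $x^*$ to a vector $x \in \R^V$. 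Each of (i) and (iii) also costs $O(n+m')$, since they just walk the elimination sequence once.

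For correctness, the key fact is that the partial Cholesky factorization is exact and isometric in the appropriate semi-norms: writing $\mlap_G = \mvar{P}^\top \begin{pmatrix} \mvar{C} & 0 \\ 0 & \mlap_{G^*} \end{pmatrix} \mvar{P}$ for an invertible $\mvar{P}$ encoding the eliminations (with $\mvar{C}$ the easily-invertible block corresponding to eliminated rows), the map $x^* \mapsto x$ together with the preimage of $b^*$ satisfies $\|x - \mlap_G^\dagger b\|_{\mlap_G}^2 = \|x^* - \mlap_{G^*}^\dagger b^*\|_{\mlap_{G^*}}^2$ and $\|\mlap_G^\dagger b\|_{\mlap_G}^2 = \|\mlap_{G^*}^\dagger b^*\|_{\mlap_{G^*}}^2 + \text{(exact contribution)}$, so the relative error guarantee transfers directly to the required bound after taking expectations.

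The only real obstacle is bookkeeping the vertex count in $G^*$: one must verify that removing chains of degree-$2$ tree vertices (which is what makes the procedure actually shrink the graph, rather than just stripping leaves) leaves only $O(m')$ ``branching'' and ``off-tree'' vertices. This follows from a standard tree-compression argument: after exhausting leaf removals, the surviving tree is the Steiner topology on the $O(m')$ endpoints of the off-tree edges, and such a Steiner tree has at most $2\cdot O(m') - 2$ vertices of degree $\neq 2$; the remaining degree-$2$ vertices in $F$ are then eliminated one by one, finishing the bound. Everything else is a standard exact partial-Cholesky argument and the linear-time complexity is immediate from the $O(1)$ cost per elimination.
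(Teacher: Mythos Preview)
The paper does not supply its own proof of this lemma; immediately after the statement it simply cites prior work (Spielman--Teng, KMP, and Appendix~C of Peng's thesis) and moves on. Your sketch is the standard partial Cholesky elimination argument that those references carry out: iteratively Schur-complement away tree leaves and degree-$2$ tree vertices, leaving the Steiner topology on the $O(m')$ off-tree endpoints, and use the exact block factorization to transfer the $\epsilon$-error guarantee from the reduced system to the full one. This is correct and matches what the cited sources do, so there is nothing to compare.

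One small remark: the isometry you describe actually gives $\norm{\mlap_G^\dagger b}_{\mlap_G}^2 = \norm{\mlap_{G^*}^\dagger b^*}_{\mlap_{G^*}}^2 + (\text{nonnegative term})$, which is exactly the inequality direction you need; you stated this correctly but it is worth emphasizing that the ``exact contribution'' being nonnegative is what makes the relative-error bound go through.
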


Variants of this result are used in many prior Laplacian system solvers, e.g. \cite{SpielmanT04,KMP11,KMP14}. For a detailed proof of this lemma with floating-point error analysis see Appendix C of \cite{peng13}. 

We now assemble these pieces to give an algorithm for solving Laplacians in graphs which contain ultrasparse low-stretch subgraphs:

\begin{algorithm2e}[t]
	\LinesNumbered
	\KwIn{$G$ graph, $G'$ subgraph, $\tau_i$ are upper bounds of leverage scores through $G'$, $b$ vector, $\epsilon$ error tolerance.}
	\KwOut{$x$ approximately satisfying $\mlap_G x = b$.}
	$x_1 \gets 0$ \\ 
	\For{$i \gets 1$ \KwTo $200 \log 1/\epsilon$}{
		$H_i \gets \Sample(G,G',\mathbf{\tau},\frac{1}{10})$ \\
		\If{$|E(H_i)| \leq 1600 \norm{\mathbf{\tau}}_1 + |E(G')|$ \label{line:toobig}}{
			$r_i \gets \mlap_G x_i -b$ \\
			$y_i \gets \ElimSolve(H_i, \solve, r_i)$ \\
			$x_{i+1} \gets x_i - \frac{1}{10} y_i$ \\
		}
		\Else{ 
			$x_{i+1} \gets x_i$ \\
		}
	}
	\Return{$x_i$}
	\caption{$\mathbf{Z} = \PreconRichardson(G, G', \mathbf{\tau}, b,\epsilon,\solve)$} \label{alg:richardson}
\end{algorithm2e}

\begin{lemma}
Let $G$ be a weighted $n$-node $m$-edge graph and let $G'$ be a subgraph of $G$ with $n + m'$ edges. Let $\tau_1, \ldots \tau_m\R^E$ be values satisfying $\tau_e \geq w_e \Reff_{G'}(u,v)$ for any $e = (u,v)$. Let $b$ be a vector, and let $\bar{x} = \mlap_G^\dagger b$. Then if $\solve$ is a $(1600 c^2_s \log^2 n)^{-1}$-Laplacian solver (Definition~\ref{def:solver})', Algorithm $\PreconRichardson(G, G',\mathbf{\tau}, b, \epsilon)$ computes a vector $x$ satisfying
\[
\mathbb{E} \left[ \norm{x - \bar{x}}^2_{\mlap_G} \right] \leq \epsilon \norm{\bar{x}}^2_{\mlap_G}
\]
using $O(\log 1/\epsilon)$ iterations. Each iteration consists of $O(m + \norm{\mathbf{\tau}}_1)$ work plus one call to $\solve$ on a graph with $O(\norm{\mathbf{\tau}}_1 + m')$ edges. 
\label{lem:preconrichardson}
\end{lemma}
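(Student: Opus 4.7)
The plan is to view $\mathsf{PreconRichardson}$ as a noisy preconditioned Richardson iteration whose \emph{good} iterations achieve constant-factor expected contraction in the $\mlap_G$-norm, with both recursive-solver inexactness and the rare oversized-sample event absorbed into a slightly worse but still geometric rate. Let $\bar x = \mlap_G^\pseudo b$ and write $e_i = x_i - \bar x$ and $r_i = \mlap_G x_i - b = \mlap_G e_i$.

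For one iteration I would first analyze the good event that the sampled $H_i$ satisfies both the size bound $|E(H_i)| \leq 1600 \norm{\tau}_p^p + |E(G')|$ and the spectral approximation $(c_s \log n)^{-1} \mlap_G \preceq \mlap_{H_i} \preceq (c_s \log n) \mlap_G$. Invoking Lemma~\ref{lemma:sample} with $\mathbf{X} = \mlap_{G'}$, $\mathbf{Y} = \mlap_G$ and the given leverage-score overestimates $\tau_e \geq w_e \Reff_{G'}(u,v)$, the idealized update $x_i - \tfrac{1}{10}\mlap_{H_i}^\pseudo r_i$ contracts $\E \norm{e_{i+1}}_{\mlap_G}^2$ by at least the factor $1-1/40$. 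The solver instead returns $y_i$ with $\E \norm{y_i - \mlap_{H_i}^\pseudo r_i}_{\mlap_{H_i}}^2 \leq \epsilon_{\mathrm{solve}} \norm{r_i}_{\mlap_{H_i}^\pseudo}^2$ for $\epsilon_{\mathrm{solve}} = 1/(1600 c_s^2 \log^2 n)$. Pushing this extra error through the spectral equivalence twice --- once to upper bound $\norm{\cdot}_{\mlap_G}^2 \leq (c_s \log n)\norm{\cdot}_{\mlap_{H_i}}^2$ on the left, and once to bound $\norm{r_i}_{\mlap_{H_i}^\pseudo}^2 \leq (c_s \log n)\norm{e_i}_{\mlap_G}^2$ on the right via $r_i = \mlap_G e_i$ --- yields $\E \norm{\tfrac{1}{10}(y_i - \mlap_{H_i}^\pseudo r_i)}_{\mlap_G}^2 \leq \norm{e_i}_{\mlap_G}^2 / 160000$. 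Combining with the exact-update contraction through $\norm{a+b}^2 \leq (1+\alpha)\norm{a}^2 + (1+1/\alpha)\norm{b}^2$ for a small constant $\alpha$ then gives a one-step conditional expected contraction of, say, $1 - 1/100$.

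Next I would bound the probability of \emph{bad} iterations, where either the spectral approximation fails (probability $n^{-\Omega(1)}$ by Lemma~\ref{lemma:sample}) or the size check on Line~\ref{line:toobig} fails. For the latter, Lemma~\ref{lemma:sample} bounds the expected number of added matrices by $\sum_e \min\{1, 20\tau_e\} \leq 20\norm{\tau}_p^p$, so Markov's inequality gives size-check failure with probability at most $1/80$. In either bad case the algorithm sets $x_{i+1} = x_i$, which preserves $\norm{e_{i+1}}_{\mlap_G}$, so the unconditional per-iteration expected contraction is still $1-\Omega(1)$; a direct induction then drives $\E \norm{x - \bar x}_{\mlap_G}^2$ below $\epsilon \norm{\bar x}_{\mlap_G}^2$ within the $200 \log \epsilon^{-1}$ iterations performed. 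For the runtime, each iteration takes $O(m+\norm{\tau}_p^p)$ work to draw $H_i$, form $r_i$, and apply the update (Lemma~\ref{lemma:sample}), while the size check ensures every executed solver call is on $H_i$, which contains $G'$ plus at most $O(\norm{\tau}_p^p + m')$ extra edges; $\mathsf{EliminateAndSolve}$ then reduces the inner call to a solve on a graph with $O(\norm{\tau}_p^p + m')$ vertices and edges, at the stipulated precision.

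The main obstacle is the careful reconciliation of the two error sources: the contraction guarantee of Lemma~\ref{lemma:sample} holds only for an \emph{exact} preconditioner solve, whereas our recursive solver introduces in-expectation error naturally measured in the $\mlap_{H_i}$-norm that must be controlled in the $\mlap_G$-norm. This forces the double use of the spectral equivalence between $\mlap_{H_i}$ and $\mlap_G$ and dictates the specific $\epsilon_{\mathrm{solve}} = 1/(1600 c_s^2 \log^2 n)$, chosen precisely so the inexact-solve term is an $O(1/\log^4 n)$ perturbation of the idealized contraction and leaves healthy margin. Handling the skip branch is comparatively painless, since it preserves the iterate and only costs a constant factor in the iteration count.
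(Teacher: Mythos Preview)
Your approach is essentially identical to the paper's: both analyze the idealized Richardson step via Lemma~\ref{lemma:sample}'s $(1-1/40)$ contraction, transfer the inexact-solve error from the $\mlap_{H_i}$-norm to the $\mlap_G$-norm using the high-probability spectral equivalence (this is exactly why $\epsilon_{\mathrm{solve}} = 1/(1600 c_s^2 \log^2 n)$ is chosen), combine the two via the $(1+\alpha)$-splitting inequality, and handle the oversized-sample event by Markov together with the skip branch. The paper's constants differ slightly (it conditions only on the size event $\rho$, uses $\alpha = 1/400$, and arrives at a $1-1/200$ unconditional rate), but the structure is the same.

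One minor inaccuracy: you claim that ``in either bad case the algorithm sets $x_{i+1} = x_i$,'' but Algorithm~\ref{alg:richardson} only checks the edge count on Line~\ref{line:toobig} and has no test for the spectral approximation; on the $n^{-\Omega(1)}$-probability event that the spectral bound fails, the algorithm would still execute the update. The paper does not handle this case separately either---it simply asserts the solver-error chain ``with high probability'' and proceeds---so this does not distinguish your argument from theirs, but your sentence as written misdescribes the algorithm's behavior.
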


\begin{proof}
We bound the expected decrease of $\norm{x_i - \bar{x}}_{\mlap_G}^2$ in an iteration. We first show the condition on \Cref{line:toobig} holds with large probability. By Lemma~\ref{lemma:sample} the expected number of edges added to $G'$ when forming $H$ is at most $20 \norm{\mathbf{\tau}}_1$. Thus by Markov's inequality $H$ contains fewer than $|E(G')| + 1600 \norm{\mathbf{\tau}}_1$ edges with probability at least $1 - \frac{1}{80}$. If we call this event $\rho$,  by Markov's inequality we have
\[
\mathbb{E} \left[ \norm{\bar{x} - \left( x_i - \frac{1}{10} \mlap_{H_i}^\dagger (\mlap_G x_i-b) \right)}^2_{\mlap_G} ~ \Big| ~ \rho \right] \leq \left(1 - \frac{1}{80} \right) \norm{\bar{x} - x_i}^2_{\mlap_G}
\]
where the expectation is over the randomness within a single iteration of the while loop. Now by the guarantees of $\solve$ and Lemma~\ref{lemma:sample} we know that in each iteration with high probability
\begin{align*}
     \mathbb{E} \left[\norm{y_i - \mlap_{H_i}^\dagger r_i}^2_{\mlap_G}\right] &\leq  (c_s \log n) \mathbb{E} \left[\norm{y_i - \mlap_{H_i}^\dagger r_i}^2_{\mlap_{H_i}}\right] 
     \leq \frac{c_s \log n}{1600 c_s^2 \log^2 n} \norm{\mlap_{H_i}^\dagger r_i }^2_{\mlap_{H_i}} \\
     &\leq \frac{1}{1600 c_s \log n} \norm{\mlap_G(\bar{x} - x_i)}^2_{\mlap_{H_i}^\dagger} 
     \leq \frac{c_s \log n}{1600 c_s \log n}  \norm{\mlap_G (\bar{x} - x_i)}^2_{\mlap_G^\dagger} \\
     &= \frac{1}{1600} \norm{\bar{x} - x_i}^2_{\mlap_G}.
\end{align*}
 
Now, note that for any two vectors $u,v$ and any Euclidean norm we have
\[
\norm{u + v}^2 = \norm{u}^2 + \norm{v}^2 + 2 u^\top v \leq (1 +\alpha) \norm{u}^2 + (1 + \alpha^{-1}) \norm{v}^2
\]
for any $\alpha > 0$ by the Cauchy-Schwarz inequality and the AM-GM inequality. With this, we obtain for any $\alpha > 0$
\begin{align*}
\norm{\bar{x} - x_{i+1}}^2_{\mlap_G} &= \norm{\bar{x} - \left( x_i - \frac{1}{10} y_i  + \frac{1}{10} \mlap_{H_i}^\dagger r_i - \frac{1}{10} \mlap_{H_i}^\dagger r_i  \right)}^2_{\mlap_G} \\
&\leq \left(1+ \alpha\right) \norm{\bar{x} - \left(x_i -  \frac{1}{10} \mlap_{H_i}^\dagger r_i \right)}^2_{\mlap_G} + \frac{1}{100} \left(1 + \alpha^{-1} \right) \norm{y_i - \mlap_{H_i}^\dagger r_i}^2_{\mlap_G}.
\end{align*}
Choosing $\alpha = \frac{1}{400}$, we thus obtain 
\begin{align*}
\mathbb{E}\left[ \norm{\bar{x} - x_{i+1}}^2_{\mlap_G} | \rho \right] &\leq \left(1 + \frac{1}{400} \right) \left(1 - \frac{1}{80} \right) \norm{\bar{x} - x_i}^2_{\mlap_G} + \frac{1}{100} \left( \frac{401}{1600} \right) \norm{\bar{x} - x_i}^2_{\mlap_G} \\
&\leq  \left( 1 - \frac{1}{160} \right) \norm{\bar{x} - x_i}^2_{\mlap_G}.
\end{align*}
Therefore we obtain
\begin{align*}
    \mathbb{E}\left[ \norm{\bar{x} - x_{i+1}}^2_{\mlap_G}  \right] &\leq  \Pr(\rho) \mathbb{E}\left[ \norm{\bar{x} - x_{i+1}}^2_{\mlap_G}  | \rho \right] + Pr(\neg \rho) \mathbb{E}\left[ \norm{\bar{x} - x_{i+1}}^2_{\mlap_G}  | \neg \rho \right] \\
    &\leq \left(1 - \frac{1}{80}\right) \left(1 - \frac{1}{160} \right) \norm{\bar{x} - x_{i}}^2_{\mlap_G} + \frac{1}{80} \norm{\bar{x} - x_{i}}^2_{\mlap_G} \\
    &\leq \left(1 - \frac{1}{200} \right) \norm{\bar{x} - x_{i}}^2_{\mlap_G}.
\end{align*}
As we perform $200 \log 1/\epsilon$ iterations, this error decrease guarantee implies the output $x$ satisfies the desired bound of
\[
\mathbb{E}\left[ \norm{\bar{x} - x}^2_{\mlap_G}  \right] \leq \left(1 - \frac{1}{200} \right)^{200 \log 1/\epsilon} \norm{\bar{x} - x_1}_{\mlap_G}^2 \leq \epsilon \norm{\bar{x}}_{\mlap_{G}}^2 = \epsilon \norm{\mlap_G^\dagger b}_{\mlap_G}^2\,.
\]

We now bound the runtime per iteration. In each iteration we make $1$ call to $\Sample$, which by Lemma~\ref{lemma:sample} requires $O(m + \norm{\tau}_1)$ time. Now if the sampled subgraph $H_i$ does not satisfy the condition on line $4$ of the algorithm, we conclude the iteration. If we instead have  $|E(H_i)| \leq 3200 \norm{\tau}_1 + |E(G')|$, the call to $\ElimSolve$ on line $6$ requires $O(m)$ work plus a single call to $\solve$ on a graph with $O(\norm{\tau}_1 + m)$ edges. The claim follows.
\end{proof}
Finally, we apply this primitive recursively to precondition an accelerated gradient descent algorithm with guarantees given by the following theorem. 

\begin{restatable}[Randomized Preconditioned AGD]{theorem}{agdmaintheorem}
	\label{thm:agd}
	Let $\ma, \mb \in \R^{n \times n}$ be symmetric PSD matrices with $\ma \preceq \mb \preceq \kappa \ma$ for $\kappa \geq 1$, let $b \in \im(\ma)$, let $\epsilon \in (0, 1)$ and let $\solve_{\mb}$ be a  $\frac{1}{10\kappa}$-approximate solver for $\mb$. Then 
	$\ApproxAGD(\ma, b, \epsilon, \kappa, \solve_{\mb})$ (\Cref{alg:approx_agd}) is an
	$\epsilon$-approximate solver for $\ma$ and its runtime is the runtime of $O(\sqrt{\kappa} \log(1/\epsilon))$ iterations each of which consist of applying $\ma$ to a vector, invoking $\solve_{\mb}$, and additional $O(n)$ time operations.
\end{restatable}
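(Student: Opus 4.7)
The plan is to view the problem as minimizing the convex quadratic $f(x) = \frac{1}{2} x^\top \ma x - b^\top x$, whose optimizer restricted to $\im(\ma)$ is $\bar{x} = \ma^\pseudo b$ and for which $f(x) - f(\bar{x}) = \frac{1}{2}\|x - \bar{x}\|_\ma^2$. Under the change of coordinates $y = \mb^{1/2}(x - \bar{x})$ (restricted to the shared image of $\ma, \mb$), the function becomes a quadratic with Hessian $\mb^{-1/2} \ma \mb^{-1/2}$, whose nonzero eigenvalues lie in $[1/\kappa, 1]$ by the assumption $\ma \preceq \mb \preceq \kappa \ma$. Standard accelerated gradient descent (AGD) in this preconditioned geometry then converges at rate $1 - \Theta(1/\sqrt{\kappa})$ per iteration, giving the claimed $O(\sqrt{\kappa}\log(1/\epsilon))$ iteration complexity when run exactly.

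The core technical content is replacing each exact application of $\mb^\pseudo$ (needed to compute the preconditioned gradient $\mb^\pseudo(\ma x - b)$) with an invocation of $\solve_\mb$, which only guarantees $\E\|\solve_\mb(r) - \mb^\pseudo r\|_\mb^2 \leq \frac{1}{10\kappa}\|r\|_{\mb^\pseudo}^2$. I would first unfold a standard AGD iteration in terms of its two sequences (the iterate $x_k$ and an auxiliary momentum point $v_k$) and the preconditioned gradient $g_k = \mb^\pseudo(\ma x_k - b)$; then substitute $\tilde g_k = \solve_\mb(\ma x_k - b) = g_k + \xi_k$ for a noise term $\xi_k$ with $\E\|\xi_k\|_\mb^2 \leq \frac{1}{10\kappa}\|\ma x_k - b\|_{\mb^\pseudo}^2$. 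Conditional on past randomness, I would then bound $\|\ma x_k - b\|_{\mb^\pseudo}^2 \leq \|x_k - \bar x\|_\ma^2 \cdot \|\mb\|_{\ma^\pseudo \to \ma^\pseudo}$, which is at most $\|x_k - \bar x\|_\ma^2$ by $\ma \preceq \mb$ (up to a factor of $\kappa$ depending on which norm one picks), so the noise is proportional to the current error.

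The analysis proper will track a standard AGD Lyapunov potential $\Phi_k = A_k (f(x_k) - f(\bar x)) + \frac{1}{2}\|v_k - \bar x\|_\mb^2$ where $A_k$ grows like $(1 + 1/\sqrt{\kappa})^k$. In the exact setting one shows $\Phi_{k+1} \leq \Phi_k$; in our noisy setting I would show $\E[\Phi_{k+1} \mid \mathcal{F}_k] \leq \Phi_k + (\text{noise term})$ and use $\E\|\xi_k\|_\mb^2 \leq \frac{1}{10\kappa}\|\ma x_k - b\|_{\mb^\pseudo}^2 = O(\kappa^{-1})(f(x_k) - f(\bar x))$ to absorb the noise into the progress, yielding a clean geometric decay $\E\Phi_k \leq (1 - \Omega(1/\sqrt{\kappa}))^k \Phi_0$. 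The main obstacle is exactly this step: inexact AGD is notorious for amplifying errors since momentum effectively multiplies perturbations by the iteration count, so the $1/(10\kappa)$ solver accuracy must be shown to be precisely tight enough that the per-step noise lives on a scale $\sqrt{\kappa}$ times smaller than the per-step progress, allowing the potential argument to close despite $O(\sqrt{\kappa})$ iterations. After $O(\sqrt{\kappa} \log(1/\epsilon))$ iterations the expected potential is below $\epsilon \cdot \|\bar x\|_\ma^2$, and since $f(x_k) - f(\bar x) = \frac{1}{2}\|x_k - \bar x\|_\ma^2$ appears (weighted) in $\Phi_k$, extracting the $\epsilon$-solver guarantee of \Cref{def:solver} is then routine.
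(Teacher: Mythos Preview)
Your approach is essentially the paper's: both track a Lyapunov potential combining $\|x_t - x_*\|_\ma^2$ and $\|v_t - x_*\|_\mb^2$, show per-step contraction in expectation after absorbing the $\frac{1}{10\kappa}$ solver noise, and iterate for $O(\sqrt{\kappa}\log(1/\epsilon))$ steps. The paper uses the fixed-weight potential $\|x_t - x_*\|_\ma^2 + \frac{1}{2\kappa}\|v_t - x_*\|_\mb^2$ and proves it shrinks by a factor $1 - \frac{1}{4\sqrt{\kappa}}$ each step (via a helper lemma bounding $\E\|g_t\|_\mb^2$ and $\E\|\Delta_t\|_\mb^2$ in terms of $\|y_t - x_*\|_{\ma\mb^\pseudo\ma}^2$, and a Richardson-style lemma controlling $\E\|x_{t+1}-x_*\|_\ma^2$); this is the dual view of your growing-$A_k$ estimate-sequence formulation.

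One point to clean up: with $A_k$ growing like $(1+1/\sqrt{\kappa})^k$, the potential $\Phi_k = A_k(f(x_k)-f(\bar x)) + \tfrac{1}{2}\|v_k-\bar x\|_\mb^2$ does \emph{not} satisfy $\E\Phi_k \leq (1-\Omega(1/\sqrt{\kappa}))^k\Phi_0$; rather, the estimate-sequence argument gives $\E\Phi_k \lesssim \Phi_0$ and you extract decay of $f(x_k)-f(\bar x)$ by dividing through by $A_k$. Equivalently, divide your $\Phi_k$ by $A_k$ to obtain the paper's fixed-weight potential, which does contract geometrically. This is a notational slip, not a substantive gap; the noise-absorption step you describe (bounding $\E\|\xi_k\|_\mb^2 \le \frac{1}{10\kappa}\|y_k-x_*\|_{\ma\mb^\pseudo\ma}^2 \le \frac{1}{10\kappa}\|y_k-x_*\|_\ma^2$ and showing this is small enough relative to the $1/\sqrt{\kappa}$ progress) is exactly what the paper does.
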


While related theorems are standard to the literature and a deterministic variant analyzing Chebyshev iteration appears in \cite{CohenKMPPRX14}, we provide this theorem both for completeness and to simplify and improve our analysis. The theorem is discused in greater detail and proved in \Cref{sec:chebyshev}. With this, we have the pieces to give our final algorithm for solving Laplacian linear systems:

\begin{algorithm2e}[t]
\LinesNumbered
\KwIn{$G$ graph,$b$ vector, $\LowStretch$ oracle that returns low-stretch subgraphs, $\epsilon \in (0,1/2]$ error tolerance, $\delta \in (0,1)$}
\KwOut{$x$ approximately satisfying $\mlap_G x = b$.}
Let $z$ denote the solution of $f(z) +2 + \delta =z$, where $f$ is defined in \Cref{thm:preconcheby}\\
$\gamma = C (\log \log n)^{z}$, for sufficiently large constant $C$ \\
$\{H,\tau \} \gets \LowStretch(G, z)$ \\ 
$\kappa \gets$ an overestimate of $\norm{\tau}_1$ output by $ \LowStretch, \eta \gets \frac{\gamma \kappa}{m}$ \\
$G' = G + (\eta -1) H$, $\tau' \gets \frac{\tau}{\eta}$ \\
$\mathsf{RecSolver} \gets \Recursive(\cdot,\cdot, \frac{1}{1600 c_s^2 \log^2 n},\LowStretch) $ \\
$\mathsf{RichardsonSolver}_{G'} = \PreconRichardson(G',\eta H,\tau',\cdot, \frac{1}{10 \eta}, \mathsf{RecSolver})$ \label{line:richardson} \\
$x \gets \AGD(G,b,\epsilon, \eta, \mathsf{RichardsonSolver}_{G'})$ \label{line:AGD}\\
\Return{$x$}
\caption{$\mathbf{Z} = \Recursive(G,b,\epsilon,\delta,\LowStretch)$} \label{alg:cheby}
\end{algorithm2e}
\begin{theorem}
\label{thm:preconcheby}
Let $G=(V,E,w)$ be a $n$-node $m$-edge graph and let $\LowStretch$ by an algorithm which takes as input an $n'$-node $m'$-edge graph $G'$ and parameters $C,c>0$ and returns a $\kappa$-distortion subgraph of $G$ with at most $n'+ \frac{m'}{(C \log \log n')^c}$ edges and a corresponding vector of leverage score overestimates in $O(m \log \log n)$ time, where 
\[
\kappa = O\left(m \left( \log \log n'\right)^{f(c)} \right)
\]
for some concave, monotone increasing function $f$ with $f(0) > 0$. Let $\delta > 0$ be a parameter, and let $z$ denote the (unique) solution to the equation $f(z) + 2 +\delta = z$. Then for all sufficiently large $n$ $\Recursive$ is an $\epsilon$-approximate Laplacian solver for $G$ with running time
\[
O\left( m (\log \log n)^{z-1} \log(1/\epsilon) \right).
\]
\end{theorem}
\begin{proof}

We first prove the algorithm is an $\epsilon$-approximate Laplacian solver. We proceed by strong induction on $m$, the number of edges in the input graph $G$. Assume that $\Recursive$'s output is correct for all graphs with fewer than $m$ edges. In one level of recursion, we construct a subgraph $H$ with associated stretch overestimates $\tau$ with $n + \gamma^{-1} m$ edges that achieves $\kappa$-spectral distortion for
\[
\kappa \leq \zeta m (\log \log n)^{f(z)},
\]
where $\zeta$ is an absolute constant. We use this to form a graph $G'$ with stretch overestimates $\tau' = \frac{\tau}{\eta}$. We observe that 
\[
\eta H \preceq G' \quad \text{and} \quad G \preceq G' = (\eta-1) H + G \preceq \eta G
\]
since $H$ is a subgraph of $G$: thus we conclude that $\tau'$ are valid stretch overestimates of the edges in $G'$. Further, the choice of parameters $\eta, \kappa, \gamma$ implies that $\PreconRichardson_{G'}$ on line~\ref{line:richardson} applies $\Recursive$ to graphs with $O(\gamma^{-1} m + \eta^{-1} \kappa) = O(\gamma^{-1} m)$ edges: for sufficiently large constant $C$ we see that this is less than $m$. Thus the calls to $\Recursive$ are correct by induction, and by Lemma~\ref{lem:preconrichardson} $\mathsf{RichardsonSolver}_{G'}$ is a $\frac{1}{10 \eta}$-solver for $G'$. Finally since $G' \approx_{\eta} G$ we conclude by Theorem~\ref{thm:agd} that $\AGD$ is an $\epsilon$-appoximate Laplacian solver: this completes the induction.

We now bound the running time of the algorithm. Fix a constant value $m_0$, and note that $\Recursive$ runs in $O(1)$ time for all graphs with fewer than $m_0$ edges since the proof of correctness implies that the algorithm runs in finite time. Let $\runtime(m,\epsilon)$ denote the running time of our algorithm on a graph with $m \geq m_0$ edges with error parameter $\epsilon$. In one level of recursion, we first perform one call to $\LowStretch$-- this requires $O\left(m \log \log n \right)$ time. All remaining steps in our recursive algorithm are trivially $O(m)$ time except for the call to $\AGD$ on line \ref{line:AGD}. We recall $\mlap_{G} \preceq \mlap_{G'} \preceq \eta \mlap_G$: by Theorem~\ref{thm:agd} the call to $\AGD$ performs $O(\sqrt{\eta} \log(1/\epsilon) )$ iterations, each of which performs $O(m)$ work plus one call to $\mathsf{RichardsonSolver}_{G'}$ with error parameter $\frac{1}{10 \eta}$. Observing that $\norm{\tau'}_1 \leq \eta^{-1} \kappa = \gamma^{-1} m  \leq m$, by Lemma~\ref{lem:preconrichardson} each of these calls to $\PreconRichardson$ runs in time $O(m \log (\eta))$ plus the time needed for $O(\log (\eta))$ calls to $\Recursive$ on graphs with $O(\gamma^{-1} m)$ edges and error parameter $\solveeps = \frac{1}{1600 c_s^2 \log^2 n}$. Thus, one recursive loop of the algorithm performs
\[
O(m \log \log n) + O\left(m \sqrt{\eta} \log(1/\epsilon) \log \eta \right)
\]
work plus the cost of $O\left(\sqrt{\eta} \log(1/\epsilon) \log \eta \right)$ calls to $\Recursive$ on graphs with at most $\beta \gamma^{-1} m $ edges (where $\beta$ is the constant hidden in the big-$O$ notation) and error parameter $\solveeps$. This implies the recurrence
\[
\runtime(m,\epsilon) \leq \psi \left(\log \log n+ \sqrt{\eta} \log(1/\epsilon) \log \eta \right)  \left(m  +  \runtime\left(\beta \gamma^{-1} m, \solveeps \right) \right) 
\]
for some explicit constant $\psi$. Define $\solvertime(m)= \runtime(m, \solveeps)$. We first establish and solve a recurrence for  $\solvertime(m)$: we will use this to prove the full runtime claim. Since $\beta  \gamma^{-1} m  < m$ we may apply our induction hypothesis: further as $\eta \geq 1$ and $\log(1/\solveeps) \leq 2 \log \log n + O(1) \leq 3 \log \log n$ for $n$ sufficiently large we have
\[
\solvertime(m) \leq 4 \psi \left( \sqrt{\eta} \log \eta \log \log n \right) \left( m + \solvertime(\beta \gamma^{-1} m) \right).
\]
We will show that 
\[
4 \psi \left( \sqrt{\eta} \log \eta \log \log n \right) \beta \gamma^{-1} \leq \frac{3}{4}
\]
for the appropriate choice of constant $C$: this will allow us to apply the master theorem for this recurrence. Observe
\begin{equation}
\label{eqn:eta}
\eta = \frac{\gamma \kappa}{m} \leq \zeta \left( \log \log n\right)^{f(z)} \gamma = \zeta \left( \log \log n\right)^{z-2-\delta} \gamma =  \zeta C^{-1} \gamma^2 \left( \log \log n\right)^{-2-\delta}.
\end{equation}
Thus
\[
4 \psi \left( \sqrt{\eta} \log \eta \log \log n \right) \beta \gamma^{-1} \leq 4 \psi \zeta^{1/2} C^{-1/2} \beta \log \eta \left( \log \log n\right)^{-\delta/2}.
\]
As $\log \eta \leq 2 \log \gamma + O(1) = O(\log \log \log n)$ and $m \geq m_0$, we choose $m_0$ sufficiently large and conclude \[
4 \psi \left( \sqrt{\eta} \log \eta \log \log n \right) \beta \gamma^{-1} \leq 4 \psi \zeta^{1/2} C^{-1/2} \leq \frac{3}{4}
\] 
for a sufficiently large choice of $C$. Thus for any $m \geq m_0$ we have
\[
\solvertime(m) \leq \frac{3}{4} \alpha^{-1} \left( m + \solvertime(\alpha m) \right)
\]
for $\alpha = \beta \gamma^{-1}$. The master theorem thus implies
\[
\solvertime(m) \leq O(\alpha m) = O\left(m \left(\log \log n\right)^z \right)
\]
for all $m$. Applying this to the recurrence for $\runtime(m,\epsilon)$, we obtain
\begin{align*}
\runtime(m,\epsilon) &\leq \psi \left(\log \log n+ \sqrt{\eta} \log(1/\epsilon) \log \eta \right)  \left(m  +  \solvertime\left(\beta \gamma^{-1} m \right) \right) \\
&\leq O \left(m \log \log n+ m \sqrt{\eta} \log(1/\epsilon) \log \eta \right)
\end{align*}
since $\solvertime(\beta \gamma^{-1} m) = O(m)$. We note by \eqref{eqn:eta} that
\[
\sqrt{\eta}  \leq O\left( \gamma \left( \log \log n\right)^{-1-\delta/2} \right) = O\left( \left( \log \log n \right)^{z-1-\delta/2} \right).
\]
We now observe $\log \eta \leq \left(\log \log n\right)^{\delta/2}$ for large enough $n$ and $z = 2 + \delta + f(z) > 2$: these together imply
\[
\runtime(m,\epsilon) \leq O\left(m \log \log n + m \left(\log \log n\right)^{z-1} \log(1/\epsilon) \right) = O\left( m \left(\log \log n\right)^{z-1} \log(1/\epsilon) \right)
\]
as desired.

\end{proof}
Finally, we apply the $\kappa$-distortion subgraphs computed in Theorem~\ref{thm:fast_lp_subgraphs} to obtain our main result: 
\solver*
\begin{proof}
We observe that the procedure given in Theorem~\ref{thm:fast_lp_subgraphs} yields $\kappa$-distortion subgraphs with $n + \frac{m}{C (\log \log n)^c}$ edges, for 
\[
\kappa = O\left(m \left(\log \log n\right)^{1+\sqrt{8c} +o(1)} \right).
\]
For sufficiently large $n$ and constant in the big-$O$ notation, this satisfies the conditions of Theorem~\ref{thm:preconcheby} for function $f(x) = 1 + \sqrt{8x} + \delta$ for any $\delta > 0$. Theorem~\ref{thm:fast_lp_subgraphs}  constructs such subgraphs in $O(m)$ time which is sufficient for our guarantee. Substituting these into the guarantee of Theorem~\ref{alg:cheby} gives an $\epsilon$-approximate solver running in time
\[
O\left(m (\log \log n)^{z-1} \log(1/\epsilon)\right),
\]
where $z$ is the solution to $\sqrt{8z} + 3 + 2 \delta = z$ for any $\delta \geq 0$. Solving this equation reveals $z = 7 + \sqrt{40 + 8 \delta} + 2 \delta$: by choosing $\delta$ sufficiently small this gives an exponent of $6 + \sqrt{40} + \chi$ for any $\chi \geq 0$. We remark that $6 + \sqrt{40} \approx 12.324$. 
\end{proof}

We made only limited attempts to optimize the $\loglog$ dependence of this algorithm. We believe this dependence may be improved and here describe possible improvements. First, our method of analyzing the recursion is in some sense, weaker than that of \cite{CohenKMPPRX14}. In their setting the low stretch subgraph is a tree: in that case the recursively generated subproblems natively contain a low-distortion subgraph with $\kappa = O(m)$. This observation enables them to make different choices for the recursion parameter $\eta$. Although this observation is key to ensure their algorithm runs in $\otilde(m \sqrt{\log n})$ time and not $\otilde(m \log n)$, in our case the only effect is to reduce the polynomial dependence on $\log \log n$. By carefully applying this technique to our setting we believe our runtime can be improved.

A larger obstruction in obtaining a better runtime is the use of a ``bottom-up" recursion based on \cite{AKPW} in our construction of $\kappa$-distortion subgraphs. While this suffices to obtain our claim, our running time would be much improved by using a ``top-down" graph decomposition more closely resembing \cite{CohenKMPPRX14}. We leave it as an interesting problem for future work. 

As an additional remark we observe that we can convert the expected-decrease guarantee of Theorem~\ref{thm:lap} to a high-probability bound, provided that we allow our runtime bound to hold in expectation and with an extra $O(\log \log n)$ factor. To obtain this, we use Lemmas~4.5 and 4.9 from \cite{CohenKPPR14preconditionArxiv}, which allow us to construct a linear operator $\mz$ satisfying $\mlap^\dagger \preceq \mz \preceq \log^{4} n \mlap^\dagger$ which can be computed and applied in $O(m \log \log n)$ time.\footnote{Lemma~4.5 from \cite{CohenKPPR14preconditionArxiv} implies an $O(m \log \log n)$-time algorithm to construct a `graph-tree tuple’ with $m+n$ edges and $\norm{\tau}_p^p \leq O(m \log^p n)$ in $O(m)$ time for any $p \in (1/2 , 1)$.  Lemma~4.9 can be modified to use the solver of \cite{CohenKMPPRX14} instead of \cite{KMP11} to apply $\mz$: doing so enables us to apply it in $O(m + \log^{-2p} \norm{\tau}_p^p \sqrt{\log n} (\log \log n)^4 )$ expected time. Plugging in Lemma~4.5 and choosing $p = 2/3$ yields an expected running time bound of $O(m)$.} With this, we simply apply Theorem~\ref{thm:lap} with $\epsilon \gets \frac{\epsilon}{2 \log^4 n}$: Markov's inequality implies the output $x$ has $\norm{\mlap x - b }_{\mlap^\dagger}^2 = \norm{x - \mlap^\dagger b}_\mlap^2 \leq \frac{\epsilon}{\log^4 n} \norm{b}_\mlap^2$ with probability $1/2$. If this holds, we may use $\mz$ to verify in $O(m \log \log m)$ time whether $\norm{x - \mlap^\dagger b}_\mlap^2 \leq \epsilon \norm{b}_{\mlap}^2$. The claim follows by repeating this procedure until a desired solution is found.

\section*{Acknowledgments}

We thank Yair Carmon, Yang P. Liu, Michael Kapralov, Jonathan Kelner, Navid Nouri, Richard Peng, and Jakab Tardos for helpful discussions. We thank anonymous reviewers for feedback on earlier versions of this paper. Aaron Sidford was support in part by a Microsoft Research Faculty Fellowship, NSF CAREER Award CCF-1844855, NSF Grant CCF-1955039, a PayPal research gift award, and a Sloan Research Fellowship. 

\bibliographystyle{alpha}
\bibliography{bib}

\newcommand{\etalchar}[1]{$^{#1}$}
\begin{thebibliography}{BvdBG{\etalchar{+}}20}

\bibitem[AKPW95]{AKPW}
Noga Alon, Richard~M. Karp, David Peleg, and Douglas~B. West.
\newblock A graph-theoretic game and its application to the k-server problem.
\newblock {\em {SIAM} J. Comput.}, 24(1):78--100, 1995.

\bibitem[Awe85]{awerbuch}
Baruch Awerbuch.
\newblock Complexity of network synchronization.
\newblock {\em J. {ACM}}, 32(4):804--823, 1985.

\bibitem[Bar98]{Bartal98}
Yair Bartal.
\newblock On approximating arbitrary metrices by tree metrics.
\newblock In Jeffrey~Scott Vitter, editor, {\em Proceedings of the Thirtieth
  Annual {ACM} Symposium on the Theory of Computing, Dallas, Texas, USA, May
  23-26, 1998}, pages 161--168. {ACM}, 1998.

\bibitem[BGK{\etalchar{+}}14]{BGKMPT14}
Guy~E. Blelloch, Anupam Gupta, Ioannis Koutis, Gary~L. Miller, Richard Peng,
  and Kanat Tangwongsan.
\newblock Nearly-linear work parallel {SDD} solvers, low-diameter
  decomposition, and low-stretch subgraphs.
\newblock {\em Theory Comput. Syst.}, 55(3):521--554, 2014.

\bibitem[BJL{\etalchar{+}}19]{BubeckJLLS19}
S{\'{e}}bastien Bubeck, Qijia Jiang, Yin~Tat Lee, Yuanzhi Li, and Aaron
  Sidford.
\newblock Complexity of highly parallel non-smooth convex optimization.
\newblock In {\em Advances in Neural Information Processing Systems 32: Annual
  Conference on Neural Information Processing Systems 2019, NeurIPS 2019, 8-14
  December 2019, Vancouver, BC, Canada}, pages 13900--13909, 2019.

\bibitem[BP19]{BP19}
Greg Bodwin and Shyamal Patel.
\newblock A trivial yet optimal solution to vertex fault tolerant spanners.
\newblock In Peter Robinson and Faith Ellen, editors, {\em Proceedings of the
  2019 {ACM} Symposium on Principles of Distributed Computing, {PODC} 2019,
  Toronto, ON, Canada, July 29 - August 2, 2019}, pages 541--543. {ACM}, 2019.

\bibitem[BSS14]{BSS}
Joshua~D. Batson, Daniel~A. Spielman, and Nikhil Srivastava.
\newblock Twice-ramanujan sparsifiers.
\newblock {\em {SIAM} Review}, 56(2):315--334, 2014.

\bibitem[BvdBG{\etalchar{+}}20]{BBGNSS}
Aaron Bernstein, Jan van~den Brand, Maximilian~Probst Gutenberg, Danupon
  Nanongkai, Thatchaphol Saranurak, Aaron Sidford, and He~Sun.
\newblock Fully-dynamic graph sparsifiers against an adaptive adversary.
\newblock {\em CoRR}, abs/2004.08432, 2020.

\bibitem[CKM{\etalchar{+}}14]{CohenKMPPRX14}
Michael~B. Cohen, Rasmus Kyng, Gary~L. Miller, Jakub~W. Pachocki, Richard Peng,
  Anup~B. Rao, and Shen~Chen Xu.
\newblock Solving {SDD} linear systems in nearly
  \emph{m}log\({}^{\mbox{1/2}}\)\emph{n} time.
\newblock In {\em Symposium on Theory of Computing, {STOC} 2014, New York, NY,
  USA, May 31 - June 03, 2014}, pages 343--352, 2014.

\bibitem[CKP{\etalchar{+}}14a]{precon-exp}
Michael~B. Cohen, Rasmus Kyng, Jakub~W. Pachocki, Richard Peng, and Anup~B.
  Rao.
\newblock Preconditioning in expectation.
\newblock {\em CoRR}, abs/1401.6236, 2014.

\bibitem[CKP{\etalchar{+}}14b]{CohenKPPR14preconditionArxiv}
Michael~B. Cohen, Rasmus Kyng, Jakub~W. Pachocki, Richard Peng, and Anup~B.
  Rao.
\newblock Preconditioning in expectation.
\newblock {\em CoRR}, abs/1401.6236, 2014.

\bibitem[CKP{\etalchar{+}}16]{CKPPSV16}
Michael~B. Cohen, Jonathan~A. Kelner, John Peebles, Richard Peng, Aaron
  Sidford, and Adrian Vladu.
\newblock Faster algorithms for computing the stationary distribution,
  simulating random walks, and more.
\newblock In Irit Dinur, editor, {\em {IEEE} 57th Annual Symposium on
  Foundations of Computer Science, {FOCS} 2016, 9-11 October 2016, Hyatt
  Regency, New Brunswick, New Jersey, {USA}}, pages 583--592. {IEEE} Computer
  Society, 2016.

\bibitem[CKP{\etalchar{+}}17]{CKPPRSV17}
Michael~B. Cohen, Jonathan~A. Kelner, John Peebles, Richard Peng, Anup~B. Rao,
  Aaron Sidford, and Adrian Vladu.
\newblock Almost-linear-time algorithms for markov chains and new spectral
  primitives for directed graphs.
\newblock In Hamed Hatami, Pierre McKenzie, and Valerie King, editors, {\em
  Proceedings of the 49th Annual {ACM} {SIGACT} Symposium on Theory of
  Computing, {STOC} 2017, Montreal, QC, Canada, June 19-23, 2017}, pages
  410--419. {ACM}, 2017.

\bibitem[CLM{\etalchar{+}}14]{CohenLMMPS14arxiv}
Michael~B. Cohen, Yin~Tat Lee, Cameron Musco, Christopher Musco, Richard Peng,
  and Aaron Sidford.
\newblock Uniform sampling for matrix approximation.
\newblock {\em CoRR}, abs/1408.5099, 2014.

\bibitem[DGN14]{DevolderGN14}
Olivier Devolder, Fran{\c{c}}ois Glineur, and Yurii~E. Nesterov.
\newblock First-order methods of smooth convex optimization with inexact
  oracle.
\newblock {\em Math. Program.}, 146(1-2):37--75, 2014.

\bibitem[DK11]{DK11}
Michael Dinitz and Robert Krauthgamer.
\newblock Fault-tolerant spanners: better and simpler.
\newblock In Cyril Gavoille and Pierre Fraigniaud, editors, {\em Proceedings of
  the 30th Annual {ACM} Symposium on Principles of Distributed Computing,
  {PODC} 2011, San Jose, CA, USA, June 6-8, 2011}, pages 169--178. {ACM}, 2011.

\bibitem[DS08]{DaitchS08}
Samuel~I. Daitch and Daniel~A. Spielman.
\newblock Faster approximate lossy generalized flow via interior point
  algorithms.
\newblock In Cynthia Dwork, editor, {\em Proceedings of the 40th Annual {ACM}
  Symposium on Theory of Computing, Victoria, British Columbia, Canada, May
  17-20, 2008}, pages 451--460. {ACM}, 2008.

\bibitem[GGV11]{GGV11}
Cyril Gavoille, Quentin Godfroy, and Laurent Viennot.
\newblock Node-disjoint multipath spanners and their relationship with
  fault-tolerant spanners.
\newblock In Antonio~Fern{\'{a}}ndez Anta, Giuseppe Lipari, and Matthieu Roy,
  editors, {\em Principles of Distributed Systems - 15th International
  Conference, {OPODIS} 2011, Toulouse, France, December 13-16, 2011.
  Proceedings}, volume 7109 of {\em Lecture Notes in Computer Science}, pages
  143--158. Springer, 2011.

\bibitem[HSS20]{HinderSS20}
Oliver Hinder, Aaron Sidford, and Nimit~Sharad Sohoni.
\newblock Near-optimal methods for minimizing star-convex functions and beyond.
\newblock In {\em Conference on Learning Theory, {COLT} 2020, 9-12 July 2020,
  Virtual Event [Graz, Austria]}, pages 1894--1938, 2020.

\bibitem[KLOS14]{KelnerLOS14}
Jonathan~A. Kelner, Yin~Tat Lee, Lorenzo Orecchia, and Aaron Sidford.
\newblock An almost-linear-time algorithm for approximate max flow in
  undirected graphs, and its multicommodity generalizations.
\newblock In {\em Proceedings of the Twenty-Fifth Annual {ACM-SIAM} Symposium
  on Discrete Algorithms, {SODA} 2014, Portland, Oregon, USA, January 5-7,
  2014}, pages 217--226, 2014.

\bibitem[KLP12]{KLP12}
Ioannis Koutis, Alex Levin, and Richard Peng.
\newblock Improved spectral sparsification and numerical algorithms for {SDD}
  matrices.
\newblock In Christoph D{\"{u}}rr and Thomas Wilke, editors, {\em 29th
  International Symposium on Theoretical Aspects of Computer Science, {STACS}
  2012, February 29th - March 3rd, 2012, Paris, France}, volume~14 of {\em
  LIPIcs}, pages 266--277. Schloss Dagstuhl - Leibniz-Zentrum f{\"{u}}r
  Informatik, 2012.

\bibitem[KLP{\etalchar{+}}16]{KyngLPSS16}
Rasmus Kyng, Yin~Tat Lee, Richard Peng, Sushant Sachdeva, and Daniel~A.
  Spielman.
\newblock Sparsified cholesky and multigrid solvers for connection laplacians.
\newblock In {\em Proceedings of the 48th Annual {ACM} {SIGACT} Symposium on
  Theory of Computing, {STOC} 2016, Cambridge, MA, USA, June 18-21, 2016},
  pages 842--850, 2016.

\bibitem[KMP11]{KMP11}
Ioannis Koutis, Gary~L. Miller, and Richard Peng.
\newblock A nearly-m log n time solver for {SDD} linear systems.
\newblock In Rafail Ostrovsky, editor, {\em {IEEE} 52nd Annual Symposium on
  Foundations of Computer Science, {FOCS} 2011, Palm Springs, CA, USA, October
  22-25, 2011}, pages 590--598. {IEEE} Computer Society, 2011.

\bibitem[KMP14]{KMP14}
Ioannis Koutis, Gary~L. Miller, and Richard Peng.
\newblock Approaching optimality for solving {SDD} linear systems.
\newblock {\em {SIAM} J. Comput.}, 43(1):337--354, 2014.

\bibitem[KMST10]{KMST}
Alexandra Kolla, Yury Makarychev, Amin Saberi, and Shang{-}Hua Teng.
\newblock Subgraph sparsification and nearly optimal ultrasparsifiers.
\newblock In Leonard~J. Schulman, editor, {\em Proceedings of the 42nd {ACM}
  Symposium on Theory of Computing, {STOC} 2010, Cambridge, Massachusetts, USA,
  5-8 June 2010}, pages 57--66. {ACM}, 2010.

\bibitem[KOSA13]{KOSZ}
Jonathan~A. Kelner, Lorenzo Orecchia, Aaron Sidford, and Zeyuan {Allen Zhu}.
\newblock A simple, combinatorial algorithm for solving {SDD} systems in
  nearly-linear time.
\newblock In Dan Boneh, Tim Roughgarden, and Joan Feigenbaum, editors, {\em
  Symposium on Theory of Computing Conference, STOC'13, Palo Alto, CA, USA,
  June 1-4, 2013}, pages 911--920. {ACM}, 2013.

\bibitem[KR96]{KleinbergR96}
Jon~M. Kleinberg and Ronitt Rubinfeld.
\newblock Short paths in expander graphs.
\newblock In {\em 37th Annual Symposium on Foundations of Computer Science,
  {FOCS} '96, Burlington, Vermont, USA, 14-16 October, 1996}, pages 86--95.
  {IEEE} Computer Society, 1996.

\bibitem[KS16]{KS16}
Rasmus Kyng and Sushant Sachdeva.
\newblock Approximate gaussian elimination for laplacians - fast, sparse, and
  simple.
\newblock In Irit Dinur, editor, {\em {IEEE} 57th Annual Symposium on
  Foundations of Computer Science, {FOCS} 2016, 9-11 October 2016, Hyatt
  Regency, New Brunswick, New Jersey, {USA}}, pages 573--582. {IEEE} Computer
  Society, 2016.

\bibitem[Lee14]{Lee14arxiv}
Yin~Tat Lee.
\newblock Probabilistic spectral sparsification in sublinear time.
\newblock {\em CoRR}, abs/1401.0085, 2014.

\bibitem[Li20]{Li20}
Jason Li.
\newblock Faster parallel algorithm for approximate shortest path.
\newblock In Konstantin Makarychev, Yury Makarychev, Madhur Tulsiani, Gautam
  Kamath, and Julia Chuzhoy, editors, {\em Proccedings of the 52nd Annual {ACM}
  {SIGACT} Symposium on Theory of Computing, {STOC} 2020, Chicago, IL, USA,
  June 22-26, 2020}, pages 308--321. {ACM}, 2020.

\bibitem[LS13]{LS13}
Yin~Tat Lee and Aaron Sidford.
\newblock Efficient accelerated coordinate descent methods and faster
  algorithms for solving linear systems.
\newblock In {\em 54th Annual {IEEE} Symposium on Foundations of Computer
  Science, {FOCS} 2013, 26-29 October, 2013, Berkeley, CA, {USA}}, pages
  147--156. {IEEE} Computer Society, 2013.

\bibitem[LSY19]{LSY19}
Yang~P. Liu, Sushant Sachdeva, and Zejun Yu.
\newblock Short cycles via low-diameter decompositions.
\newblock In Timothy~M. Chan, editor, {\em Proceedings of the Thirtieth Annual
  {ACM-SIAM} Symposium on Discrete Algorithms, {SODA} 2019, San Diego,
  California, USA, January 6-9, 2019}, pages 2602--2615. {SIAM}, 2019.

\bibitem[MPX13]{MillerPX13}
Gary~L. Miller, Richard Peng, and Shen~Chen Xu.
\newblock Parallel graph decompositions using random shifts.
\newblock In {\em 25th {ACM} Symposium on Parallelism in Algorithms and
  Architectures, {SPAA} '13, Montreal, QC, Canada - July 23 - 25, 2013}, pages
  196--203, 2013.

\bibitem[MS13]{MonteiroS13a}
Renato D.~C. Monteiro and Benar~Fux Svaiter.
\newblock An accelerated hybrid proximal extragradient method for convex
  optimization and its implications to second-order methods.
\newblock {\em {SIAM} J. Optim.}, 23(2):1092--1125, 2013.

\bibitem[Nes83]{Nesterov1983}
Y.~Nesterov.
\newblock A method for solving the convex programming problem with convergence
  rate $o(1/k^2)$.
\newblock {\em Proceedings of the USSR Academy of Sciences}, 269:543--547,
  1983.

\bibitem[Pen13]{peng13}
Richard Peng.
\newblock Algorithm design using spectral graph theory.
\newblock {\em Ph.D Thesis}, 2013.

\bibitem[Pen16]{Peng16}
Richard Peng.
\newblock Approximate undirected maximum flows in
  $\emph{O}(\emph{m}polylog(\emph{n}))$ time.
\newblock In {\em Proceedings of the Twenty-Seventh Annual {ACM-SIAM} Symposium
  on Discrete Algorithms, {SODA} 2016, Arlington, VA, USA, January 10-12,
  2016}, pages 1862--1867, 2016.

\bibitem[PS14]{PS14}
Richard Peng and Daniel~A. Spielman.
\newblock An efficient parallel solver for {SDD} linear systems.
\newblock In David~B. Shmoys, editor, {\em Symposium on Theory of Computing,
  {STOC} 2014, New York, NY, USA, May 31 - June 03, 2014}, pages 333--342.
  {ACM}, 2014.

\bibitem[SB03]{schrijver2003combinatorial}
A.~Schrijver and Springer-Verlag (Berlin).
\newblock {\em Combinatorial Optimization: Polyhedra and Efficiency}.
\newblock Number v. 1 in Algorithms and Combinatorics. Springer, 2003.

\bibitem[She13]{Sherman13}
Jonah Sherman.
\newblock Nearly maximum flows in nearly linear time.
\newblock In {\em 54th Annual {IEEE} Symposium on Foundations of Computer
  Science, {FOCS} 2013, 26-29 October, 2013, Berkeley, CA, {USA}}, pages
  263--269, 2013.

\bibitem[She17]{Sherman17}
Jonah Sherman.
\newblock Area-convexity, l\({}_{\mbox{{\(\infty\)}}}\) regularization, and
  undirected multicommodity flow.
\newblock In Hamed Hatami, Pierre McKenzie, and Valerie King, editors, {\em
  Proceedings of the 49th Annual {ACM} {SIGACT} Symposium on Theory of
  Computing, {STOC} 2017, Montreal, QC, Canada, June 19-23, 2017}, pages
  452--460. {ACM}, 2017.

\bibitem[SS08]{SS08}
Daniel~A. Spielman and Nikhil Srivastava.
\newblock Graph sparsification by effective resistances.
\newblock In Cynthia Dwork, editor, {\em Proceedings of the 40th Annual {ACM}
  Symposium on Theory of Computing, Victoria, British Columbia, Canada, May
  17-20, 2008}, pages 563--568. {ACM}, 2008.

\bibitem[ST04]{SpielmanT04}
Daniel~A. Spielman and Shang{-}Hua Teng.
\newblock Nearly-linear time algorithms for graph partitioning, graph
  sparsification, and solving linear systems.
\newblock In L{\'{a}}szl{\'{o}} Babai, editor, {\em Proceedings of the 36th
  Annual {ACM} Symposium on Theory of Computing, Chicago, IL, USA, June 13-16,
  2004}, pages 81--90. {ACM}, 2004.

\bibitem[SW19]{SW19}
Thatchaphol Saranurak and Di~Wang.
\newblock Expander decomposition and pruning: Faster, stronger, and simpler.
\newblock In Timothy~M. Chan, editor, {\em Proceedings of the Thirtieth Annual
  {ACM-SIAM} Symposium on Discrete Algorithms, {SODA} 2019, San Diego,
  California, USA, January 6-9, 2019}, pages 2616--2635. {SIAM}, 2019.

\end{thebibliography}

\appendix

\section{Ultrasparsifiers by Spectral Graph Theory}
\label{sec:ultrasparsifiers}

In this section, we prove our two main results regarding ultrasparsifiers. We begin by proving the existence of ultrasparsifiers for sums of arbitrary rank-$1$ matrices.

\bss*

To obtain this result we first give the following \Cref{thm:badtree} regarding spectral properties of 
subsets of sums of rank one matrices and then we use it to prove \Cref{thm:ultrasparsifer_existence}.

\begin{theorem}
\label{thm:badtree}
Let $v_1 ... v_m \in \R^{n}$ and let $\ma = \sum_{i \in [m]} v_i v_i^\top$ be full rank. For any $k \geq 1$ there exists $S \subset [m]$ with $|S| \leq n + \frac{n}{k}$ such that $\mb \defeq \sum_{i \in S} v_i v_i^\top$ has $\Tr[\mb^{-1} \ma] \leq mk$.
\end{theorem}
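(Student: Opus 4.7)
The plan is to prove this by a reverse deletion argument maintaining the spectral potential $\Phi(\mb) \defeq \Tr[\mb^{-1}\ma]$. I would start with $S = [m]$, so that $\mb = \ma$ and $\Phi_0 = n$, and greedily remove vectors one at a time---each time choosing the removal that increases $\Phi$ by the least amount---until $|S| = n + \lceil n/k\rceil$. The main claim is that this process maintains a telescoping bound on $\Phi$ that ultimately gives $\Phi \leq mk$.

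The key step is a Sherman--Morrison calculation. When a vector $v_j$ with leverage $\tau_j \defeq v_j^\top \mb^{-1} v_j < 1$ is deleted from $\mb$, the potential increases by exactly
\[
\Delta_j = \frac{v_j^\top \mb^{-1}\ma\mb^{-1}v_j}{1-\tau_j}.
\]
I would bound $\min_j \Delta_j$ via the elementary inequality $\min_j(a_j/b_j) \leq (\sum_j a_j)/(\sum_j b_j)$ applied to the noncritical subset. Using $\sum_{j \in S} v_j v_j^\top = \mb$, one checks $\sum_{j\in S} v_j^\top \mb^{-1}\ma\mb^{-1}v_j = \Tr[\mb^{-1}\ma\mb^{-1}\mb] = \Phi$ and $\sum_{j\in S}(1-\tau_j) = |S| - \Tr[\mb^{-1}\mb] = |S| - n$. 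Since critical vectors ($\tau_j = 1$) contribute zero to the denominator sum and nonnegatively to the numerator sum, the same estimate carries over to the noncritical subset, so some removable $v_{j^\star}$ satisfies $\Delta_{j^\star} \leq \Phi/(|S|-n)$.

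Iterating gives $\Phi_{t+1} \leq \Phi_t \cdot \frac{|S_t|-n+1}{|S_t|-n}$. Telescoping over the $m - n - \lceil n/k\rceil$ deletion steps (as $|S_t|$ decreases from $m$ down to $n+\lceil n/k\rceil$) yields
\[
\Phi_{\mathrm{final}} \leq n \cdot \prod_{s=\lceil n/k\rceil+1}^{m-n} \frac{s+1}{s} = n \cdot \frac{m-n+1}{\lceil n/k\rceil+1} \leq mk,
\]
which is exactly the desired bound, since $\lceil n/k\rceil + 1 \geq (n+k)/k$ and $m - n + 1 \leq m$.

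The main obstacle I anticipate is ensuring noncritical vectors always exist in sufficient supply so the greedy removal can proceed. This is fortunately automatic: since $\sum_{j\in S}\tau_j = n$ at every stage, at most $n$ vectors can be critical, so whenever $|S|>n$ there are at least $|S|-n$ noncritical candidates---exactly what the min-ratio argument requires to carry the deletion all the way down to $|S| = n + \lceil n/k\rceil$. Once the telescoping estimate is established, the theorem follows immediately.
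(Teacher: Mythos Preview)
Your proposal is correct and follows essentially the same approach as the paper: start with $S=[m]$, greedily delete vectors using the Sherman--Morrison increment $\frac{v_j^\top \mb^{-1}\ma\mb^{-1}v_j}{1-\tau_j}$, bound the best deletion by the ratio of sums (the paper phrases this as a random choice with probability proportional to $1-\tau_j$ and takes the expectation, which is the same computation), and telescope the factors $\frac{|S|-n+1}{|S|-n}$. Your handling of critical vectors and the telescoping indices is in fact slightly more careful than the paper's writeup.
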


\begin{proof}
We start with $S = [m]$ and greedily remove elements from $S$ to minimize the increase in $\Tr[\mb^{-1} \ma]$. Observe that the initial value of this trace is $n$ since $\mb = \ma$. By the Sherman-Morrison formula for rank 1 matrix updates, for any invertible matrix $\mm$ and vector $v$ with $v^\top \mm^{-1} v \neq 1$
\[
\left(\mm - v v^\top \right)^{-1} = \mm^{-1} + \frac{\mm^{-1} v v^\top \mm^{-1}}{1 - v^\top \mm^{-1} v}
\text{ and is invertible} ~.
\]
With this, we analyze the change to $\Tr[\mb^{-1} \ma]$ after one element is removed from $S$. For any $i \in S$,
\[
\Tr \left[\left( \mb - v_i v_i^\top \right)^{-1} \ma \right] = \Tr[ \mb^{-1} \ma] + \frac{v_i^\top \mb^{-1} \ma \mb^{-1} v_i}{1 - v_i^\top \mb^{-1} v_i}
\]

We consider randomly sampling $i$ to remove from $S$ with probability $p_i \propto 1 - v_i^\top \mb^{-1} v_i$. We will show the increase to the trace is bounded in expectation. Since for $i \in S$, $v_i v_i^\top \preceq \mb$, all the $p_i$ are all nonnegative and non-zero only when $1 - v_i^\top \mb^{-1} v_i > 0 $. 
Selecting $i$ in this way yields
\begin{align*}
\mathbb{E} \left[ \Tr \left[\left( \mb - v_i v_i^\top \right)^{-1} \ma \right] \right] 
&= \Tr[ \mb^{-1} \ma] + \sum_{i \in S} p_i \frac{v_i^\top \mb^{-1} \ma \mb^{-1} v_i}{1 - v_i^\top \mb^{-1} v_i} \\
&= \Tr[ \mb^{-1} \ma] +  \frac{\sum_{i \in S} v_i^\top \mb^{-1} \ma \mb^{-1} v_i}{\sum_{i \in S} 1 - v_i^\top \mb^{-1} v_i}.
\end{align*}
By the cyclic property of trace, we observe $\sum_{i \in S} v_i^\top \mb^{-1} \ma \mb^{-1} v_i = \Tr[\mb \mb^{-1} \ma \mb^{-1}] = \Tr[\mb^{-1} \ma]$ and $\sum_{i \in S} 1 - v_i^\top \mb^{-1} v_i = |S| - n$. Applying these equations yields
\[
\mathbb{E} \left[ \Tr \left[\left( \mb - v_i v_i^\top \right)^{-1} \ma \right] \right] =  \Tr[ \mb^{-1} \ma] \left(1 + \frac{1}{|S| - n} \right)
\]
and therefore whenever $|S| > n$, there exists some $i \in S$ satisfying
\[
\Tr \left[\left( \mb - v_i v_i^\top \right)^{-1} \ma \right] \leq  \Tr[ \mb^{-1} \ma] \left(1 + \frac{1}{|S| - n} \right) = \Tr[ \mb^{-1} \ma] \left(\frac{|S| - n + 1}{|S| - n} \right).
\]
By repeatedly applying this bound from $|S| = m$ to $|S| = n + \lceil\frac{n}{k}\rceil$, we remove all but $n + \frac{n}{k}$ elements from $S$ and end up with $\mb$ satisfying the desired bound of
\[
\Tr \left[ \mb^{-1} \ma \right] \leq n \prod_{s = n + \lceil\frac{n}{k}\rceil}^m \left(\frac{s - n +1}{s - n} \right) = n \left( \frac{m - n + 1}{n/k} \right) \leq mk ~.
\]
\end{proof}

We now show a modification of \cite{BSS} which allows us to convert the output of \Cref{thm:badtree} into an ultrasparsifier. 

\begin{theorem} Let $v_1, v_2, ..., v_m \in \R^n$ be vectors such that $\sum_i v_i v_i^\top = \eye$. Let $\ma \in \R^{n \times n}$ be a matrix satisfying $\ma \preceq \eye$ and $\Tr [ \ma^{-1} ] = \kappa$. Then for any $q \geq 0$ with $\lceil (\kappa + 2n) q \rceil \leq n$,  there exists $S \subseteq [m]$ with $|S| = \lceil (\kappa + 2n) q  \rceil$  with corresponding weights $w_i > 0$ such that
\[
q \eye \preceq \ma + \sum_{i \in S} w_i v_i v_i^\top \preceq 3 \eye. 
\]
\label{thm:bss-ultra}
\end{theorem}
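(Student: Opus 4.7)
The plan is to adapt the barrier-function argument of \cite{BSS} to an initialization at $\mb_0 = \ma$ rather than at the zero matrix. The hypotheses $\ma \preceq \eye$ and $\Tr[\ma^{-1}] = \kappa < \infty$ together imply that $\ma$ is strictly positive definite, so from the very first step I can maintain both an upper barrier $u$ (ensuring $\lambda_{\max}(\mb) < u$) and a lower barrier $l$ (ensuring $\lambda_{\min}(\mb) > l$). Define the standard potentials
\[
\Phi^u(\mb) \defeq \Tr[(u\eye - \mb)^{-1}] \quad \text{and} \quad \Phi^l(\mb) \defeq \Tr[(\mb - l\eye)^{-1}],
\]
and initialize $u_0 = 2$ and $l_0 = 0$. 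Since $2\eye - \ma \succeq \eye$, I get $\Phi^{u_0}(\mb_0) \leq n$, and by hypothesis $\Phi^{l_0}(\mb_0) = \Tr[\ma^{-1}] = \kappa$.

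At iteration $j$ I select $i_j \in [m]$ and $t_j > 0$ so that the update $\mb_{j+1} = \mb_j + t_j v_{i_j} v_{i_j}^\top$ together with the shifts $u_{j+1} = u_j + \delta_u$, $l_{j+1} = l_j + \delta_l$ does not increase either potential. Exactly as in \cite{BSS}, a Sherman--Morrison calculation reduces this to finding some $i$ whose standard BSS upper/lower scores $U_{\delta_u}(\mb_j, i)$ and $L_{\delta_l}(\mb_j, i)$ satisfy
\[
U_{\delta_u}(\mb_j, i) \;\leq\; \tfrac{1}{t_j} \;\leq\; L_{\delta_l}(\mb_j, i).
\]
Using $\sum_i v_i v_i^\top = \eye$ and the inductive invariant that the potentials never exceed their initial values, I obtain the sum bounds
\[
\sum_i U_{\delta_u}(\mb_j, i) \;\leq\; \tfrac{1}{\delta_u} + \Phi^{u_j}(\mb_j) \;\leq\; \tfrac{1}{\delta_u} + n, \qquad \sum_i L_{\delta_l}(\mb_j, i) \;\geq\; \tfrac{1}{\delta_l} - \Phi^{l_j}(\mb_j) \;\geq\; \tfrac{1}{\delta_l} - \kappa.
\]
Hence a valid $(i_j, t_j)$ exists whenever $1/\delta_l - \kappa \geq 1/\delta_u + n$.

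I then take $T = \lceil (\kappa + 2n) q \rceil$, $\delta_u = 1/T$, and $\delta_l = q/T$. After $T$ iterations we reach $u_T = 3$ and $l_T = q$, so $q\eye \preceq \mb_T \preceq 3\eye$; letting $S$ be the set of distinct indices $i_j$ selected and $w_i = \sum_{j : i_j = i} t_j > 0$ yields $\mb_T = \ma + \sum_{i \in S} w_i v_i v_i^\top$ with $|S| \leq T$. The per-step feasibility condition becomes
\[
\tfrac{T}{q} - \kappa \;\geq\; T + n \quad \iff \quad T(1-q) \;\geq\; (n+\kappa)q,
\]
which, after plugging in $T = (\kappa + 2n)q$ and simplifying, is precisely $(\kappa + 2n)q \leq n$ -- exactly the hypothesis of the theorem.

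The main obstacle I expect is carrying the standard BSS score bookkeeping through with the nonzero initial matrix $\ma$: in the original BSS argument one starts with $\mb_0 = 0$ and asymmetric barriers, whereas here the initial potentials $\Phi^{u_0}(\mb_0) \leq n$ and $\Phi^{l_0}(\mb_0) = \kappa$ must be propagated inductively and are precisely what produce the $n$ and $\kappa$ terms in the feasibility condition. Once the per-step inequalities are verified with this modified initialization, the rest of the argument is the delicate balancing of $\delta_u$, $\delta_l$, and the barrier endpoints above, which is forced by the constraint $(\kappa + 2n)q \leq n$ stated in the theorem.
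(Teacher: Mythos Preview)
Your proposal is correct and follows essentially the same approach as the paper: initialize the BSS barrier argument at $\mb_0=\ma$ with $u_0=2$, $l_0=0$, use the potential bounds $\Phi^{u_0}(\ma)\leq n$ and $\Phi_{l_0}(\ma)=\kappa$, and iteratively add weighted rank-one terms while shifting both barriers. The only cosmetic difference is the choice of step sizes---the paper takes $\delta_U=1/n$ and $\delta_L=1/(2n+\kappa)$ (so that the feasibility inequality $1/\delta_U+n\le 1/\delta_L-\kappa$ becomes $2n\le 2n$), whereas you parametrize by $T=\lceil(\kappa+2n)q\rceil$ and take $\delta_u=1/T$, $\delta_l=q/T$; both choices satisfy the BSS per-step condition under the hypothesis $\lceil(\kappa+2n)q\rceil\le n$ and land at the same barrier endpoints $l=q$, $u=3$.
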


Our proof of this result is as a consequence of technical lemmas from \cite{BSS} restated below.

\begin{lemma}[Combination of Lemmas 3.3, 3.4, 3.5 from \cite{BSS}]
Let $v_1, v_2, ... v_m \in \R^n$ be vectors such that $\sum_i v_i v_i^\top = \eye$. Define the functions $\Phi^u(\mm) \defeq \Tr [(u \eye - \mm )^{-1}]$ and $\Phi_l(\mm) \defeq \Tr [(\mm - l \eye)^{-1}]$.  Let $\ma$ be a matrix satisfying $l \eye \preceq \ma \preceq u \eye$ as well as
\[
\Phi^u\left(\ma\right) \leq \gamma_U \quad \text{and} \quad \Phi_l\left(\ma\right) \leq \gamma_L.
\]
Then for $\delta_U, \delta_L$ satisfying $1/\delta_U + \gamma_U \leq 1/\delta_L - \gamma_L$ there exists $i \in [m]$ and $t>0$ such that $(l + \delta_L) \eye \preceq \ma + t v_i v_i^\top \preceq (u + \delta_U) \eye$ as well as
\[
\Phi^{u+\delta_U}\left(\ma + t v_i v_i^\top \right) \leq \gamma_U \quad \text{and} \quad \Phi_{l+\delta_L}\left(\ma + t v_i v_i^\top \right) \leq \gamma_L.
\]
\label{lemma:bss_technical}
\end{lemma}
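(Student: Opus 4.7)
The plan is to follow the classical barrier-function argument of Batson--Spielman--Srivastava, which bounds the allowable weights $t$ that can be used to update $\ma$ by $tv_iv_i^\top$ without violating either the shifted upper or lower barrier. First, I would use the Sherman--Morrison formula applied to $M = (u+\delta_U)\eye - \ma$ and to $M' = \ma - (l+\delta_L)\eye$ to derive closed-form expressions for $\Phi^{u+\delta_U}(\ma+tv_iv_i^\top)$ and $\Phi_{l+\delta_L}(\ma+tv_iv_i^\top)$ in terms of quadratic forms $v_i^\top (u'\eye-\ma)^{-k} v_i$ and $v_i^\top (\ma-l'\eye)^{-k} v_i$ for $k=1,2$, where $u' = u+\delta_U$ and $l' = l+\delta_L$.

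Next, I would rearrange these expressions to obtain an ``upper potential cost'' function $U_\ma(v_i)$ and a ``lower potential payoff'' function $L_\ma(v_i)$ with the property that the single-step update $\ma \mapsto \ma + tv_iv_i^\top$ preserves both potential bounds provided
\[
\frac{1}{U_\ma(v_i)} \geq t \geq \frac{1}{L_\ma(v_i)}\,.
\]
Concretely, $U_\ma(v) = \tfrac{v^\top (u'\eye-\ma)^{-2} v}{\gamma_U - \Phi^{u'}(\ma)} + v^\top (u'\eye-\ma)^{-1} v$ and an analogous expression for $L_\ma$ with $(\ma - l'\eye)$ and with the sign of the second term flipped. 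So my task reduces to exhibiting one $i$ with $U_\ma(v_i) \leq L_\ma(v_i)$; averaging over $i$ (weighted by the trivial measure) using $\sum_i v_iv_i^\top = \eye$ suffices.

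For the averaging step I would use the identity $\sum_i v_i^\top \mm v_i = \Tr[\mm]$ together with the telescoping monotonicity estimate
\[
\Phi^u(\ma) - \Phi^{u'}(\ma) \;=\; \sum_{\lambda} \frac{\delta_U}{(u-\lambda)(u'-\lambda)} \;\geq\; \delta_U \,\Tr\!\left[(u'\eye-\ma)^{-2}\right],
\]
which bounds the awkward $\Tr[(u'\eye-\ma)^{-2}]$ term by $(\gamma_U - \Phi^{u'}(\ma))/\delta_U$ and yields the clean bound $\sum_i U_\ma(v_i) \leq 1/\delta_U + \gamma_U$. The analogous computation for the lower barrier, using the matching inequality $\Phi_{l'}(\ma) - \Phi_l(\ma) \geq \delta_L \Tr[(\ma-l'\eye)^{-2}]$, gives $\sum_i L_\ma(v_i) \geq 1/\delta_L - \gamma_L$. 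The hypothesis $1/\delta_U + \gamma_U \leq 1/\delta_L - \gamma_L$ then forces $\sum_i U_\ma(v_i) \leq \sum_i L_\ma(v_i)$, so some index $i$ achieves $U_\ma(v_i) \leq L_\ma(v_i)$ and any $t$ in the resulting interval $[1/L_\ma(v_i),\, 1/U_\ma(v_i)]$ works.

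Finally, I would argue that for this choice of $(i,t)$ the finiteness of $\Phi^{u+\delta_U}(\ma + tv_iv_i^\top)$ and $\Phi_{l+\delta_L}(\ma + tv_iv_i^\top)$ automatically implies the spectral containment $(l+\delta_L)\eye \preceq \ma + tv_iv_i^\top \preceq (u+\delta_U)\eye$, since the trace of a positive-semidefinite inverse can be finite only when the argument is strictly positive definite. The main technical obstacle will be carrying out the Sherman--Morrison expansion cleanly and verifying that the right-hand sides of the $U_\ma(v_i) \leq t^{-1}$ and $t^{-1} \leq L_\ma(v_i)$ inequalities are exactly what one gets after solving the barrier-preservation constraints, so that the averaged bounds match the hypothesis $1/\delta_U + \gamma_U \leq 1/\delta_L - \gamma_L$ with no slack.
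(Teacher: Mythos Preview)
Your proposal is correct and is precisely the standard Batson--Spielman--Srivastava barrier argument; the paper does not supply its own proof of this lemma but simply cites it as a combination of Lemmas~3.3--3.5 of \cite{BSS}, so your sketch matches the referenced proof essentially verbatim. The only minor caveat is your final paragraph: finiteness of the potentials alone does not quite give the spectral containment (the potential could be finite yet have a negative eigenvalue in the argument), and in BSS this is handled by a continuity argument as $t$ increases from $0$, but this is a routine detail.
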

Our use of \Cref{lemma:bss_technical} mirrors its use in \cite{KMST}: we iteratively add vectors to the output of \Cref{thm:badtree} to increase the spectral upper and lower bounds on $\ma$ appropriately. After a small number of iterations, we certify that $\ma$ is appropriately spectrally bounded and terminate.
\begin{proof}[Proof of \Cref{thm:bss-ultra}]
We let $\ma^{(0)} \defeq \ma$ and for $j \geq 0$ iteratively define $\ma^{(j+1)} \defeq \ma^{(j)} + t v_i v_i^\top$ for some $t \geq 0$ and $i  \in [m]$ (each depending on $j$). Further, we set $\gamma_U = n$, $\gamma_L = \kappa$, $u_0 = 2$, $l_0 = 0$, $\delta_U = \frac{1}{n}$, $\delta_L = \frac{1}{2n + k}$. Observe that
\[
\Phi^{u_0}\left(\ma^{(0)}\right) = \Tr \left( 2 \eye - \ma \right)^{-1} \leq \Tr \left( \eye \right) \leq n =  \gamma_U
\]
and 
\[
\Phi_{l_0}\left(\ma^{(0)}\right) = \Tr \left( \ma^{-1} \right) = \kappa = \gamma_L.
\]
Further, the choice of parameters ensures $1/\delta_U + \gamma_U \leq 1/\delta_L - \gamma_L$. Thus, inductively applying \Cref{lemma:bss_technical} yields that for each $j > 1$ there exists $t > 0$, and $i \in [S]$ where $\ma^{(j)} = \ma^{(j-1)} + t v_i v_i^\top$ satisfies
\[
\Phi^{u_0 + j\delta_U}\left(\ma^{(j)} \right) \leq \gamma_U \quad \text{and} \quad \Phi_{l_0 + j \delta_L} \left(\ma^{(j)} \right) \leq \gamma_L.
\]
Therefore $\ma^{(s)}$ for  $s = \lceil (\kappa + 2n)q \rceil$ satisfies the desired bound of
\[
q \eye  \preceq \left( l_0 + s \delta_L \right) \eye \preceq \ma^{(s)} \preceq \left( u_0 + s \delta_U \right) \eye \preceq 3 \eye  
\]
where in the last inequality we used the upper bound on $s$.
\end{proof}
Finally, we combine \Cref{thm:badtree} and \Cref{thm:bss-ultra} and give the proof of \Cref{thm:ultrasparsifer_existence}.
\begin{proof}[Proof of \Cref{thm:ultrasparsifer_existence}]
We note that we may assume $\ma = \sum_{i \in [m]} v_i v_i^\top$ is full-rank: otherwise we may add $u_j \in \ker(\ma)$ to make the result full rank, run the rest of the proof, and remove the added $u_j$ before returning the output. Since the $u_j$ are orthogonal to the $v_i$, removing them cannot affect the space spanned by $\ma$'s eigenvectors. 

By applying \cite{BSS}, we can find a collection of $34 n$ vectors $v'_1, v'_2, \dots, v'_{34 n}$ which are reweighted copies of the $v_i$ and satisfy $\ma \preceq \sum_{i \in [16 n]} v'_i \left(v'_i\right)^\top \preceq 2 \ma$. We assume $k>1000$ in the rest of the argument: the $v'_i$ immediately satisfy our requirements otherwise as $k$ was assumed to be at least $2$. 
\newcommand{\mc}{\mathbf{C}}

Given the vectors $v'_1, v'_2, \dots, v'_{34 n}$, define $\mc = \sum_{i \in [34 n]} v'_i \left(v'_i\right)^\top$: note that $\frac \ma \preceq \mc \preceq 2 \ma$. As $\mc$ is therefore full-rank, we apply \Cref{thm:badtree} and thus obtain a set $S$ of $n + \frac{n}{k}$ vectors such that 
\[
\Tr \left( \mc \left(\sum_{i \in S} v_i v_i^\top \right)^{-1} \right) \leq 34 n k.
\]
Thus, $\mb = \mc^{-1/2} \left(\sum_{i \in S} v_i v_i^\top \right) \mc^{-1/2}$ has $\mb \preceq \eye$ (as it is formed from an unweighted subset of the vectors that form $\mc$) and $\Tr (\mb^{-1}) \leq 34 nk$. Applying \Cref{thm:bss-ultra} with $\kappa = 34 nk$ and $q = \frac{1}{36 k^2}$, we observe that there exists a set $T$ of  $(\kappa + 2n) q \leq \frac{n}{k}$ vectors with corresponding weights $w_i \geq 0$ such that 
\[
\frac{1}{36 k^2} \mc \preceq \mc^{1/2} \mb \mc^{1/2} + \sum_{i \in T} w_i v_i v_i^\top \preceq 3 \mc. 
\]
Using the definition of $\mb$ and rearranging, we obtain
\[
\frac{1}{36k^2} \ma \preceq \frac{1}{36 k^2} \mc \preceq \sum_{i \in S \cup T} w_i v_i v_i^\top \preceq 3 \mc \preceq 6 \ma 
\]
Finally, $|S \cup T| \leq |S| + |T| \leq n + \frac{n}{k} + \frac{n}{k} = n + \frac{2n}{k}$: the output is a sum of outer products of at most $n + \frac{2n}{k}$ vectors. The claim follows by choosing $k \leftarrow 216 k^2$  and scaling the output. 
\end{proof}
We made no attempt to optimize the constants in the above proof. We additionally remark that Theorem~\ref{thm:ultrasparsifer_existence} immediately implies the existence of $k$-ultrasparsifiers with $n + O\left(\frac{n}{\sqrt{k-1}}\right)$ edges: we simply apply it to a graph Laplacian written in the form $\sum_{e \in E(G)} \left( \sqrt{w_e} b_e\right)\left( \sqrt{w_e} b_e\right)^\top$. 
We now construct ultrasparsifiers for graphs with improved guarantees by combining our $\kappa$-distortion subgraph construction with this BSS-derived framework. We begin with the general claim of our construction: we specialize it to several interesting parameter regimes as a corollary. 

\begin{theorem}[Polynomial-Time Ultrasparsifier Construction in Graphs]
\label{thm:ultra}
Let $G$ be any $n$-vertex graph with polynomially-bounded edge weights. There exists a polynomial time algorithm which constructs a reweighted subgraph $H$ with $n + O\left( \frac{n}{\gamma} \right)$ edges such that $\mlap_H \preceq \mlap_G \preceq \alpha \mlap_H$ for 
\begin{align*}
\alpha = O\left(\gamma \exp\left( \sqrt{8 \log \gamma \cdot \log \left(48 \log \left(\gamma  \log^{10} n\right) \sqrt{\log \gamma}  \right)} \right) \log  \left(\gamma  \log^{10} n\right) \sqrt{\log \gamma} \right) 
\end{align*}
\end{theorem}
\begin{proof}
We note that by preprocessing the graph with \cite{BSS}, we may assume that the graph has $O(n)$ edges with at most a constant factor loss in the final approximation error. We employ Theorem~\ref{thm:akpw} with $k = \gamma \log^{10} n$ and path sparsification algorithm given by Theorem~\ref{thm:path-sparse}. By the theorem's guarantee, this returns a subgraph $G'$ in polynomial time with 
\[
n + O\left( \frac{n \log^{10} n}{k} + \frac{n \log \gamma}{k^2} + \frac{n}{\gamma} \right) = n + O\left( \frac{n \log^{10} n}{\gamma \log^{10} n} + \frac{n}{\gamma} \right) = n + O\left(\frac{n}{\gamma}\right)
\]
edges. Further, the output subgraph is a $\kappa$-distortion subgraph with 
\[
\kappa =  O\left(n \gamma \exp\left( \sqrt{8 \log \gamma \cdot \log \left(48 \log \left(\gamma  \log^{10} n\right) \sqrt{\log \gamma}  \right)} \right) \log  \left(\gamma  \log^{10} n\right) \sqrt{\log \gamma} \right).
\]
We note that $\Tr\left[ \mlap_{G}^{1/2} \mlap_{G'}^{\dagger} \mlap_{G}^{1/2} \right] \leq \kappa$ by definition of $\kappa$-distortion, and that the collection of rank-1 matrices $w_e \mlap_{G}^{-1/2} b_e b_e^\top \mlap_{G}^{-1/2}, e \in G$ sums to $\eye$.
If $b_e b_e^\top$ are the edge Laplacian matrices that form $\mlap_G$, applying Theorem~\ref{thm:bss-ultra} with $\alpha \defeq 3 \frac{n}{\kappa \gamma}$ yields a set $S$ of $\left\lceil(\kappa + 2 n) \alpha \right\rceil = O\left(\frac{n}{\gamma}\right)$ edges with corresponding weights $w'_i > 0$ such that 
\[
3 \alpha \eye \preceq \mlap_G^{-1/2} \mlap_{G'} \mlap_{G}^{-1/2} + \sum_{e \in S} w'_e \mlap_{G}^{-1/2} b_e b_e^\top \mlap_{G}^{-1/2} \preceq 3 \eye.
\]
Scaling down the resulting matrix and multiplying both sides of the matrices by $\mlap_G^{1/2}$ gives a reweighted subgraph $H$ with $n + O\left(\frac{n}{\gamma}\right)$ such that $\alpha \mlap_G \preceq \mlap_H \preceq \mlap_G$ as desired. 
\end{proof}

As a corollary of this result, we prove \Cref{thm:ultra_export}, which we now recall.

\ultraexport*

\begin{proof}
We employ \Cref{thm:ultra} with different values of $\gamma$. For the first claim, we choose $\gamma = \left(\log \log n\right)^c$, and note that \Cref{thm:ultra} yields $\alpha$-ultrasparsifiers, with
\begin{align*} 
\alpha &= O\left( \gamma \exp\left( \sqrt{ 8 \log \gamma \cdot \log \left(48 \log \left(\gamma \log^{10} n\right) \sqrt{\log \gamma}\right)} \right) \log\left(\gamma \log^{10} n\right) \sqrt{\log \gamma} \right)\\
&= O\left( \left(\log \log n \right)^{c + \sqrt{8c} + 1 + o(1)} \right)
\end{align*}
by applying the definition of $\gamma$. For the second claim, we choose $\gamma = \alpha$ in \Cref{thm:ultra}: we obtain an ultrasparsifier of quality
\begin{align*} 
O\left( \alpha \exp\left( \sqrt{ 8 \log \alpha \cdot \log \left(48 \log \left(\alpha \log^{10} n\right) \sqrt{\log \alpha}\right)} \right) \log\left(\alpha \log^{10} n\right) \sqrt{\log \alpha} \right).
\end{align*}
Since $\alpha = \omega(\log^\delta n)$ for some fixed constant $\delta > 0$, we have $\log\left( \alpha \log^{10} n\right) \leq \log \left(\alpha^{1+ 10/\delta} \right) +O(1) = \left(1 + \frac{10}{\delta} \right) \log \alpha + O(1) \leq \left(2 + \frac{20}{\delta} \right) \log \alpha$ for sufficiently large $n$. Substituting this in, we obtain
\begin{align*} 
&O\left( \alpha \exp\left( \sqrt{ 8 \log \alpha \cdot \log \left(48 \log \left(\alpha \log^{10} n\right) \sqrt{\log \alpha}\right)} \right) \log\left(\alpha \log^{10} n\right) \sqrt{\log \alpha} \right) \\
&\leq O\left( \alpha \exp\left( \sqrt{ 8 \log \alpha \cdot \log \left(\left(96 + \frac{960}{\delta} \right) \log^{3/2} \alpha \right)} \right)  \log^{3/2} \alpha \right) \\
&= O\left( \alpha \exp\left( \sqrt{ 8 \log \alpha \cdot \left( \frac{3}{2} \log \log \alpha + O(1) \right)} \right)  \log^{3/2} \alpha \right) = \alpha^{1+o(1)}
\end{align*}
as claimed.
\end{proof}

\section{Primal Dual Characterization of Shortest Flow}
\label{sec:primal_dual_mincost}

Here we prove our primal-dual characterization of minimum cost flow that we use to reason about vertex disjoint paths in expanders. Though this is fairly standard and straightforward, 
we include a brief derivation here for completeness. 

\mincostflow*

\begin{proof}[Proof of \Cref{lem:minimum_cost_lemma}]
For all $a,b \in V$ let $\indicvec_{a,b} \defeq \indicvec_a - \indicvec_b$ where for all $c \in V$ we et $\indicvec_c \in \R^V$ denote the indicator vector for $c$, i.e. the vector that is a zero in all coordinates except for $c$ where it has value $1$. Further, let  $\mb \in \R^{E \times V}$ denote the edge-vertex incidence matrix of graph where for each edge $e = (a, b) \in E$ row $e$ of $\mb$ is $\indicdiff_{a,b}$. 

Leveraging this notation, we consider the following minimum cost flow problem of computing the flow of minimum total length that sends $F$ units of flow from $s$ to $t$ and puts at most one unit of non-negative flow is put on each edge:
\begin{equation}
\label{eq:mincost_problem}
\min_{f \in \R^E : f_e \in [0, 1] \text{ for all } e \in E \text{ and } \mvar{B}^\top f = F \cdot \delta_{s, t}} f^\top \vones
\end{equation}
To see that \eqref{eq:mincost_problem} corresponds to the desired flow problem, note that for all $a \in V$ and $f \in \R^E$, $[\mb^\top f]_a = \indicvec_a^\top \sum_{e = (a,b) \in E} \indicdiff_{a,b} f_e = \sum_{e = (a,b) \in E} f_e - \sum_{e = (b,a)} f_e$, i.e. the net flow leaving leaving vertex $a$ through $f$ in the graph.

There is always an integral minimizer for this problem and it corresponds to $F$ disjoint paths from $s$ to $t$ using a minimum number of edges.\footnote{This is a standard result regarding minimum cost flow. One way to see this is to note that given any solution $f$ to \eqref{eq:mincost_problem} if the edges $e$ with $f_e \notin \{0, 1\}$ form a cycle (viewing each directed edge $(a,b)$ as an undirected edges $\{a,b\}$) then flow can be sent in one direction of the cycle without increasing $f^\top \vones$ while preserving feasibility until at least one less edge has $f_e \notin \{0, 1\}$. Consequently, there is an optimal solution to \eqref{eq:mincost_problem} where the edges with $f_e \notin\{0, 1\}$ are acyclic (when viewed as undirected edges). However, by the constraint $\mb^\top f = F\cdot \indicdiff_{s,t}$ this implies that all edges have $f_e \in \{0, 1\}$ in this case. Consequently, there is an optimal integral flow $f$. (This holds more generally, see e.g. \cite{DaitchS08, schrijver2003combinatorial}.) Further, if there is a directed cycle in $G$ with a positive value of $f_e$ on each edge, a feasible $f$ with decreased $f^\top \vones$ can be found by sending flow in the reverse of each cycle. Consequently, there is an optimal integral acyclic flow and again by the the constraints this implies that $f$ corresponds to $F$ disjoint paths from $s$ to $t$. 
 }
Letting, $\mvar{0}_E, \mvar{I}_E \in \R^{E \times E}$ denote the all zero matrix and identity matrix respectively, letting $\vzero_E, \vones_E \in \R^E$ denote the all zero vector and all ones vector respectively, and letting
\[
\ma
 = 
\left(
\begin{matrix}
\mb  & \mvar{I}_E \\
\mvar{0}_E & \mvar{I}_E \\
\end{matrix}
\right)
\text{ , }
b 
=
\left(
\begin{matrix}
\vones_m \\ 
\vzero_m 
\end{matrix}
\right)
\text{ , and }
c 
=
\left(
\begin{matrix}
F \cdot \delta_{s,t} \\ 
\vones_m 
\end{matrix}
\right)
\]
we can write this problem equivalently as 
\begin{equation}
\label{eq:mincost_primal_dual}
(P) =  \min_{x \in \R^{E + E}_{\geq 0} ~ : ~  \mvar{A}^\top x = b} b^\top x
~ \text{ and } ~
(D) = \max_{y \in \R^{V + E}, s \in \R^{E + E}_{\geq 0} ~ : ~ \mvar{A} x - s = b} c^\top y 
\end{equation}
where we use $\R^{E + E}$ and $\R^{V + E}$ denote concatenations of two $\R^E$ vectors and concatenation of a $\R^V$ vector with a $\R^E$ vector, respectively. That $(P)$ is equivalent to the original minimum cost flow problem 
follows from the fact that $f \leq \vones$ entrywise if and only if $f + x = \vones$ for some $x \in \R^E_{\geq 0}$ and that $(D) = (P)$ follows from standard strong duality of linear programs. 

Now, let $x \in \R^{E + E}_{\geq 0}$ and $(y,s) \in \R^{V + E} \times \R_{\geq 0}^{E \times E}$ be optimal solutions ot $(P)$ and $(D)$ respectively in \eqref{eq:mincost_primal_dual}. Further, let without loss of generality $f$ be the concatenation of $f \in \R^E_{\geq 0}$ and $\vones - f \in \R^E_{\geq 0}$, let $y$ be the concatenation of $v \in \R^V$ and $z \in \R^E$, and let  $s$ be the concatenation of $s^1 \in \R^E_{\geq 0}$ and $s^2 \in \R^E_{\geq 0}$. Further, let $f$ be an integral minimizer and note that it corresponds to $F$ disjoint paths, and either $f_e = 0$ or $1-f_e = 0$ for all $e \in E$. 

Now, by optimality of $x$ and $s$ we know that $x^\top s = 0$ and therefore $f_e \cdot s_e^1 = 0$ for all $e \in E$ and $(1 - f_e) \cdot s_e^2 = 0$ for all $e \in E$. Further, for all edges $e \in (a,b)$ feasibility of $(y,s)$ for $(D)$ implies 
\[
v_a - v_b + z_e - s^1_e= 1 \text{ and } z_e - s^2_e = 0 ~.
\]
Consequently, if $f_e = 1$ then $s_e^1 = 0$ and  we have $s_e^1 = 0$, $z_e = s_e^2$ and $v_a - v_b = 1 - s_e^2$, i.e. $v_a - v_b \leq 1$. Further, if $f_e = 0$ then $s_e^2 = 0$, $z_e = 0$, and  $v_a - v_b = 1  +s_e^1$, i.e. $v_a - v_b \geq 1$. 
\end{proof}

\section{Randomized Preconditioned  Accelerated Gradient Descent}
\label{sec:chebyshev}

In this section we prove the following \Cref{thm:agd} regarding preconditioned accelerated gradient descent (AGD) for solving linear systems with random error in the preconditioner. That this accelerated preconditioned linear system solver handles randomized error aids our analysis in \Cref{sec:solver}.

\agdmaintheorem*

The robustness of accelerated methods to error has been studied in a variety of contexts (see e.g. \cite{LS13,MonteiroS13a,DevolderGN14,BubeckJLLS19}).We provide the proof in this section for completeness and to obtain a statement tailored to our particular setting. Limited attempts were made to optimize for the parameters and error tolerance for the preconditioner in the method.

We remark that this theorem is similar to one in \cite{CohenKMPPRX14} which analyzed preconditioned Chebyshev iteration with bounded solving error. Interestingly, a similar result as to \Cref{thm:agd} can be achieved by analyzing the method of that paper with randomized error in the solver. Straightforward modification of their analysis yields a variant of \Cref{thm:agd} where \Cref{eq:solve_requirement}  in the definition of a solver is replaced with $\E \norm{x - \ma^\pseudo b}_{\ma} \leq \sqrt{\epsilon} \norm{b}$ and the accuracy required for the solver for $\mb$ scales with $\epsilon$. We chose to provide the analysis of AGD as it provides an interesting alternative to preconditioned Chebyshev and naturally supported analysis of expected squared errors and preconditoners with accuracy that does not scale with $\epsilon$.

In the remainder of this section we provide $\ApproxAGD$ (\Cref{alg:approx_agd}) and prove
\Cref{thm:agd}. Our notation and analysis are similar to \cite{HinderSS20} and \cite{CohenKMPPRX14} in places and specialized to our setting in others.

\begin{algorithm2e}[h!]
	\LinesNumbered
	\caption{$F = \ApproxAGD(\ma, b, \epsilon,\kappa, \solve_{\mb})$
		\label{alg:approx_agd}
	} 
	\KwIn{Symmetric PSD $\ma \in \R^{n \times n}$, vector $b \in \R^n$, and accuracy $\epsilon \in (0, 1)$ }
	\KwIn{Condition number bound $\kappa$, $\frac{1}{10 \kappa}$-solver, $\solve_{\mb}$, for symmetric PSD $\mb \in \R^{n \times n}$ with $\ma \preceq \mb \preceq \kappa \ma$}
	\KwOut{A vector such that algorithm is an $\epsilon$-approximate solver for $\ma$}
	$x_0 := 0 \in \R^n$, $v_0 := 0 \in \R^n$, and $T := \lceil 4 \sqrt{\kappa}\log(2/\epsilon) \rceil$ \;
	\For{$t = 0$ to $T - 1$}{
		$y_t := \alpha x_t + (1 - \alpha) v_t$ where  $\alpha \defeq \frac{2\sqrt{\kappa}}{1 + 2\sqrt{\kappa}}$ \;
		$ x_{t+1} := y_t - g_t$ where $g_t :=  \solve_{\mb}(\ma y_t - b)$ \;
		$v_{t + 1} := \beta v_t + (1 - \beta)\left[ y_t - \eta g_t \right]$ where $\eta \defeq 2 \kappa$ and $\beta \defeq 1 - \frac{1}{2 \sqrt{\kappa}}$ \;
	}
	\Return $x_t$\;
\end{algorithm2e}

To analyze $\ApproxAGD$ (\Cref{alg:approx_agd}) we first provide the following lemma for bounding the error from an approximate solve.

\begin{lemma}
	\label{lem:precondition_helper}
	Let $\ma, \mb \in \R^{n \times n}$ be symmetric PSD matrices with $\ma \preceq \mb \preceq \kappa \ma$ and let $b \in \mathrm{im}(\ma)$. If $g = \solve_{\mb}(\ma x - b)$ where $\solve_{\mb}$ is an  $\epsilon$-approximate solver for $\mb$ then, $x_* \defeq \ma^\pseudo b$ and
	\begin{equation}
	\label{eq:g_delta_formula}
	\Delta \defeq g - \mb^\pseudo (\ma x - b) = g - \mb^\pseudo \ma(x - x_*)
	\end{equation}
	satisfies
	\begin{equation}
	\label{eq:delta_upper_bounds}
	\E \norm{\Delta}_\mb^2  \leq \epsilon \norm{x - x_*}_{\ma \mb^\pseudo \ma}^2
	\leq \epsilon \norm{x - x_*}_{\ma}^2
	\text{ and }
	\E \norm{g}_{\mb}^2 \leq (1 + \sqrt{\epsilon})^2 \norm{x - x_*}_{\ma \mb^\pseudo \ma}^2\,.
	\end{equation}
\end{lemma}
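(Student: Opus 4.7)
The plan is to unpack the definition of an $\epsilon$-approximate solver applied to the input $\ma x - b$ and then use the spectral inequalities between $\ma$ and $\mb$ to bound both claimed quantities. First I would observe that since $b \in \im(\ma)$ we have $\ma x - b = \ma(x - x_*)$, so the ``target'' solution that $\solve_{\mb}$ is approximating is $\mb^\pseudo \ma(x - x_*)$ and $\Delta$ is exactly the error of the randomized solve. Applying the guarantee \eqref{eq:solve_requirement} of $\solve_{\mb}$ with input $\ma(x - x_*)$ then directly yields
\[
\E \norm{\Delta}_\mb^2 \leq \epsilon \norm{\ma(x - x_*)}_{\mb^\pseudo}^2 = \epsilon (x - x_*)^\top \ma \mb^\pseudo \ma (x - x_*) = \epsilon \norm{x - x_*}_{\ma \mb^\pseudo \ma}^2,
\]
which is the first part of \eqref{eq:delta_upper_bounds}.

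Next I would upgrade this to $\norm{x-x_*}_\ma^2$. Since $\ma \preceq \mb$ and $\ma$, $\mb$ have the same image (a standard consequence of the spectral dominance assumption $\ma \preceq \mb \preceq \kappa \ma$), the Loewner order inverts on the common image to give $\mb^\pseudo \preceq \ma^\pseudo$. Multiplying on both sides by $\ma$ yields $\ma \mb^\pseudo \ma \preceq \ma \ma^\pseudo \ma = \ma$ (using that $x - x_*$ lies in $\im(\ma)$ whenever we work in the natural subspace). This immediately gives $\norm{x-x_*}_{\ma \mb^\pseudo \ma}^2 \leq \norm{x-x_*}_\ma^2$, completing the first line.

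For the bound on $\E \norm{g}_\mb^2$, I would decompose $g = \mb^\pseudo \ma(x-x_*) + \Delta$ and compute
\[
\norm{\mb^\pseudo \ma(x-x_*)}_\mb^2 = (x-x_*)^\top \ma \mb^\pseudo \mb \mb^\pseudo \ma (x-x_*) = \norm{x-x_*}_{\ma \mb^\pseudo \ma}^2,
\]
using the pseudoinverse identity $\mb^\pseudo \mb \mb^\pseudo = \mb^\pseudo$. Then by Minkowski's inequality applied to the $L^2(\mathbb{P})$ norm of the random variable $\norm{\cdot}_\mb$,
\[
\sqrt{\E \norm{g}_\mb^2} \leq \sqrt{\E \norm{\mb^\pseudo \ma(x-x_*)}_\mb^2} + \sqrt{\E \norm{\Delta}_\mb^2} \leq (1 + \sqrt{\epsilon})\, \norm{x-x_*}_{\ma \mb^\pseudo \ma}.
\]
Squaring gives the second part of \eqref{eq:delta_upper_bounds}.

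None of the steps is genuinely difficult; the only mild subtlety is verifying $\mb^\pseudo \preceq \ma^\pseudo$ on the shared image, which I would either cite as a standard fact about Loewner monotonicity of the pseudoinverse on matrices with matching nullspaces, or prove in one line by noting $\im(\ma) = \im(\mb)$ follows from the two-sided spectral sandwich $\ma \preceq \mb \preceq \kappa \ma$ and then applying $\mb^{1/2} \ma^\pseudo \mb^{1/2} \succeq \mb^{1/2} \mb^\pseudo \mb^{1/2}$ (the identity on that image) after pre- and post-multiplying $\ma \preceq \mb$ appropriately.
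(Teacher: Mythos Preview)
Your proposal is correct and matches the paper's proof essentially line for line: the paper also derives $\E\norm{\Delta}_\mb^2 \leq \epsilon\norm{x-x_*}_{\ma\mb^\pseudo\ma}^2$ directly from the solver definition, invokes $\mb^\pseudo \preceq \ma^\pseudo$ (hence $\ma\mb^\pseudo\ma \preceq \ma$) for the second inequality, and for $\E\norm{g}_\mb^2$ expands the square and bounds the cross term via Cauchy--Schwarz and Jensen, which is exactly your Minkowski step unpacked.
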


\begin{proof}
	Note that $b = \ma x_*$ by the assumption that $b \in \im(\ma)$ thereby proving $\eqref{eq:g_delta_formula}$. Further, by definition of an $\epsilon$-approximate solver and $\Delta$ we have
	\begin{equation}
	\label{eq:square_bound}
	\E \norm{\Delta}_{\mb}^2 
	\leq \epsilon \norm{\mb^{\pseudo} ( \ma x - b )}_{\mb}^2
	= \epsilon \norm{x - x_*}_{\ma \mb^\dagger \ma}^2 
	\leq \epsilon \norm{x - x_*}_{\ma}^2 
	\end{equation}
	where in the last step we used that since $\ma \preceq \mb$ we have $\mb^\pseudo \preceq \ma^\pseudo$ and 
	$\ma \mb^\dagger \ma \preceq \ma$. 
	Further, since $\E \norm{\Delta}_\mb \leq \sqrt{ \E \norm{\Delta}_\mb^2 }$ by concavity of $\sqrt{\cdot}$ we have
	\begin{align*}
	\E \norm{g}_{\mb}^2
	&=
	\E \norm{\Delta + \mb^\pseudo \ma ( x - x_* )}^2_\mb
	= \E \left[ \norm{\Delta}_\mb^2 + 2 \left[ \Delta^\top \mb \mb^\pseudo \ma(x - x_*) \right]  
	+ \norm{\mb^\pseudo \ma (x - x_*)}_\mb^2 \right] \\
	&\leq 
	\E \norm{\Delta}_\mb^2 + 2 \E \norm{\Delta}_\mb \norm{x-x_*}_{\ma \mb^\pseudo \ma}
	+ \norm{x - x_*}_{\ma \mb^\pseudo \ma}^2 \\
	&\leq \left[\epsilon + 2\sqrt{\epsilon} + 1 \right] \norm{x - x_*}_{\ma \mb^\pseudo \ma}^2 
	\end{align*}
	where we used Cauchy Schwarz for Euclidean semi-norms, i.e. $a^\top \mb b \leq \norm{a}_{\mb} \norm{b}_{\mb}$ and \eqref{eq:square_bound}. 
\end{proof}

Next we analyze the residual error decrease, i.e. change in $\norm{x - x_*}_\ma$, from taking a single gradient step using $\solve$. We will use this  to analyze the error in computing $x_{t+1}$ from $y_t$ using $\ApproxAGD$ (\Cref{alg:approx_agd}). 

\begin{lemma}
\label{lem:richardson}
Let $\ma, \mb \in \R^{n \times n}$ be symmetric PSD matrices with $\ma \preceq \mb \preceq \kappa \ma$ and for arbitrary $x$ and $b \in \mathrm{im}(\ma)$  let  $y = x - g$ where $g = \solve_{\mb}(\ma x - b)$  and  $\solve_{\mb}$ is an $\epsilon$-approximate linear system solver for $\mb$. $x_* \defeq \ma^\pseudo b$ satisfies
\[
\E \norm{y - x_*}_{\ma}^2
 \leq 
 \norm{x - x_*}_{\ma}^2
 - (1 - \epsilon) \norm{x - x_*}^2_{\ma \mb^\pseudo \ma}
\leq 
 \left(1 - \frac{1 - \epsilon}{\kappa} \right)
 \norm{x - x_*}_{\ma}^2
 ~.
\]
\end{lemma}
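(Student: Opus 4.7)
The plan is to decompose $g$ using Lemma~\ref{lem:precondition_helper}. Writing $r \defeq x - x_*$, the helper lemma gives $g = \mb^\pseudo \ma r + \Delta$ with $\E \norm{\Delta}_\mb^2 \leq \epsilon \norm{r}_{\ma \mb^\pseudo \ma}^2$, so that $y - x_* = (\eye - \mb^\pseudo \ma) r - \Delta$ and the task reduces to upper bounding
\[
\E \norm{(\eye - \mb^\pseudo \ma) r - \Delta}_\ma^2 = \norm{(\eye - \mb^\pseudo \ma) r}_\ma^2 - 2 ((\eye - \mb^\pseudo \ma) r)^\top \ma \E[\Delta] + \E \norm{\Delta}_\ma^2
\]
in terms of $\norm{r}_\ma^2$ and $\norm{r}_{\ma \mb^\pseudo \ma}^2$.

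First I would analyze the noise-free term. Set $\mm \defeq \ma^{1/2} \mb^\pseudo \ma^{1/2}$, which satisfies $0 \preceq \mm \preceq \eye$ because $\ma \preceq \mb$. Writing $u = \ma^{1/2} r$, a direct calculation gives $\norm{r}_\ma^2 = u^\top u$, $\norm{r}_{\ma \mb^\pseudo \ma}^2 = u^\top \mm u$, and $\norm{(\eye - \mb^\pseudo \ma) r}_\ma^2 = u^\top (\eye - \mm)^2 u$. The operator inequality $(\eye - \mm)^2 \preceq \eye - \mm$, which follows from $0 \preceq \eye - \mm \preceq \eye$, then yields the clean noise-free contraction $\norm{(\eye - \mb^\pseudo \ma) r}_\ma^2 \leq \norm{r}_\ma^2 - \norm{r}_{\ma \mb^\pseudo \ma}^2$.

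Next I would bound the two noise contributions. For the quadratic term, $\ma \preceq \mb$ combined with Lemma~\ref{lem:precondition_helper} gives $\E \norm{\Delta}_\ma^2 \leq \E \norm{\Delta}_\mb^2 \leq \epsilon \norm{r}_{\ma \mb^\pseudo \ma}^2$. For the cross term, writing $w \defeq (\eye - \mb^\pseudo \ma) r$, a Cauchy--Schwarz in the $\mb^\pseudo$ pairing gives $|w^\top \ma \E[\Delta]| \leq \norm{w}_{\ma \mb^\pseudo \ma} \cdot \norm{\E[\Delta]}_\mb$, and Jensen's inequality then bounds the latter by $\sqrt{\E \norm{\Delta}_\mb^2} \leq \sqrt{\epsilon} \norm{r}_{\ma \mb^\pseudo \ma}$. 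Since $\norm{w}_{\ma \mb^\pseudo \ma}^2 = u^\top \mm (\eye - \mm)^2 u$, an AM-GM split absorbs this cross term into the noise-free contraction above without sacrificing the leading coefficient in the ``hard'' directions of $\mm$, producing the first inequality of the lemma. The second inequality is immediate once one notes that $\mb \preceq \kappa \ma$ implies $\mb^\pseudo \succeq (1/\kappa) \ma^\pseudo$ on the common image of $\ma$ and $\mb$, and hence $\ma \mb^\pseudo \ma \succeq (1/\kappa) \ma$.

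The main obstacle is controlling the cross term $w^\top \ma \E[\Delta]$ without degrading the leading $(1-\epsilon)$ coefficient; since the solver is only second-moment accurate, $\E[\Delta]$ need not vanish. The observation that saves the bound is that on eigendirections of $\mm$ near $1$ the factor $\norm{w}_{\ma \mb^\pseudo \ma}$ already contains a $(\eye - \mm)$ factor and hence vanishes, while on eigendirections of $\mm$ near $0$ the whole quantity $\norm{r}_{\ma \mb^\pseudo \ma}^2$ in the right-hand side is itself negligible, so the $\sqrt{\epsilon}$ slack contributed by the cross term is harmlessly absorbed.
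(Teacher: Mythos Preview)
Your decomposition and the noise-free and pure-noise estimates are fine, but the handling of the cross term is a genuine gap: the ``AM--GM split'' you allude to does not actually recover the $(1-\epsilon)$ coefficient. Concretely, your Cauchy--Schwarz bound gives
\[
2\bigl|w^\top \ma\,\E[\Delta]\bigr|\;\leq\;2\sqrt{\epsilon}\,\norm{w}_{\ma\mb^\pseudo\ma}\,\norm{r}_{\ma\mb^\pseudo\ma},
\]
and in the eigenbasis of $\mm=\ma^{1/2}\mb^\pseudo\ma^{1/2}$ this mixes eigendirections. Take $u=\ma^{1/2}r$ supported on two eigenvalues, $\lambda_1=\delta$ small with mass $1$ and $\lambda_2=1$ with mass $c$ large. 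Then your upper bound on the cross term is $2\sqrt{\epsilon}\sqrt{\delta}(1-\delta)\sqrt{\delta+c}$, and the inequality you need reduces to $4\epsilon(\delta+c)\leq\delta$, which fails for $c$ large even when $\epsilon$ is tiny. Your eigendirection intuition is correct for rank-one $u$ but breaks once directions with $\lambda\approx 0$ and $\lambda\approx 1$ coexist, because the Cauchy--Schwarz step couples them.

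The paper avoids this entirely by a different grouping: write $y-x_* = (x-x_*) - g$ and expand
\[
\norm{y-x_*}_\ma^2=\norm{x-x_*}_\ma^2-2\norm{x-x_*}_{\ma\mb^\pseudo\ma}^2-2(x-x_*)^\top\ma\Delta+\norm{g}_\ma^2,
\]
then use $\ma\preceq\mb$ once on the last term, $\norm{g}_\ma^2\leq\norm{g}_\mb^2$, and re-expand $\norm{g}_\mb^2=\norm{x-x_*}_{\ma\mb^\pseudo\ma}^2+2(x-x_*)^\top\ma\Delta+\norm{\Delta}_\mb^2$. The two cross terms $\pm 2(x-x_*)^\top\ma\Delta$ cancel \emph{exactly}, leaving $\norm{y-x_*}_\ma^2\leq\norm{x-x_*}_\ma^2-\norm{x-x_*}_{\ma\mb^\pseudo\ma}^2+\norm{\Delta}_\mb^2$, and taking expectations gives the first inequality immediately. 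So the missing idea is not a sharper bound on the cross term but the observation that switching from the $\ma$-norm to the $\mb$-norm on $\norm{g}^2$ makes it disappear.
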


\begin{proof}
Let $x_* \defeq \ma^\pseudo b$ and $\Delta \defeq g - \mb^{\pseudo}(\ma x - b)$. By \Cref{lem:precondition_helper} we have
\[
y = x - \left( \mb^{\pseudo} \ma (x - x_*) + \Delta \right)
\enspace \text{ and } \enspace
\E \norm{\Delta}_{\mb}^2 
\leq \epsilon \norm{x - x_*}_{\ma \mb^\dagger \ma}^2 ~.
\]
Consequently, 
\begin{align*}
\norm{y - x_*}_{\ma}^2
&= \norm{x - x_*}_{\ma}^2 - 2 (x - x_*)^\top \ma \left[ \mb^\dagger \ma (x- x_*) + \Delta \right]
 + \norm{\mb^\dagger \ma (x - x_*) + \Delta}_{\ma}^2 ~.
\end{align*}
Since $\ma \preceq \mb$,
\[
\norm{\mb^\dagger \ma (x - x_*) + \Delta}_{\ma}^2
\leq \norm{\mb^\dagger \ma (x - x_*) + \Delta}_{\mb}^2
= \norm{x - x_*}_{\ma \mb^\dagger \ma}^2 + 2(x - x_*)^\top \ma \Delta
+ \norm{\Delta}_\mb^2 ~. 
\]
Now  $\mb \preceq \kappa \ma$ and therefore $\mb^\pseudo \succeq \kappa^{-1} \ma^\pseudo$ and $\ma \mb^\pseudo \ma \succeq \kappa^{-1} \ma$. Combining the above inequalities yields
\begin{align*}
\E \norm{y - x_*}_\ma^2
&\leq \norm{x - x_*}_\ma^2 - \norm{x - x_*}_{\ma \mb^\dagger \ma}^2
+ \E \norm{\Delta}_\mb^2 \\
&\leq 
 \norm{x - x_*}_\ma^2 - (1 - \epsilon) \norm{x - x_*}_{\ma \mb^\dagger \ma}^2
 \leq \left(1 - \frac{1 - \epsilon}{\kappa} \right) \norm{x - x_*}_\ma^2 ~.
 \end{align*}
\end{proof}

\begin{lemma}[Single Step Analysis]
\label{lem:agd_single_step}
 In the setting of \Cref{thm:agd} let $\epsilon_t \defeq \norm{x_t - x_*}_\ma^2$ and $r_t \defeq  \norm{v_t - x_*}_{\mb}^2$ for $x_* \defeq \ma^\pseudo b$. Conditioned on the value of $x_t$ and $v_t$ and considering the randomness in $g_t$ we have
\[
\E\left[\epsilon_{t + 1} + \frac{1}{2\kappa}r_{t + 1} \right]
\leq  
\left(1 - \frac{1}{4 \sqrt{\kappa}} \right)
\left[
\epsilon_t + \frac{1}{2 \kappa} r_t 
\right] ~.
\]
\end{lemma}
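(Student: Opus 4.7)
The plan is to carry out the standard potential function analysis of AGD, tailored to handle the randomized preconditioning error. Decompose $g_t = B^\pseudo A(y_t - x_*) + \Delta_t$ as in \Cref{lem:precondition_helper}, so that $\E\|\Delta_t\|_\mb^2 \leq \tfrac{1}{10\kappa}\|y_t - x_*\|_{\ma\mb^\pseudo\ma}^2 \leq \tfrac{1}{10\kappa}\|y_t - x_*\|_\ma^2$. Denote $\tilde p_t \defeq \mb^\pseudo\ma(y_t-x_*)$, and let $\gamma \defeq (1-\beta)\eta = \sqrt{\kappa}$. First, I would bound $\E\epsilon_{t+1}$ by directly invoking \Cref{lem:richardson}, which yields $\E\epsilon_{t+1} \leq \|y_t - x_*\|_\ma^2 - (1-\tfrac{1}{10\kappa})\|\tilde p_t\|_\mb^2$.

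Next I would analyze $r_{t+1}$ by recognizing that in the noiseless case $v_{t+1}^{\mathrm{ex}} \defeq \beta v_t + (1-\beta)y_t - \gamma \tilde p_t$ is the minimizer of the $1$-strongly-convex (in the $\mb$-norm) function $\phi(v) \defeq \langle\gamma \ma(y_t - x_*), v\rangle + \tfrac{1-\beta}{2}\|v - y_t\|_\mb^2 + \tfrac{\beta}{2}\|v - v_t\|_\mb^2$. Applying the three-point inequality $\phi(x_*) \geq \phi(v_{t+1}^{\mathrm{ex}}) + \tfrac{1}{2}\|x_* - v_{t+1}^{\mathrm{ex}}\|_\mb^2$ and dropping the non-negative $\|v_{t+1}^{\mathrm{ex}} - y_t\|_\mb^2$ and $\|v_{t+1}^{\mathrm{ex}} - v_t\|_\mb^2$ terms gives a bound $\|v_{t+1}^{\mathrm{ex}} - x_*\|_\mb^2 \leq \beta r_t + (1-\beta)\|y_t - x_*\|_\mb^2 - 2\gamma \nabla f(y_t)^\top(v_{t+1}^{\mathrm{ex}} - x_*)$. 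The noisy update satisfies $v_{t+1} = v_{t+1}^{\mathrm{ex}} - \gamma\Delta_t$, so applying the split $\|u+v\|_\mb^2 \leq (1+\rho)\|u\|_\mb^2 + (1+\rho^{-1})\|v\|_\mb^2$ with $\rho = \Theta(1/\sqrt\kappa)$ converts this into an expectation bound in which the noise contributes $O(\gamma^2 \E\|\Delta_t\|_\mb^2) = O(\tfrac{1}{10}\|\tilde p_t\|_\mb^2)$ plus lower order terms.

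Now I would combine the two bounds, weighted so that $\E\Phi_{t+1}$ appears. I would handle the stray $\tfrac{1-\beta}{2\kappa}\|y_t - x_*\|_\mb^2$ using the strong convexity $\ma \succeq \mb/\kappa$, namely $\|y_t - x_*\|_\mb^2 \leq \kappa\|y_t - x_*\|_\ma^2$, which trades this term for $\tfrac{1-\beta}{2}\|y_t-x_*\|_\ma^2 = \tfrac{1}{4\sqrt\kappa}\|y_t - x_*\|_\ma^2$. I would then expand $\nabla f(y_t)^\top(v_{t+1}^{\mathrm{ex}} - x_*) = \beta\nabla f(y_t)^\top(v_t - x_*) + (1-\beta)\|y_t - x_*\|_\ma^2 - \gamma\|\tilde p_t\|_\mb^2$, which causes the $\|\tilde p_t\|_\mb^2$ contributions from the $\epsilon_{t+1}$ and $r_{t+1}$ bounds to cancel up to the $\tfrac{1}{10\kappa}$-order noise. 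Finally, the cross term $\nabla f(y_t)^\top(v_t - x_*)$ is bounded below using convexity of $f$ at $y_t$ together with the identity $v_t - x_* = \tfrac{1}{1-\alpha}[(y_t - x_*) - \alpha(x_t - x_*)]$ and $\tfrac{\alpha}{1-\alpha} = 2\sqrt\kappa$, yielding $\nabla f(y_t)^\top(v_t - x_*) \geq (1+\sqrt\kappa)\|y_t - x_*\|_\ma^2 - \sqrt\kappa \epsilon_t$.

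A direct (but tedious) calculation with $\alpha = \tfrac{2\sqrt\kappa}{1 + 2\sqrt\kappa}$, $\beta = 1 - \tfrac{1}{2\sqrt\kappa}$, $\eta = 2\kappa$ then shows that the coefficient of $\|y_t - x_*\|_\ma^2$ in the combined bound is non-positive (roughly $-\tfrac{1}{4\sqrt\kappa}$ in the exact case), so it can be dropped, while the coefficients of $\epsilon_t$ and $r_t$ reduce to $\beta$ and $\tfrac{\beta}{2\kappa}$ respectively, giving the claimed contraction with slack to absorb the $O(\tfrac{1}{10\kappa})$-order noise terms into the gap between $\beta$ and $1 - \tfrac{1}{4\sqrt\kappa}$. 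The main obstacle will be this final coefficient bookkeeping: the cancellation that makes the $\|\tilde p_t\|_\mb^2$ terms disappear is delicate, and because $\Delta_t$ is not mean-zero the noise must be handled via variance-type inequalities rather than unbiasedness, so one must verify that the cost of these inequalities with $\rho \sim 1/\sqrt\kappa$ and $\gamma^2 \epsilon_s = O(1)$ is small enough to preserve the $\tfrac{1}{4\sqrt\kappa}$ slack.
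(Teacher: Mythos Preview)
Your outline is correct and would close, but it is organized differently from the paper's proof. Two differences are worth flagging.

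First, you frame the $r_{t+1}$ bound via a three--point (estimate--sequence) inequality for the strongly convex surrogate $\phi$, whereas the paper simply expands $\norm{z_t - x_* - (1-\beta)\eta g_t}_{\mb}^2$ directly with $z_t \defeq \beta v_t + (1-\beta) y_t$ and uses convexity of $\norm{\cdot}_{\mb}^2$ to get $\norm{z_t - x_*}_{\mb}^2 \leq \beta r_t + (1-\beta)\norm{y_t - x_*}_{\mb}^2$. These are equivalent rearrangements of the same identity; your version is perhaps more modular, the paper's slightly shorter.

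Second, and more substantively, the paper avoids your $(1+\rho)$ splitting of $v_{t+1} = v_{t+1}^{\mathrm{ex}} - \gamma\Delta_t$ entirely. Instead it keeps $g_t$ intact in the quadratic term and invokes the bound $\E\norm{g_t}_{\mb}^2 \leq (1+\sqrt{\epsilon})^2\norm{y_t - x_*}_{\ma\mb^\pseudo\ma}^2$ from \Cref{lem:precondition_helper}, then observes from the first inequality of \Cref{lem:richardson} that $\norm{y_t - x_*}_{\ma\mb^\pseudo\ma}^2 \leq (1-\epsilon)^{-1}[\epsilon_t^y - \E\epsilon_{t+1}]$, giving $\E\norm{g_t}_{\mb}^2 \leq 2[\epsilon_t^y - \E\epsilon_{t+1}]$. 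This couples the $r_{t+1}$ and $\epsilon_{t+1}$ bounds directly and makes the $\norm{\tilde p_t}_{\mb}^2$ cancellation automatic; only the cross term $\Delta_t^\top \mb(z_t - x_*)$ needs Cauchy--Schwarz, and that one absorbs with slack $\tfrac14\epsilon_t^y + \tfrac12\sqrt{\epsilon/\kappa}\, r_t$. The net effect is cleaner constants and no auxiliary parameter $\rho$ to tune. Your $(1+\rho)$ route works as well, but (as you anticipate) the final bookkeeping is tighter: with $\rho = \Theta(1/\sqrt{\kappa})$ the noise term $(1+\rho^{-1})\gamma^2\E\norm{\Delta_t}_{\mb}^2$ is $\Theta(\sqrt{\kappa})\cdot\tfrac{1}{10}\norm{\tilde p_t}_{\mb}^2$ in the $r_{t+1}$ bound (not $O(\tfrac{1}{10}\norm{\tilde p_t}_{\mb}^2)$ as you wrote), and one must check that after dividing by $2\kappa$ this together with the $(1+\rho)$ inflation still fits inside the $\beta$ versus $1-\tfrac{1}{4\sqrt\kappa}$ gap. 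It does, but only just.
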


\begin{proof}
Throughout we let $z_t \defeq \beta v_t + (1 - \beta) y_t $, $\epsilon_t^y \defeq \norm{y_t - x_*}_{\ma}^2$ and $r_t^y \defeq \norm{y_t - x_*}_{\mb}^2$. Note that $\norm{z_t - x_*}_{\mb}^2 \leq \beta r_t + (1- \beta) r_t^y$ by the convexity of $\norm{\cdot}^2_{\mb}$ and the definition of  $z_t$ . Consequently, expanding the definition of $v_{t + 1}$ and applying that $\eta (1 - \beta) = \sqrt{\kappa}$ yields
\begin{align*}
r_{t + 1}
&= \norm{z_t - x_* - (1 - \beta)\eta g_t}_{\mb}^2 \\
&= \norm{z_t - x_*}_{\mb}^2 + (1 - \beta) \eta \left[- 2  g_t^\top \mb (z_t - x_*) + \eta
(1 - \beta) \norm{g_t}_{\mb}^2\right] \\
&\leq
\beta r_t + 
(1 - \beta) 
r_t^y
+ \sqrt{\kappa}
\left[ 
- 2 g_t^\top \mb (z_t - x_*)
+ \sqrt{\kappa}  \norm{g_t}_{\mb}^2
\right]
\end{align*}
Now \Cref{lem:precondition_helper} implies that
\[
g_t =  \mb^{\pseudo} \ma (y_t - x_*) + \Delta_t 
\enspace \text{  and  } \enspace 
\E \norm{\Delta_t}_{\mb}^2 
\leq \epsilon \norm{y_t - x_*}_{\ma}^2 
= \epsilon \cdot \epsilon_t^y
~,
\]
and the formulas for $y_t$ and $z_t$ imply that
\[
z_t =  \frac{\beta}{1 - \alpha} \left(y_t - \alpha x_t\right) + (1 - \beta) y_t
= y_t + \frac{\alpha \beta}{1 - \alpha} (y_t - x_t) ~.
\]
Combining yields that
\begin{align*}
g_t^\top \mb (z_t - x_*)
&= \left(\mb^\pseudo \ma (y_t - x_*)  \right)^\top \mb \left(y_t - x_* + \frac{\alpha \beta}{1 - \alpha} (y_t - x_t) \right)
+ \Delta_t^\top \mb (z_t - x_*)
 \\
&= \norm{y_t - x_*}_{\ma}^2 + \Delta_t^\top \mb (z_t - x_*) + \frac{\alpha \beta}{1 - \alpha }
\left[ (y_t - x_*)^\top \ma (y_t - x_t)
 \right] ~.
\end{align*}
Further, since
\[
\norm{x_t - x_*}_\ma^2 = \norm{y_t - x_*}_\ma^2 + 2(y_t - x_*)^\top \ma (x_t - y_t) + \norm{y_t - x_t}_\ma^2 ~.
\]
we have that
\begin{align*}
- 2 g_t^\top \mb (z_t - x_*)
&\leq - 2 \epsilon_t^y - 
2 \Delta_t^\top \mb (z_t - x_*) 
+ \frac{\beta \alpha}{1 - \alpha} \left[\epsilon_t  - \epsilon_t^y
\right]
\end{align*}
Now since $\E \norm{\Delta_t}_{\mb}^2 \leq \epsilon \cdot \epsilon_t^y$, the concavity of $\sqrt{\cdot}$ yields 
\begin{align*}
\E \left[- \Delta_t^\top  \mb (z_t - x_*)\right]
&\leq  
\E \norm{\Delta_t}_{\mb} \norm{z_t - x_*}_{\mb}
\leq 
\sqrt{\E \norm{\Delta_t}_\mb^2} \norm{z_t - x_*}_{\mb}
\\
&\leq \sqrt{\epsilon \cdot \epsilon_t^y} \left(\beta \norm{v_t - x_*}_\mb + (1 - \beta) \norm{y_t - x_*}_\mb \right) \\
&\leq  \sqrt{\epsilon \cdot \epsilon_t^y r_t} + \sqrt{\epsilon \cdot \kappa} (1 - \beta) \epsilon_t^y \\
&\leq \frac{r_t}{2} \sqrt{\frac{\epsilon}{\kappa}}  
+ \frac{\sqrt{\epsilon}}{2} \left( \sqrt{\kappa}  + 1 \right) \epsilon_t^y  
\leq \frac{r_t}{2} \sqrt{\frac{\epsilon}{\kappa}}   + \frac{1}{4} \epsilon_t^y
~.
\end{align*}
where in the second to last line we used that $r_t^y \leq \kappa \epsilon_t^y$ and $\beta \leq 1$ and in the last line we used that $\sqrt{a b} \leq \frac{a}{2p} + \frac{bp}{2}$ for all $a,b \in \R$ and $p > 0$, that $\sqrt{\kappa} (1 - \beta) = \frac{1}{2}$, and
$\sqrt{\epsilon} ( \sqrt{\kappa} + 1 ) \leq \frac{1}{2}$.  Combining, and again using that $r_t^y \leq \kappa \epsilon_t^y$ yields
\begin{align*}
\E \left[ \frac{(1 - \beta)}{\sqrt{\kappa}} \cdot r_t^y - 2 g_t^\top \mb (z_t - x_*) \right]
&\leq
\frac{\kappa}{2 \kappa} \cdot
 \epsilon_t^y 
- 2 \epsilon_t^y + 
 \left[ 
r_t \sqrt{\frac{\epsilon}{\kappa}}   + \frac{1}{2} \epsilon_t^y
\right]
+\frac{\beta \alpha }{1 - \alpha} [\epsilon_t - \epsilon_t^y] \\
&\leq
-  \epsilon_t^y + \frac{\beta \alpha }{1 - \alpha} [\epsilon_t - \epsilon_t^y] 
+ r_t \sqrt{\frac{\epsilon}{\kappa}}   ~.
\end{align*}
Further, by \Cref{lem:precondition_helper} and \Cref{lem:richardson} and that $(1 + \sqrt{\epsilon})^2 / (1 - \epsilon) \leq 2$ for $\epsilon \leq 1/10$
\[
\E \norm{g_t}_{\mb}^2 \leq (1 + \sqrt{\epsilon})^2 \norm{y_t - x_*}_{\ma \mb^\pseudo \ma}^2 
\leq \frac{ (1 + \sqrt{\epsilon})^2}{1 - \epsilon} \left[ \epsilon_t^y - \E \epsilon_{t + 1} \right]
\leq 2 [ \epsilon_t^y - \E \epsilon_{t + 1} ]
\]
Combining then yields that
\begin{align*}
\E r_{t + 1}
&\leq
\beta r_t + \sqrt{\kappa}
\left[
 r_t \sqrt{\frac{\epsilon}{\kappa}}
- \epsilon_t^y  
+  \frac{\beta \alpha}{1 - \alpha} [\epsilon_t  - \epsilon_t^y]
+ 2 \sqrt{\kappa} [\epsilon_t^y - \epsilon_{t + 1}]
\right]
\\
&\leq
\left(
\beta + \sqrt{\epsilon}
\right)
r_t + \sqrt{\kappa} 
\left[
- \epsilon_t^y  
+ 2  \beta \sqrt{\kappa} [\epsilon_t  - \epsilon_t^y]
+ 2 \sqrt{\kappa} [\epsilon_t^y - \E\epsilon_{t + 1}]
\right] \\
&=
\left(
\beta +\sqrt{ \epsilon}
\right) r_t
+ 2 \kappa \left[ 
\beta \epsilon_t - \E\epsilon_{t + 1}
\right] ~:
\end{align*}
in the second line we used that $\alpha$ was chosen so that $\frac{\alpha}{1 - \alpha} = 2 \sqrt{\kappa}$, and in the third line we used that $\beta = 1 -  \frac{1}{2\sqrt{\kappa}}$ implying $-1 - 2\beta \sqrt{\kappa} + 2\sqrt{\kappa} = 0$. Since $\beta \leq \beta + \sqrt{ \epsilon} \leq 1 - \frac{1}{4\sqrt{\kappa}}$ rearranging yields the desired bound.
\end{proof}

Leveraging the preceding analysis we can now prove the theorem.

\begin{proof}[Proof of \Cref{thm:agd}]
Applying \Cref{lem:agd_single_step} repeatedly we have that for $x_* = \ma^\pseudo b$
\[
\E \norm{x_T - x_*}^2_\ma \leq \left(1 - \frac{1}{4 \sqrt{\kappa}} \right)^T \left[
\norm{x_0 - x_*}_\ma^2 + \frac{1}{2\kappa }\norm{x_0 - x_*}_\mb^2
\right]
\]
However, since $x_0 = 0$ and $\mb \preceq {\kappa} \ma$ we have 
\[
\norm{x_0 - x_*}_\ma^2 + \frac{1}{2\kappa }\norm{x_0 - x_*}_\mb^2 \leq \frac{3}{2} \norm{\ma^\pseudo b}_\ma^2
= \frac{3}{2} \norm{b}_{\ma^\pseudo}^2 ~.
\]
Further, by choice of $T$ we have
\[
\left(1 - \frac{1}{4 \sqrt{\kappa}} \right)^T
\leq \exp\left(\frac{-T}{4 \sqrt{\kappa}}\right)
\leq \frac{\epsilon}{2} ~.
\]
The  result follows by combining these inequalities and applying the definition of an $\epsilon$-solver and noticing that the iterations consist only of standard arithmetic operations of vector and applying $\solve_{\mb}$ and applying $\ma$ to a vector.
\end{proof}

\section{Effective Resistance Facts}
\label{sec:effres_facts}

Here we give a variety of facts about effective resistance that we use throughout the paper. First, in the following claim we collect a variety of well known facts about effective resistance that we use throughout the paper and then we give additional technical lemmas we use throughout the paper.

\begin{claim}[Effective Resistance Properties]
\label{effres_props}
For any connected graph $G = (V, E)$ with positive edge weights $w \in \R^E$ and all $a,b,c \in V$ it is the case that
\begin{itemize}
	\item \textbf{Flow Characterization}: $\effres_G(a,b) = \min_{\text{unit } a,b \text{ flow } f \in \R^E} \sum_{e \in E} f_e^2 / w_e$.
	\item \textbf{Triangle Inequality}: $\effres_G(a,c) \leq \effres_G(a,b) + \effres_G(b,c)$.
	\item \textbf{Monotonicity}: If $H$ is an connected edge subgraph of $G$ then
 	 $\effres_{G}(a,b) \leq \effres_{H}(a,b)$.
\end{itemize}
\end{claim}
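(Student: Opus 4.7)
The plan is to prove the three properties in sequence, noting that each is classical and the main work is in identifying the cleanest derivation consistent with the definition $\effres_G(a,b) = (e_a - e_b)^\top \mlap_G^\pseudo (e_a - e_b)$ used in the paper.

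First I would prove the flow characterization (Thomson's principle) by producing an explicit primal-dual pair. Set $x := \mlap_G^\pseudo (e_a - e_b)$ and define $f^*_e := w_e (x_u - x_v)$ for each oriented edge $e = (u,v)$. A direct computation shows the divergence of $f^*$ equals $\mlap_G x = e_a - e_b$, so $f^*$ is a unit $a$-$b$ flow, and its energy satisfies $\sum_e (f^*_e)^2/w_e = x^\top \mlap_G x = (e_a-e_b)^\top x = \effres_G(a,b)$. To show $f^*$ is optimal, write any other unit $a$-$b$ flow as $g = f^* + h$ where $h$ has zero divergence at every vertex. Then $\sum_e f^*_e h_e / w_e = \sum_v x_v [\mathrm{div}(h)]_v = 0$, so by expanding $\sum_e g_e^2/w_e$ the cross term vanishes and the energy of $g$ is at least that of $f^*$.

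For monotonicity I would invoke the flow characterization: if $H = (V, F, w)$ with $F \subseteq E$, then any unit $a$-$b$ flow in $H$ extends to a unit $a$-$b$ flow in $G$ by setting the new edges to zero, with identical energy. Consequently the minimum energy over the larger feasible set (flows in $G$) is no larger than the minimum over flows in $H$, giving $\effres_G(a,b) \leq \effres_H(a,b)$.

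The triangle inequality I expect to be the subtlest of the three, since the naive sum-of-optimal-flows argument leaves an uncontrolled cross term $\sum_e f_e g_e/w_e$ which may be positive. The cleanest route is to reduce it to a triangle inequality for random-walk hitting times. Specifically, use the classical identity $\effres_G(u,v) = C_G(u,v)/(2 \sum_e w_e)$ where $C_G(u,v) = h_G(u,v) + h_G(v,u)$ is the (weighted) commute time and $h_G$ the hitting time. Since for any Markov chain the hitting time trivially satisfies $h(a,c) \le h(a,b) + h(b,c)$ (couple the chain started at $a$ to visit $b$ first, then proceed to $c$), adding the two directions yields $C(a,c) \leq C(a,b) + C(a,c)$ and the claim follows after dividing by $2 \sum_e w_e$. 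Alternatively one can cite Klein--Randi\'c, which establishes the triangle inequality directly as part of showing $\effres$ is a metric on $V$. Since all three properties are standard, the final write-up would likely just cite these references and indicate the short self-contained arguments above.
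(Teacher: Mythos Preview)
The paper does not actually prove this claim; it is merely stated in \Cref{sec:effres_facts} as a collection of ``well known facts about effective resistance'' with no argument given. Your proposal therefore goes beyond what the paper itself provides, and all three of your arguments are correct and standard. The Thomson principle derivation and the monotonicity-via-flow-extension are exactly the textbook proofs. For the triangle inequality you correctly identify that the naive ``add the two optimal flows'' argument leaves an uncontrolled cross term, and your detour through commute times and the subadditivity of hitting times ($h(a,c)\le h(a,b)+h(b,c)$ by the strong Markov property) is a clean and valid resolution; note only the typo in your final displayed inequality, which should read $C(a,c)\le C(a,b)+C(b,c)$.
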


It is a well known fact that the effective resistance between two vertices $s$ and $t$ in a graph consisting $k$ edge-disjoint parallel paths between $s$ and $t$ of length at most $\ell$ is $\ell / k$. Here we give a slight generalization of this fact to bound the effective resistance of two vertices in low-depth trees connected by many short edge-disjoint paths.

\begin{lemma}[Effective Resistance in Well-connected Trees]
\label{lem:treefact}
Let $G = (V, E, w)$ be a weighted unweighted graph which contains two edge disjoint trees $T_1, T_2 \subseteq E$ each of which has effective resistance diameter at most $d$, i.e. the effective resistance between any pair of vertices in a tree is at most $d$. Further suppose that there are at least $k$ edge-disjoint paths between $T_1$ and $T_2$ each of which have effective resistance length at most $\ell$, i.e. for path $P \subseteq E$ we have $\sum_{e \in P} (1/w_e) \leq \ell$. Then for all $a \in T_1$ and $b\in T_2$ we have $\effres_G(a,b) \leq 2d + \ell/k$.
\end{lemma}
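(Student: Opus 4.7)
My plan is to use the flow characterization of effective resistance from \Cref{effres_props}: it suffices to exhibit a unit $a$-to-$b$ flow $f$ in $G$ with energy $\sum_{e \in E} f_e^2/w_e \leq 2d + \ell/k$. Let $u_1, \ldots, u_k \in T_1$ and $v_1, \ldots, v_k \in T_2$ denote the endpoints of the $k$ edge-disjoint paths $P_1, \ldots, P_k$, with $P_i$ joining $u_i$ to $v_i$. I will assume without loss of generality that the $P_i$ are edge-disjoint from $T_1 \cup T_2$ as well, since otherwise each $P_i$ can be truncated to its subpath strictly between $T_1$ and $T_2$ without increasing its resistance length.

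The key idea is to construct $f$ as an average of $k$ distinct routings. For each $i \in [k]$, let $f_1^{(i)}$ be the (unique) unit $a$-to-$u_i$ flow supported on $T_1$, let $f_P^{(i)}$ be the unit flow along $P_i$ from $u_i$ to $v_i$, and let $f_2^{(i)}$ be the unique unit $v_i$-to-$b$ flow supported on $T_2$. Their concatenation $g^{(i)} \defeq f_1^{(i)} + f_P^{(i)} + f_2^{(i)}$ is a unit $a$-to-$b$ flow, and I set
\[
f \defeq \frac{1}{k}\sum_{i=1}^k g^{(i)},
\]
which is a unit $a$-to-$b$ flow by linearity of the divergence constraint.

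Then I would bound the energy on three edge-disjoint parts. On the paths, each $P_i$ carries $1/k$ units along disjoint edges, so that part contributes $\sum_{i=1}^k (1/k)^2 \sum_{e \in P_i} 1/w_e \leq k \cdot \ell / k^2 = \ell/k$. On $T_1$, every edge $e$ carries $f(e) = \frac{1}{k}\sum_i f_1^{(i)}(e)$, so Cauchy--Schwarz (pointwise on each edge) gives
\[
\sum_{e \in T_1} \frac{f(e)^2}{w_e} \leq \frac{1}{k}\sum_{i=1}^k \sum_{e \in T_1} \frac{f_1^{(i)}(e)^2}{w_e} = \frac{1}{k}\sum_{i=1}^k \effres_{T_1}(a, u_i) \leq d,
\]
where I used that the energy of the unique unit flow between two nodes of a tree equals the effective resistance between them, and that $T_1$ has effective resistance diameter at most $d$. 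The symmetric calculation on $T_2$ yields another $d$, so summing the three contributions gives $\effres_G(a,b) \leq 2d + \ell/k$ as claimed.

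The main conceptual obstacle is that a naive routing sending a full unit of flow through a single concentrated path would contribute $\ell$ rather than $\ell/k$; it is precisely the averaging over $k$ routes that, together with the $1/k$ factor extracted by Cauchy--Schwarz, reproduces the parallel-resistance savings of $k$ vertex-disjoint paths. Once the averaging is in place the rest of the computation is mechanical, and the edge-disjointness of the paths $P_i$ and the trees $T_1, T_2$ lets me sum the three energy contributions without cross-terms.
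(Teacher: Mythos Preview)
Your proof is correct and follows essentially the same approach as the paper: construct the averaged flow $f = \frac{1}{k}\sum_i g^{(i)}$, use convexity of $x^2$ (your pointwise Cauchy--Schwarz) to bound the tree energy by the average of the individual tree-path energies $\effres_{T_j}(\cdot,\cdot) \leq d$, and use edge-disjointness of the $P_i$ to get $\ell/k$ on the paths. The paper phrases the tree bound as decomposing the restricted flow into a convex combination of path flows and invoking convexity, but this is the same inequality you apply.
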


\begin{proof}
For each of the $k$ edge-disjoint path $P_i \subseteq \R^E$ let $f_i$ denote a flow that send unit from the path's start in $T_1$, denoted $a_i$, to the path's end in $T_2$, denoted $b_i$. Further, for all $i \in [k]$ let $g_i$ denote the unique unit flow from $a$ to $a_i$ using only edges of $T_1$ and let $h_i$ denote the unique unit flow from $b_i$ to $b$ using only edges of $T_2$. Note, that for all $i \in [k]$ we have that $r_i \defeq f_i + g_i + h_i$ is a unit $a$ to $b$ flow in $G$ and consequently, $f_* \defeq \frac{1}{k} \sum_{i \in [k]} r_i$, is a unit $a$ to $b$ flow in $G$. 

Now, $f_*$ restricted to $T_1$ is the unique flow $f_1$ on $T_1$ that sends one unit from $a$ to the uniform distribution over the $a_i$. Since $T_1$ has effective resistance diameter at most $d$, by the flow characterization of effective resistance (\Cref{lem:treefact}) we can decompose this flow into a distribution over paths of effective resistance length at most $d$, i.e. $f_1 = \sum_{i} \alpha_i t_i$ where each $\alpha_i \geq 0$, $\sum_{i} \alpha_i = 1$, and each $t_i$ is a unit flow corresponding to a path $P$ of effective resistance length at most $d$. Therefore, by convexity of $x^2$ we have
\begin{align*}
\sum_{e \in T_1}  \frac{1}{w_e} [f_1]_e^2 
&= \sum_{e \in T_1} \frac{1}{w_e} \left[\sum_{i} \alpha_i [t_i]_e  \right]^2
\leq 
\sum_{e \in T_1} \frac{1}{w_e} \sum_{i} \alpha_i [t_i]_e^2 \\
&= \sum_{i}  \alpha_i \sum_{e \in T_1} \frac{1}{w_e}  [t_i]_e^2 
 \leq \sum_{i} \alpha d = d\,.
\end{align*}
By symmetric reasoning, $f_*$ restricted to $T_2$, denoted $f_2$ has $\sum_{e \in T_2} [f_2]_e^2 / w_e \leq d$. 

Since $T_1$, $T_2$, and the $P_i$ are edge disjoint, are the only edges with non-zero flow, and have effective resistance length at most $\ell$ we have
\begin{align*}
\sum_{e \in E}  \frac{1}{w_e}  [f_*]_e^2
&=  \sum_{e \in T_1} \frac{1}{w_e}  [f_1]_e^2
+  \sum_{e \in T_2} \frac{1}{w_e}  [f_2]_e^2
+ \sum_{i \in [k]} \sum_{e \in P_i} \frac{1}{w_e}  [f_*]_e^2 \\
&\leq 2d + \sum_{i \in [k]} \sum_{e \in P_i} \frac{1}{w_e} \cdot \frac{1}{k^2} 
\leq 2d + \ell/k ~.
\end{align*}
The result follows as $\effres_G(a,b) \leq \sum_{e \in E}  \frac{1}{w_e}  [f_*]_e^2$  by the flow characterization of effective resistances, \Cref{effres_props}.
\end{proof}

\end{document}